\DeclareMathOperator{\EX}{\mathbb{E}}
\DeclareMathOperator{\E}{\mathbb{E}}
\newcommand{\ALG}{\ensuremath{\mathrm{ALG}}}
\newcommand{\OPT}{\ensuremath{\mathrm{OPT}}}
\newcommand{\bestVC}{\textsc{BestVC}\xspace}
\newcommand{\threshold}{\textsc{Threshold}\xspace}
\newcommand{\intervalcolor}[5]{
  \draw[#5] (#2, #4) node[#5,anchor=east]{#1} -- (#3, #4);
  \draw[#5] (#2, #4-0.1) -- (#2, #4+0.1);
  \draw[#5] (#3, #4-0.1) -- (#3, #4+0.1);
}
\newcommand{\interval}[4]{
  \intervalcolor{#1}{#2}{#3}{#4}{black}
}
\tikzset{vertex/.style={circle, draw, fill,inner sep=0pt, minimum width=4pt}}
\title{Orienting (hyper)graphs under explorable stochastic uncertainty}
\titlerunning{Orienting (hyper)graphs under explorable stochastic uncertainty} 
\author{Evripidis Bampis}{Sorbonne Université, CNRS, LIP6, France \and \url{https://www.lip6.fr/Evripidis.Bampis/} }{evripidis.bampis@lip6.fr}{https://orcid.org/0000-0002-4498-3040}{Supported by the grant ANR-19-CE48-0016 from the French National Research Agency (ANR)}
\author{Christoph Dürr}{Sorbonne Université, CNRS, LIP6, France \and \url{https://www.lip6.fr/Christoph.Durr/} }{christoph.durr@lip6.fr}{https://orcid.org/0000-0001-8103-5333}{Supported by the grant ANR-19-CE48-0016 from the French National Research Agency (ANR)}
\author{Thomas Erlebach}{School of Informatics, University of Leicester, United Kingdom \and \url{https://www.cs.le.ac.uk/people/te17/} }{te17@leicester.ac.uk}{https://orcid.org/0000-0002-4470-5868}{Supported by EPSRC grant EP/S033483/1.}
\author{Murilo Santos de Lima}{School of Informatics, University of Leicester, United Kingdom \and \url{https://www.ime.usp.br/~mslima/} }{mslima@ic.unicamp.br}{https://orcid.org/0000-0002-2297-811X}{Funded by EPSRC grant EP/S033483/1.}
\author{Nicole Megow}{Department of Mathematics and Computer Science, University of Bremen, Germany \and \url{https://www.uni-bremen.de/en/cslog/nmegow} }{nicole.megow@uni-bremen.de}{https://orcid.org/0000-0002-3531-7644}{Supported by the German Science Foundation (DFG) under contract ME~3825/1.}
\author{Jens Schlöter}{Department of Mathematics and Computer Science, University of Bremen, Germany \and \url{https://www.uni-bremen.de/en/cslog/team/jens-schloeter} }{jschloet@uni-bremen.de}{https://orcid.org/0000-0003-0555-4806}{Funded by the German Science Foundation (DFG) under contract ME~3825/1.}
\authorrunning{E. Bampis, C. Dürr, T. Erlebach, M.\,S. de Lima, N. Megow, and J. Schlöter} 
\keywords{explorable uncertainty, queries, stochastic optimization, graph orientation, selection problems} 
\begin{document}

\maketitle

\begin{abstract}
Given a hypergraph with uncertain node weights
following known probability distributions, we study
the problem of querying as few nodes as possible until
the identity of a node with minimum weight can be determined
for each hyperedge. Querying a node has a cost and reveals
the precise weight of the node, drawn from the given probability
distribution. Using competitive analysis, we compare
the expected query cost of an algorithm with the
expected cost of an optimal query set for the
given instance.
For the general case, we give a polynomial-time
$f(\alpha)$-competitive algorithm, where
$f(\alpha)\in [1.618+\epsilon,2]$ depends on
the approximation ratio $\alpha$ for an underlying
vertex cover problem.
We also
show that no algorithm using a similar approach can be better
than $1.5$-competitive.
Furthermore, we give polynomial-time $4/3$-competitive
algorithms for bipartite graphs with arbitrary query costs
and for hypergraphs with a single
hyperedge and uniform query costs, with matching lower bounds.
\end{abstract}

\section{Introduction}
\label{sec:intro}
The research area of explorable uncertainty is concerned with scenarios where
parts of the input are initially uncertain, but the precise weight (or value)
of an input item can be obtained via a \emph{query}.
For example, an uncertain weight may be represented as an interval
that is guaranteed to contain the precise weight, but only a query
can reveal the precise weight.
Adaptive algorithms make queries one by one until they have gathered sufficient
information to solve the given problem.
The goal is to make as few queries as possible.
In most of the previous work on explorable uncertainty, an adversarial model
has been studied where an adversary
determines the precise weights of the
uncertain elements in such a way that the performance of the algorithm,
compared to the optimal query set, is as bad as possible. While
this model provides worst-case guarantees that hold for every possible
instance, it is also very pessimistic because the adversary is free
to choose the precise weights arbitrarily. In realistic scenarios,
one may often have some information about where in the given
interval the precise weight of an uncertain element is likely to lie.
This information can be represented as a probability distribution
and exploited in order to achieve better performance guarantees.

In this paper, we study the following problem under stochastic
uncertainty: Given a family of (not necessarily disjoint) subsets
of a set of uncertain elements, determine the element with minimum
precise weight in each set, using queries of minimum total cost.
Note that we do not necessarily need to obtain the precise minimum weight.
We phrase the problem in the language of hypergraphs, where
each uncertain element corresponds to a node and each set
corresponds to a hyperedge.
We call this the \emph{hypergraph orientation} problem, as we can think of orienting each hyperedge towards its minimum-weight vertex.
Each node $v\in V$ of a hypergraph $H=(V,E)$
is associated with a known continuous probability distribution\footnote{%
We assume the distribution is given in such a way that ${\mathbb P}[w_v \in (a,b)]$ can be computed in {polynomial} time
for every $v\in V, a,b\in\mathbb R$. For all our algorithms
it suffices to be given a probability matrix: rows correspond to vertices $v$,
columns to elementary intervals $(t_i,t_{i+1})$, and entries to ${\mathbb
P}[w_v \in (t_i, t_{i+1})]$, where $t_1,\ldots,t_{2|V|}$ represent the
sorted elements of~$\{\ell_v, r_v | v \in V\}$.} $d_v$ over an interval $I_v=(\ell_v,r_v)$
and has query cost~$c_v$.
The precise weight of $v$ is drawn independently from~$d_v$ and
denoted by~$w_v$.
We assume that $I_v$ is the minimal interval
that contains the support of $d_v$, i.e., $\ell_v$ is the largest value
satisfying ${\mathbb P}[w_v\le \ell_v]=0$ and $r_v$ is the smallest value
satisfying ${\mathbb P}[w_v\ge r_v]=0$.
For~$S\subseteq V$, we define $c(S)=\sum_{v\in S} c_v$.
An algorithm
can sequentially make queries to
vertices
to learn their weights,
until it has enough information to identify the minimum-weight vertex of each
hyperedge.
A query of $v$ reveals its precise
{weight} $w_v$, which is drawn independently from~$d_v$.
If all vertices have the same query cost, we say that
the query costs are uniform and assume w.l.o.g.\ that
$c_v=1$ for all $v\in V$.
Otherwise, we speak of arbitrary query costs.
The objective of an algorithm is to minimize the expected cost of the queries
it makes.

We also consider the special case where we are given a graph
$G=(V,E)$ instead of a hypergraph $H=(V,E)$, called the
\emph{graph orientation} problem. 

{As an example consider a multi-national medical company that needs a certain product (say, chemical ingredient, medicine or vaccine) for its operation in each country. The particular products that are available in each country are different due to different approval mechanisms. The task is to find the best product for each country, that is, the best among the approved ones. The quality itself is independent of the country and can be determined by extensive tests in a lab (queries). The set of products available in one country corresponds to a hyperedge, and the problem of identifying the best product in each country is the hypergraph orientation~problem.}

\subparagraph{Our contribution.} 
Our main result (Section~\ref{sec:threshold}) is an algorithm for the graph orientation problem with competitive ratio $\frac{1}{2}(\alpha + \sqrt{8 - \alpha(4 - \alpha)})$, assuming we have an $\alpha$-approximation for the vertex cover problem (which we need to solve on an induced subgraph of the given graph). This factor is always between $\phi \approx 1.618$ (for $\alpha = 1$), and $2$ (for $\alpha = 2$).
{We show that, for the special cases of directing $\mathcal{O}(\log \ |V|)$ hyperedges and sorting $\mathcal{O}(1)$ sets, the algorithm can be applied with $\alpha = 1$ in polynomial running time.}
The algorithm has a preprocessing phase in two steps.
First, we compute the probability that a vertex is \emph{mandatory}, i.e., that it is part of any feasible solution, and we query all vertices with probability over a certain threshold.
The second step uses a LP relaxation of the vertex cover problem to select some further vertices to query.
Next, we compute an $\alpha$-approximation of the vertex cover on a subgraph induced by the preprocessing, and we query the vertices in the given solution.
The algorithm finishes with a postprocessing that only queries mandatory intervals.
For the analysis, we show two main facts: (1) the expected optimal solution can be bounded by the expected optimal solutions for the subproblems induced by a partition of the vertices; (2) for the subproblem on which we compute a vertex cover, the expected optimal solution can be bounded by applying the K\H{o}nig-Egerváry theorem~\cite{SchrijverBook} on a particular bipartite graph, in case of uniform costs.
When given arbitrary query costs, 
we
utilize a technique of \emph{splitting} the vertices in order to obtain a collection of disjoint stars with obvious vertex covers that imply a bound on the expected~optimum.

We further show how to generalize the algorithm to hypergraphs.
Unfortunately in this case it is \#P-hard to compute the probability of a vertex being mandatory, but we can approximate it by sampling.
This yields a randomized algorithm that attains, with high probability, a competitive ratio arbitrarily close to the expression given above for graphs.
Here, we need to solve the vertex cover problem on an induced subgraph of an auxiliary graph that contains, for
each hyperedge of the given hypergraph, all edges between the node with the leftmost interval and the nodes whose intervals intersect that interval.

We also consider a natural 
alternative algorithm (Section~\ref{sec:VC-alg}) that starts with a particular vertex cover solution followed by adaptively querying remaining vertices. We prove a competitive ratio of $4/3$ on special cases, namely, for bipartite graphs with arbitrary cost and for a single hyperedge with uniform costs, and complement this by matching lower bounds.

\subparagraph{Related work.}
Graph orientation problems are fundamental in the area of graph theory and combinatorial optimization. In general, graph orientation refers to the task of giving an orientation to edges in an undirected graph such that some given requirement is met. Different types of requirements have been investigated. While Robbins~\cite{robbins1939} initiated research on connectivity and reachability requirements already in the 1930s, most work is concerned with degree-constraints; cf.~overviews given by Schrijver~\cite[Chap.~61]{SchrijverBook} and Frank~\cite[Chap.~9]{FrankBook}.

Our requirement, orienting each edge towards its node with minimum weight, becomes 
challenging when there is 
uncertainty in the node weights. While there are different ways of modeling uncertainty in the input data, {the model of explorable uncertainty was introduced by Kahan~\cite{kahan91}.}
He considers the task of identifying the minimum element in a set of 
uncertainty intervals, which is equivalent to  orienting a single hyperedge. Unlike in our model, 
no distributional information is known, and an adversary can choose weights in a worst-case manner from the intervals. Kahan~\cite{kahan91} shows that querying the intervals in order of non-decreasing left endpoints requires at most one more query than the optimal query set, thus giving a competitive ratio of~$2$. {Further, he shows that this is best possible in the adversarial model}.

Subsequent work 
addresses finding the~$k$-th smallest value in a set of uncertainty intervals~\cite{gupta16queryselection,feder03medianqueries}, 
caching problems 
\cite{olston2000queries}, computing a function value~\cite{khanna01queries}, 
and classical combinatorial optimization problems, such as shortest path~\cite{feder07pathsqueires},  knapsack~\cite{goerigk15knapsackqueries}, scheduling problems~\cite{DurrEMM20,arantes18schedulingqueries,albersE2020}, minimum spanning tree and matroids~\cite{erlebach08steiner_uncertainty,erlebach14mstverification,megow17mst,focke20mstexp,MerinoS19}.
Recent work on sorting elements of a single or multiple non-disjoint sets is particularly relevant as it is a special case of the graph orientation problem~{\cite{ErlebachHLMS-arxiv2020,halldorssonL21sortingfull}}. For sorting a single set in the adversarial explorable uncertainty model, there is a $2$-competitive algorithm and it is  best possible, even for arbitrary query costs~\cite{halldorssonL21sortingfull}. The competitive ratio can be improved to~$1.5$ for uniform query cost by using randomization~\cite{halldorssonL21sortingfull}. Algorithms with limited adaptivity have been proposed in~\cite{ErlebachHL2021}.

Although the adversarial model is arguably pessimistic and real-world applications often come with some distributional information, surprisingly little is known on stochastic variants of explorable uncertainty.
The only previous work we are aware of is by Chaplick et al.~\cite{chaplick20stochasticLATIN}, in which they studied stochastic
uncertainty for the problem of sorting a given set of uncertain elements,
and for the problem of determining the minimum element in a given set
of uncertain elements. They showed that the optimal decision tree
(i.e., an algorithm that minimizes the expected query cost among
all algorithms) for a given instance of the sorting problem can be
computed in polynomial time. For the minimum problem, they
leave open whether an optimal decision tree can be determined
in polynomial time, but give a $1.5$-competitive algorithm
and an algorithm that guarantees a bound slightly smaller than
$1.5$ on the expectation of the ratio between the query cost
of the algorithm and the optimal query cost.
The problem of scheduling with testing~\cite{LeviMS19} is also in the spirit of stochastic explorable uncertainty but less~relevant~here.

{There are many other} stochastic problems that take exploration cost into account. Some of the earliest work has studied multi-armed bandits~\cite{Thompson33,BubeckC12,GittinsGW11-book} and Weitzman's Pandora's box problem \cite{Weitzman1979}, which are prime examples for analyzing the tradeoff between the cost for exploration and the benefit from exploiting gained information. More recently, query-variants of combinatorial optimization problems received 
some attention, 
in general~\cite{singla2018price,gupta2019markovian}, and for specific problems such as stochastic knapsack~\cite{DeanGV08,Ma18}, orienteering~\cite{GuptaKNR15,BansalN15}, matching~\cite{ChenIKMR09,BansalGLMNR12,BlumDHPSS20,BehnezhadFHR19,AssadiKL19}, and probing problems \cite{AdamczykSW16,GuptaN13,GuptaNS16}. Typically such work employs a \emph{query-commit} model, 
{meaning} 
that queried elements must be part of the solution, {or solution elements are required to be queried}. These are quite strong requirements that lead to a different flavor of the cost-benefit tradeoff. 

Research involving queries to identify particular graph structures or elements, or queries to verify certain properties, can be found in various flavors. A well-studied problem class is \emph{property testing}~\cite{Goldreich2017}, and there are many more, see e.g.,~\cite{Mazzawi2010,Beame2018,Rubinstein2018,Chen2020,Assadi2021,Nisan2021}. Without describing such problems in detail, we emphasize a fundamental difference to our work. Typically, in these query models, the bounds on the number of queries made by an algorithm are \emph{absolute numbers}, i.e., given as a function of the input size, but independent of the input graph itself 
and without any comparison to the minimum number of queries needed for the given graph.

\section{Definitions and preliminary results}
\label{sec:preliminaries}

The hypergraph orientation problem and the graph orientation
problem have already been defined in Section~\ref{sec:intro}.
In this section we first give additional definitions and
discuss how we measure the performance of an algorithm.
Then we introduce the concept of mandatory vertices and show how
the probability for a vertex to be mandatory can be computed or
at least approximated efficiently.
We also give a lower bound showing that no
algorithm can achieve competitive ratio better than~$\frac{4}{3}$.
Next, based on the concept of witness sets,
we define the vertex cover instance associated with an instance
of our problem and define a class of vertex cover-based algorithms,
which includes all the algorithms we propose in this paper.
Finally, we characterize the optimal query set for each realization
and give lower bounds on the expected optimal query cost, which we will
use later in the analysis of our algorithms.

\subparagraph{Definitions.}
To measure the performance of an algorithm, we compare the expected
cost of the queries it makes to
the expected optimal query cost. Formally, given a realization of the values, we call \emph{feasible query set}
a set of vertices to be queried that permits one to identify the minimum-weight vertex
in every hyperedge.
Note that a query set is feasible if, for each hyperedge, it either queries the
node $v$ with minimum weight $w_v$ and all other nodes whose intervals contains~$w_v$,
or it does not query the node $v$ with minimum weight but queries all nodes whose interval
overlaps $I_v$, and in addition the precise weights of all those intervals lie to the right
of~$I_v$.
An \emph{optimal query set} is a feasible query set
of minimum query cost.  We denote by $\EX[\OPT]$ the
expected query cost of an optimal query set. Similarly, we denote by
$\EX[A]$ the expected query cost of the query set queried by an algorithm~$A$. The
supremum of $\EX[A] / \EX[\OPT]$, over all instances of
the problem, is called the \emph{competitive ratio} of~$A$.
{Alternatively, one could compare $\EX[A]$ against the cost $\EX[A^*]$ of an optimal adaptive algorithm $A^*$.
However, in explorable uncertainty, it is standard to compare against the optimal query set, and, since $\EX[OPT]$ is a lower bound on $\EX[A^*]$, all our algorithmic results translate to this alternative setting.}

Let $F\in E$ be a hyperedge consisting of vertices $v_1,\ldots,v_k$,
indexed in order of non-decreasing left endpoints of the intervals, i.e., $\ell_{v_1}
\leq \ldots \leq \ell_{v_k}$.  We call $v_1$ the \emph{leftmost} vertex of $F$.
We can assume that $I_{v_1}\cap I_{v_i} \neq \emptyset$ for all $2\le i\le k$,
because otherwise the vertex $v_i$ could be removed from the hyperedge $F$.
For the special case of graphs, this means that we
assume $I_v \cap I_u \neq \emptyset$ for each $\{u,v\} \in E$, since
otherwise we could simply remove the edge.

\subparagraph{Mandatory vertices, probability to be mandatory.}
A vertex $v$ is called \emph{mandatory} if it belongs to every feasible query set for the
given realization.  For example, if for some edge $\{u,v\}$, vertex $u$ has
already been queried and its value $w_u$ belongs to the interval $I_v$, then $v$ is known to be mandatory.
The following lemma was shown in~\cite{ErlebachHLMS-arxiv2020} and fully characterizes mandatory vertices.

\begin{restatable}[]{lemma}{lemmamandatory}
	\label{lemma:mandatory}
	A vertex $v \in V$ is mandatory if and only if there is a hyperedge $F \in E$ with $v \in F$ such that either $(i)$ $v$ is a minimum-weight vertex of $F$ and $w_u \in I_v$ for some $u \in F \setminus \{v\}$, or $(ii)$ $v$ is not a minimum-weight vertex of $F$ and $w_u \in I_v$ for the minimum-weight vertex $u$ of $F$.
\end{restatable}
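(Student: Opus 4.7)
I would prove the two implications of the biconditional separately, handling the ``if'' direction first because it is the simpler one.

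For the ``if'' direction, assume some hyperedge $F \ni v$ satisfies (i) or (ii), and show that any feasible query set must contain $v$. In case (i), $v$ is the minimum-weight vertex of $F$ and some $u \in F \setminus \{v\}$ has $w_u \in I_v$. If $v$ is not queried, then all an algorithm knows about $w_v$ is that $w_v \in I_v$; since the observed $w_u$ also lies in $I_v$, the algorithm cannot distinguish the subcase $w_v < w_u$ (so $v$ is the minimum of $F$) from the subcase $w_v > w_u$ (so $v$ is not), and hence cannot orient $F$. Case (ii) is symmetric: with $u$ the true minimum of $F$ and $w_u \in I_v$, not querying $v$ leaves it unknown whether $w_v$ is smaller or larger than the revealed value $w_u$, so $F$ cannot be oriented. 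In both cases $v$ must be queried.

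For the ``only if'' direction I would argue the contrapositive: if for every hyperedge $F \ni v$ neither (i) nor (ii) holds, then $Q := V \setminus \{v\}$ is a feasible query set, which witnesses that $v$ is not mandatory. Consider any such $F$, and distinguish whether $v$ is the minimum-weight vertex of $F$ or not. If $v$ is the minimum, the negation of~(i) says $w_u \notin I_v$ for every $u \in F \setminus \{v\}$; combined with $w_u \ge w_v > \ell_v$, this forces $w_u \ge r_v$ for all $u \in F \setminus \{v\}$. After querying $F \setminus \{v\} \subseteq Q$, the algorithm sees all these weights are at least $r_v$ while $I_v = (\ell_v, r_v)$ lies strictly below $r_v$, so $v$ can be identified as the minimum of $F$ without being queried. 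If $v$ is not the minimum, let $u$ be the actual minimum of $F$; the negation of (ii) gives $w_u \notin I_v$, and from $w_u \le w_v < r_v$ we conclude $w_u \le \ell_v$. Querying $u \in Q$ then reveals $w_u \le \ell_v < w_v$, so $v$ is ruled out as minimum, and querying the remaining vertices of $F \setminus \{v\}$ determines the true minimum. Thus $Q$ is feasible.

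The main obstacle is the careful handling of the endpoints: one must exploit the assumption that $I_v = (\ell_v, r_v)$ is the minimal open interval containing the support of $d_v$, so that $w_v \in (\ell_v, r_v)$ strictly with probability $1$, in order to convert the ``$w_u \notin I_v$'' hypotheses into the clean inequalities $w_u \ge r_v$ or $w_u \le \ell_v$ used above. Apart from that subtlety, the argument is a routine case analysis, and since the statement is known from~\cite{ErlebachHLMS-arxiv2020}, I would expect the write-up to be short.
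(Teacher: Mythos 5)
Your proposal is correct and follows essentially the same case analysis as the paper's proof: you show that if (i) or (ii) holds for some hyperedge then $F$ cannot be oriented without querying $v$, and conversely that if neither holds for any $F\ni v$ then $V\setminus\{v\}$ is feasible. You spell out the endpoint bookkeeping ($w_u\notin I_v$ together with $w_u\ge w_v$ forcing $w_u\ge r_v$, and similarly $w_u\le\ell_v$ in the other case) slightly more explicitly than the paper, but the argument is the same.
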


For a hyperedge $F=\{v_1,\ldots,v_k\}$, where the vertices are again
indexed by non-decreasing left endpoints,
it was shown in~\cite[Section~3]{chaplick20stochasticLATIN} that, if $I_{v_i} \subseteq I_{v_1}$ for some $2 \le i \le k$, then $v_1$ is mandatory for every realization.
Thus, every algorithm can iteratively query all such elements in a preprocessing step, without worsening the competitive ratio. 
In the remainder of the paper, we assume w.l.o.g.~that the instance under consideration is already preprocessed.

Similarly, if a hyperedge contains vertices $u,v$ such that $v$ has not been queried yet and
is the leftmost vertex, while a query of $u$ has revealed that $w_u \in I_v$, then it follows from
Lemma~\ref{lemma:mandatory} that $v$ is mandatory for every realization of the unqueried vertices.
The final stage of our algorithms will consist of querying mandatory vertices that are identified
by this criterion, until the instance is solved.

We denote by $p_v$ the probability that a vertex~$v$ is mandatory.
Querying vertices $v \in V$ that have a high probability $p_v$
is a key element of our algorithms. 
For graphs, $p_v$ is easy to compute as, by Lemma~\ref{lemma:mandatory}, $v$ is mandatory
iff $w_u \in I_v$ for some neighbor vertex $u$. Hence,
$p_v=1 - \prod_{u:\{u,v\}\in E} {\mathbb P}[w_u \not\in I_v]$.
For hypergraphs, however, we
can show that the computation of $p_v$
is \#P-hard, even if all hyperedges have size~$3$ (see Appendix~\ref{app:sharpP}).
Luckily it is not difficult to get a good estimate of the probabilities to be mandatory for hypergraphs using sampling.

\begin{restatable}[]{lemma}{lemsampling}
\label{lem:sampling}
There is a polynomial-time randomized algorithm that,
	given a hypergraph $H=(V,E)$, a vertex $v\in V$, and parameters $\epsilon, \delta \in (0,1)$, produces a value $y$ such that $|y-p_v| \geq \epsilon$ with probability at most $\delta$. Its time complexity is $O(|V| \ln(1/\delta)/ \epsilon^2)$.
\end{restatable}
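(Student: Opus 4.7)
The plan is to use a standard Monte Carlo estimator combined with Hoeffding's inequality. The idea is to draw $N$ independent realizations of all the weights, count in how many of them $v$ turns out to be mandatory, and output the empirical frequency. Concretely, I would set $N = \lceil \ln(2/\delta) / (2\epsilon^2) \rceil$ and, for each $i \in \{1,\dots,N\}$, draw an independent realization $(w_u^{(i)})_{u \in V}$ by sampling $w_u^{(i)}$ from $d_u$ for every $u \in V$, then set $X_i = 1$ if $v$ is mandatory for this realization and $X_i = 0$ otherwise. Finally, output $y = \frac{1}{N}\sum_{i=1}^N X_i$.

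The first key step is to check in polynomial time, given a realization, whether $v$ is mandatory. This is where Lemma~\ref{lemma:mandatory} is used: we only have to iterate through all hyperedges $F$ that contain $v$, determine the minimum-weight vertex $u^*$ of $F$, and test whether condition $(i)$ or $(ii)$ of the lemma applies (i.e., whether the minimum-weight vertex's realized value lies inside $I_v$, or whether $v$ itself is the minimum and some other realized value lies in $I_v$). This takes time linear in the total representation size of the hyperedges incident to $v$, which is polynomial.

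The second step is the probabilistic guarantee. By construction, $X_1,\dots,X_N$ are i.i.d.\ Bernoulli variables with $\EX[X_i] = p_v$ (this is just the definition of $p_v$ as the probability that $v$ is mandatory under an independent draw of all weights). Since $X_i \in \{0,1\}$, Hoeffding's inequality gives
\[
\Pr\bigl[\,|y - p_v| \geq \epsilon\,\bigr] \;\leq\; 2\exp(-2N\epsilon^2) \;\leq\; \delta,
\]
by our choice of $N$. Sampling each $w_u$ takes $O(1)$ time (using the probability matrix from the footnote in Section~\ref{sec:intro} to invert the distribution), so one realization is produced in $O(|V|)$ time, and evaluating $X_i$ adds only polynomially many extra operations; the overall cost is $O(|V|\ln(1/\delta)/\epsilon^2)$ up to the polynomial overhead hidden by the mandatoriness test.

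The only mildly non-trivial point is the polynomial-time mandatoriness check per sample; everything else is a textbook concentration argument. There is no subtlety about correlations or conditional probabilities, because $p_v$ is itself defined as a probability under an independent draw from the product distribution $\prod_u d_u$, which is exactly what the sampling procedure simulates.
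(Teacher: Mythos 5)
Your proposal is correct and is essentially the same as the paper's proof: the same Monte Carlo estimator with $N = \lceil \ln(2/\delta)/(2\epsilon^2) \rceil$ samples, the same application of Hoeffding's inequality, and the same use of Lemma~\ref{lemma:mandatory} to decide mandatoriness per realization. You also correctly flag the small wrinkle that the stated $O(|V|\ln(1/\delta)/\epsilon^2)$ bound really counts sampling work only and hides the polynomial per-realization cost of the mandatoriness check, a point the paper itself glosses over by separately remarking that verifying mandatoriness is $O(|V|^2)$.
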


\subparagraph{General lower bound.}
We have the following lower bound.

\begin{theorem}
\label{thm:generalLB43}
Every algorithm for the graph orientation problem has competitive ratio at least $\frac{4}{3}$, even for uniform query costs and even if no restriction on the {running time} of the algorithm is imposed.
\end{theorem}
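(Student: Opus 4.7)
\medskip

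The plan is to exhibit a concrete instance with uniform query costs on which the ratio $\EX[A]/\EX[\OPT]$ is at least $4/3$ for every (even unbounded) algorithm $A$. I would begin by observing why a single edge cannot suffice: for one edge $\{u,v\}$ with independent weights, $\EX[\OPT]=1+pq$ where $p=\mathbb P[w_u\in I_v]$ and $q=\mathbb P[w_v\in I_u]$ (since $\OPT=2$ iff both values lie in the overlap), while any algorithm must commit to a first query without any realization information, giving $\EX[A]\geq 1+\min(p,q)$. Setting $p=q$ for symmetry, the ratio $(1+p)/(1+p^2)$ is maximized at $p=\sqrt{2}-1$ with value $(1+\sqrt{2})/2\approx 1.207<4/3$. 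Hence the lower-bound instance must involve at least two edges whose interaction prevents an algorithm from adaptively matching $\OPT$'s hindsight advantage on every realization.

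Next, I would construct a small graph (a few vertices, two or three edges in an appropriate configuration) with continuous distributions chosen so that for a positive-probability event both (a) some specific vertex is the minimum of its edges and (b) this cannot be verified cheaply without knowing the realization; meanwhile, $\OPT$ knowing the realization can certify minimality by querying only a vertex whose revealed weight lies outside a neighbor's interval (thus avoiding the ``centre'' query). I would partition the weight space into realization regions (determined by which sub-interval each $w_v$ lies in), and for each region use Lemma~\ref{lemma:mandatory} to identify the mandatory vertices together with a witness argument to compute the smallest feasible query set, then sum over regions with their probabilities to obtain $\EX[\OPT]$.

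For the algorithmic lower bound, I would analyse every possible first-query choice of $A$. Since the weights are independent, the distribution seen before the first query is fixed, so any deterministic $A$ picks some vertex $v_0$; after learning $w_{v_0}$ the problem reduces to several sub-instances, each of which I can bound below. I would argue by symmetry of the constructed instance (or by Yao's principle, since comparing expected costs with a fixed input distribution is equivalent to the worst-case deterministic bound here) that no choice of $v_0$ nor subsequent adaptive strategy yields expected cost below $\tfrac{4}{3}\,\EX[\OPT]$. The bound is then extended to randomised algorithms by averaging.

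The main obstacle is identifying the precise instance and distributions that attain exactly the factor $4/3$. Natural candidates (stars $K_{1,n}$, paths of length two, disjoint matchings of ``hard'' single-edge instances) all yield strictly smaller ratios: in a star with center $v$, the strategy ``query $v$ first'' can essentially match $\OPT$ because querying $v$ resolves all incident edges simultaneously; in disjoint matchings the single-edge bound $(1+\sqrt{2})/2$ carries over by linearity. A suitable instance must therefore be balanced so that no first-query choice is consistently favourable, while $\OPT$ still enjoys a constant-fraction saving on enough realisations—the construction reminiscent of the $4/3$-tight cases arising in the algorithmic companion results for bipartite graphs and single hyperedges.
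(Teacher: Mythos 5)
Your high-level strategy (rule out a single edge, then construct a small multi-edge instance and analyse all first queries) matches the paper's, and your single-edge analysis giving $(1+\sqrt{2})/2\approx 1.207$ is correct. But there is a genuine gap: you never actually produce the instance or carry out the calculation, and in the step where you try to narrow down the search, you rule out exactly the family that works. You claim that for a star with center $v$, ``the strategy `query $v$ first' can essentially match $\OPT$ because querying $v$ resolves all incident edges simultaneously,'' and that paths of length two therefore give strictly smaller ratios. This is false: the paper's tight instance \emph{is} a path of length two, i.e.\ the star $K_{1,2}$ with center $x$ and leaves $y,z$, intervals $I_x=(0,2)$, $I_y=I_z=(1,3)$, uniform costs, $\mathbb P[w_x\in(1,2)]=\tfrac12$ and $\mathbb P[w_y\in(1,2)]=\mathbb P[w_z\in(1,2)]=\epsilon\to 0$.

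The reason querying the center first does not match $\OPT$ on such a star is that $\OPT$, with hindsight, can skip the center entirely on the event $\{w_x\in(1,2),\, w_y,w_z\in[2,3)\}$, which has probability close to $\tfrac12$; the center-first algorithm then wastes the query on $x$. Concretely, center-first has expected cost $1+\tfrac12\cdot2=2$, while $\EX[\OPT]=\tfrac12\cdot1+\tfrac12(1-\epsilon)^2\cdot2+\tfrac12\bigl(1-(1-\epsilon)^2\bigr)\cdot3\to\tfrac32$, giving the ratio $\tfrac43$; one then checks that the leaf-first decision trees also tend to expected cost $2$ as $\epsilon\to0$, so no adaptive choice of first query helps. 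Your proposal would need this concrete construction and the casework over decision trees to become a proof; as written it stops at the point where you (incorrectly) conclude that stars and paths of length two cannot work.
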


\begin{proof}
	\begin{figure}[tb]
		\centering
		\begin{minipage}{0.3\textwidth}
		\centering
		\begin{tikzpicture}[thick, scale=0.5]
			\draw (0, 0) node[vertex, label=west:$x$]{} -- (4, -1) node[vertex, label=east:$z$]{};
			\draw (0, 0) -- (4, 1) node[vertex, label=east:$y$]{};
		\end{tikzpicture}
		\end{minipage}
		\begin{minipage}{0.3\textwidth}
		\centering
		\begin{tikzpicture}[line width = 0.3mm]
			\interval{$x$}{0}{2}{1}
			\interval{$y$}{1}{3}{0.5}
			\interval{$z$}{1}{3}{0}
			
			\draw[dotted] (1, 1.4) -- (1, -0.4);
			\draw[dotted] (2, 1.4) -- (2, -0.4);
			
			\node at (0.5, 1.2) {$1/2$};
			\node at (1.5, 1.2) {$1/2$};
			\node at (1.5, 0.65) {$\epsilon$};
			\node at (2.5, 0.65) {$1 - \epsilon$};
			\node at (1.5, 0.15) {$\epsilon$};
			\node at (2.5, 0.15) {$1 - \epsilon$};
		\end{tikzpicture}
		\end{minipage}
		\begin{minipage}{0.38\textwidth}
			$$
			\begin{array}{lcl}
				p_x & = & 1-(1-\epsilon)^2\\
				p_y & = & 1/2\\
				p_z & = & 1/2
			\end{array}
			$$
		\end{minipage}
	
		\caption{Instance and mandatory probabilities used in the proof of Theorem~\ref{thm:generalLB43}.}
		\label{fig:generalLB43}
	\end{figure}
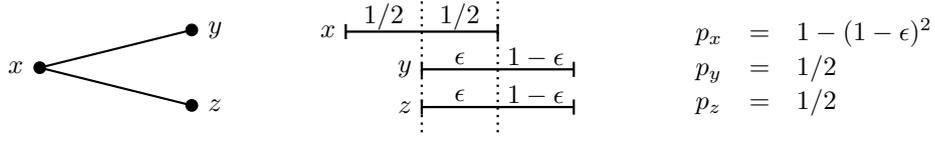%
Consider three vertices $x, y, z$, with $I_x = (0, 2)$ and $I_y = I_z = (1, 3)$, and uniform query costs $c_x = c_y = c_z = 1$.
The only edges are $\{x,y\}$ and $\{x,z\}$.
The probabilities are such that $\mathbb{P}[w_x \in (1, 2)] = \frac{1}{2}$ and $\mathbb{P}[w_y \in (1, 2)] = \mathbb{P}[w_z \in (1, 2)] = \epsilon$, for some $0 < \epsilon \ll \frac{1}{2}$; see Figure~\ref{fig:generalLB43}.
If $w_x\in(0,1]$, which happens with probability $\frac12$,
querying $x$ is enough. If $w_x\in(1,2)$ and
$w_y,w_z\in[2,3)$, which happens with probability
$\frac12 (1-\epsilon)^2$, querying $y$ and $z$ is enough.
Otherwise, all three vertices must be queried.
We have
\begin{displaymath}
 \EX[\OPT] = \frac{1}{2} \cdot 1 + \frac{1}{2} (1-\epsilon)^2 \cdot 2 + \frac{1}{2} \left( 1 - (1 - \epsilon)^2 \right) \cdot 3 = 2 - \frac{(1-\epsilon)^2}{2},
\end{displaymath}
which tends to $\frac{3}{2}$ as $\epsilon$ approaches $0$.
Since $y$ and $z$ are identical and we can assume that an algorithm
always queries first a vertex that it knows to be mandatory (if there
is one), we only have
three possible decision trees to consider:
\begin{enumerate}
 \item First query $x$; if $w_x \in (1, 2)$, then query $y$ and $z$. The expected query cost is $2$.
 \item First query $y$. If $w_y \in (1, 2)$, then query $x$, and query $z$ if $w_x \in (1, 2)$. If $w_y \in [2, 3)$, then query $z$, and query $x$ if $w_z \in (1, 2)$. The expected query cost is $1 + \frac{3}{2} \epsilon + (1 - \epsilon) (1 + \epsilon)$, which tends to $2$ as $\epsilon$ approaches $0$.
 \item First query $y$. Whatever happens, query $x$, then query $z$ if $w_x \in (1, 2)$. The expected query cost is $\frac{5}{2}$, so this is never better than the previous options.
\end{enumerate}
With either choice (even randomized), the competitive ratio tends to at least $\frac{4}{3}$ as $\epsilon\rightarrow 0$.
\end{proof}

This lower bound can be adapted for a single hyperedge $\{x, y, z\}$.
For arbitrary query costs, it works even for a single edge $\{x, y\}$, by taking $c_x = 1$ and $c_y = 2$.

\subparagraph{Witness sets, vertex cover instance, vertex cover-based algorithms.}
Another key concept of our algorithms is to exploit \emph{witness sets}~\cite{bruce05uncertainty,erlebach08steiner_uncertainty}.
A subset $W \subseteq V$ is a \emph{witness set} if $W \cap Q \not= \emptyset$ for all
feasible query sets $Q$.
The following lemma was shown in~\cite{kahan91}.

\begin{lemma}
	\label{lemma:witness_set}
	Let $F = \{v_1, \ldots, v_k\}$ be a hyperedge, and let $v_1$ be the leftmost vertex of $F$. Then $\{v_1,v_i\}$ is a witness set for each $2 \le i \le k$.
\end{lemma}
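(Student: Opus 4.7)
The plan is to prove the contrapositive: any query set $Q$ with $Q \cap \{v_1, v_i\} = \emptyset$ fails to be feasible for at least one realization, so every feasible query set must meet $\{v_1, v_i\}$. The standard device is to exhibit two realizations that coincide on every vertex in $Q$ yet disagree on the minimum-weight vertex of $F$; any decision rule acting only on the answers returned by $Q$ must then be wrong in at least one of the two realizations, so $Q$ cannot have been feasible for both.

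Concretely, I would fix a value $z \in I_{v_1} \cap I_{v_i}$, which exists because $v_1$ is leftmost in $F$ and preprocessing guarantees $I_{v_1} \cap I_{v_j} \neq \emptyset$ for all $j$. Pick $\epsilon > 0$ small enough that both $z - \epsilon$ and $z + \epsilon$ still lie in $I_{v_1} \cap I_{v_i}$, and define realizations $A$ and $B$ that agree on every vertex except $v_1$ and $v_i$: in $A$ set $w_{v_1} = z - \epsilon$ and $w_{v_i} = z + \epsilon$, and in $B$ swap these two values. For every $v_j \in F \setminus \{v_1, v_i\}$ use a common weight strictly larger than $z + \epsilon$, and for vertices outside $F$ any value in $I_u$. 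Since $Q$ queries neither $v_1$ nor $v_i$, the information returned by $Q$ is identical under $A$ and $B$. However, in $A$ the minimum-weight vertex of $F$ is $v_1$, whereas in $B$ it is $v_i$, so $Q$ cannot correctly identify the minimum of $F$ in both realizations, yielding the desired contradiction.

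The step I expect to be the main obstacle is ensuring that for each $v_j \in F \setminus \{v_1, v_i\}$ one can actually pick a weight in $I_{v_j}$ strictly above $z + \epsilon$. Here I would invoke the preprocessing assumption stated in the excerpt: any $v_j$ with $I_{v_j} \subseteq I_{v_1}$ would render $v_1$ mandatory for every realization and has therefore been removed. Since $\ell_{v_j} \ge \ell_{v_1}$ by the indexing convention, this non-containment forces $r_{v_j} > r_{v_1}$, so $I_{v_j}$ contains values strictly larger than $r_{v_1}$, and in particular larger than $z + \epsilon$. Thus both realizations are legal samples from the supports of the distributions, and the two-realization argument goes through, proving that $\{v_1, v_i\}$ is a witness set.
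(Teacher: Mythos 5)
The paper does not supply its own proof of this lemma---it is attributed to Kahan~\cite{kahan91}---so your argument must be judged on its own merits, and there is a genuine gap in the quantification. What the lemma has to deliver, given how witness sets are used in the competitive analysis, is: for \emph{every} realization $R$, every query set $Q$ that is feasible for $R$ intersects $\{v_1,v_i\}$. Equivalently, for every realization $R$, any $Q$ disjoint from $\{v_1,v_i\}$ is infeasible \emph{for $R$}. Your proof does not establish this. You construct two specific realizations $A$ and $B$, prescribing the weights of every vertex in $F\setminus\{v_1,v_i\}$ to be strictly above $z+\epsilon$ and prescribing arbitrary weights for vertices outside $F$; you then correctly conclude that $Q$ is infeasible for $A$ and for $B$. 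But you then jump to ``so every feasible query set must meet $\{v_1,v_i\}$'', which does not follow: a given realization $R$ need not agree with $A$ (or $B$) on $Q$, and showing infeasibility for the particular $A,B$ you built says nothing about feasibility for a different $R$. In other words, you have shown ``for every $Q$ disjoint from $\{v_1,v_i\}$ there \emph{exists} a realization where $Q$ fails'', whereas the lemma requires ``for every $Q$ disjoint from $\{v_1,v_i\}$ and \emph{every} realization, $Q$ fails''.

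The fix keeps the spirit of your two-realization device but anchors it to the given realization. Fix $R$ and $Q$ with $Q\cap\{v_1,v_i\}=\emptyset$, and construct a single $R'$ that agrees with $R$ on all vertices except possibly $v_1$ and $v_i$ (hence on $Q$) but whose minimum-weight vertex in $F$ differs from that of $R$. If $v_1$ is the minimum in $R$, set $w'_{v_i}$ close to $\ell_{v_i}$ and $w'_{v_1}$ close to $r_{v_1}$; since $I_{v_1}\cap I_{v_i}\neq\emptyset$ gives $\ell_{v_i}<r_{v_1}$, this makes $w'_{v_i}<w'_{v_1}$, so $v_1$ is no longer the minimum. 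If $v_1$ is not the minimum in $R$, set $w'_{v_1}$ close enough to $\ell_{v_1}$; since $\ell_{v_1}\le\ell_{v_j}<w_{v_j}$ for every $j\neq 1$, this makes $v_1$ the minimum. Either way $R'$ agrees with $R$ on $Q$ and has a different minimum, so $Q$ is infeasible for $R$. Note that your specific construction cannot be repaired simply by reusing $R$'s values for the other $v_j$: you would then need $w_{v_j}>z+\epsilon\in I_{v_i}$ for all $j$, but nothing forces $w_{v_j}^R$ to exceed $\ell_{v_i}$, which is why the case split above is needed.
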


The lemma implies that one can obtain a $2$-competitive algorithm in the adversarial
model for the hyperedge orientation problem:
For uniform query costs, it suffices to repeatedly query witness sets of size~$2$ until the
instance is solved, by a standard witness set argument~\cite{erlebach08steiner_uncertainty}.
For arbitrary query costs, this approach can be combined with the local ratio technique~\cite{baryehuda04}
to obtain the same competitive ratio (in a similar way as done in~\cite{halldorssonL21sortingfull} for
the sorting problem).
Our goal is to achieve better competitive ratios in the stochastic setting.
Motivated by Lemma~\ref{lemma:witness_set}, we can now define the \emph{vertex cover instance}.

\begin{definition}
	\label{def:vertex_cover_instance}
	Given a hypergraph $H = (V,E)$, the \emph{vertex cover instance} of $H$ is the graph $\bar{G}=(V,\bar{E})$ with
	$\{v,u\} \in \bar{E}$ if and only if there is a hyperedge $F \in E$ such that $v,u \in F$, $v$ is leftmost in $F$ and $I_v \cap I_u \neq \emptyset$.
	For the special case of a graph $G$ instead of a hypergraph $H$, it holds that $\bar{G}=G$.
\end{definition}

Since each edge of the vertex cover instance $\bar{G}$ is a witness set by Lemma~\ref{lemma:witness_set}, we can observe that each feasible query set $Q$ is a vertex cover of $\bar{G}$.
Using the vertex cover instance, we can define a class of algorithms for the hypergraph orientation problem as follows:
An algorithm is \emph{vertex cover-based} if it implements the following pattern:
\begin{enumerate}
	\item Non-adaptively query a vertex cover $VC$ of $\bar{G}$;
	\item Iteratively query mandatory vertices until the minimum-weight vertex of each hyperedge is known: For each hyperedge $F \in E$ for which the minimum weight is still unknown, query the vertices in order of left endpoints until the minimum weight is found.
\end{enumerate}
By definition of the second step, each vertex cover-based algorithm clearly orients each hyperedge.
Furthermore, Lemma~\ref{lemma:mandatory} implies that each vertex queried in the last step is indeed mandatory for all realizations that are consistent with the currently known information, i.e., the weights of the previously queried vertices.
For graphs, this is easy to see, and for hypergraphs, this can be shown as follows:
For a hyperedge $F$ that isn't solved after the first step and has leftmost vertex~$v$ initially, the vertex cover $VC$
has queried $v$ or all other vertices of~$F$. In the latter case, $v$~is the only unqueried
vertex of~$F$ and $I_v$~must contain the precise weight of one of the other vertices, hence $v$
is mandatory. In the former case, the remaining candidates for being the minimum-weight vertex are
(1)~the vertex with leftmost precise weight among those queried in the first step, and
(2)~the unqueried vertices whose intervals contains that precise weight. It is then clear
that the leftmost vertex is mandatory, and querying it either solves the hyperedge or
yields a situation of the same~type.

All the algorithms we propose in this paper are vertex cover-based.
We have the following lower bounds for vertex cover-based algorithms.

\begin{restatable}[]{theorem}{theoremVCbasedLB}
	\label{theorem:VC_based_LB}
	No vertex cover-based algorithm has competitive ratio better than $\frac{3}{2}$ for the hypergraph orientation problem.
	This result holds even in the following special cases:
	\begin{enumerate}
		\item The graph has only a single hyperedge but the query costs are not uniform.
		\item The query costs are uniform and the vertex cover instance $\bar{G}$ is bipartite.
		\item 
		{The instance is a non-bipartite graph orientation instance with uniform query costs.}
	\end{enumerate}
\end{restatable}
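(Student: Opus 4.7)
The plan is to exhibit, for each of the three cases, an explicit family of instances on which any vertex cover-based algorithm $A$ has $\EX[A]/\EX[\OPT]$ bounded below by values approaching $\tfrac{3}{2}$. The common scaffold is: (a)~enumerate the minimum vertex covers of $\bar{G}$, since a non-minimal $VC$ only increases the initial phase cost without reducing postprocessing; (b)~for each minimum $VC$, compute $\EX[A \mid VC]$ as $c(VC)$ plus the expected postprocessing cost, determined by applying Lemma~\ref{lemma:mandatory} to the information revealed by $VC$; (c)~compute $\EX[\OPT]$ by casework on realizations, exploiting that OPT is per-realization optimal; (d)~tune the probabilities, costs and sizes so that $\min_{VC}\EX[A \mid VC]/\EX[\OPT]$ approaches $\tfrac{3}{2}$.

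For case~1 (single hyperedge, non-uniform costs), I would take a hyperedge $F=\{v_1,\ldots,v_k\}$ with $v_1$ leftmost and intervals $I_{v_1}=(0,2)$, $I_{v_i}=(1,3)$ chosen so that $\bar{G}$ is the star centred at $v_1$. The two minimum vertex covers are $VC_1=\{v_1\}$ and $VC_2=\{v_2,\ldots,v_k\}$, and $VC_1$ is wasteful on realizations where some $w_{v_i}\in(1,2)$ while $VC_2$ is wasteful whenever $w_{v_1}\le 1$; non-uniform costs and balanced masses on the events $\{w_{v_1}\in(1,2)\}$ and $\{\text{some }w_{v_i}\in(1,2)\}$ push the best ratio up to $\tfrac{3}{2}$. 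For case~2 (uniform costs, bipartite $\bar{G}$), I would use a bipartite structure such as a star $K_{1,n}$ realized as $\bar{G}$ of a suitable hypergraph, again with the two minimum VCs split along the two sides of the bipartition and probabilities tuned so that OPT adaptively picks between them while the algorithm cannot. For case~3 (non-bipartite graph orientation, uniform costs), I would take the triangle graph: any vertex cover of a triangle has size at least $2$, so every VC-based algorithm queries at least two vertices up front, while intervals and probabilities can be arranged so that OPT typically determines two (or even three) edge-minima with a single query in expectation, again yielding ratio~$\tfrac{3}{2}$.

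The main obstacle is the simultaneous calibration of probabilities and costs to balance the two competing VC commitments while keeping $\EX[\OPT]$ small via OPT's adaptivity. In case~3 the essential new ingredient is the K\H{o}nig--Egerv\'ary integrality gap on odd cycles: a triangle has minimum vertex cover $2$ but maximum matching $1$, so VC-based algorithms are structurally forced to overprovision, which is precisely what separates cases~2 and~3 and gives the $\tfrac{3}{2}$ bound with uniform costs on a non-bipartite instance. Finally, randomization in the choice of $VC$ does not help, since the expected cost of a randomized algorithm is a convex combination of the deterministic $\EX[A \mid VC]$ values, so the lower bound on the best deterministic choice transfers to all randomized strategies.
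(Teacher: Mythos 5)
Your general framework (enumerate minimum vertex covers, balance them, compare to per-realization OPT, note that randomizing over VCs cannot help) is sound, and your Case~1 sketch is on the right track: the paper's Case~1 also uses a single hyperedge whose vertex cover instance is a star, with costs $k,1,k$ on the three vertices and probabilities tuned so that both two-stage choices cost roughly $\tfrac{3}{2}k+1$ while $\EX[\OPT]\to k+1$.

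However, your Cases~2 and~3 have genuine gaps. For Case~2 you propose a star $K_{1,n}$ as $\bar{G}$. With \emph{uniform} costs that cannot work: the center is the unique minimum VC (size $1$ versus $n$), the two sides are wildly unbalanced, and any vertex cover-based algorithm just queries the center; in fact a star vertex cover instance with uniform costs is exactly the single-hyperedge case, for which Theorem~\ref{thm:VC-single-set} caps the ratio at $\min\{\tfrac{4}{3},\tfrac{n+1}{n}\}$. The paper's actual Case~2 construction replaces the cost-$k$ vertex of Case~1 by $k$ unit-cost copies, producing a complete bipartite $\bar{G}$ between $\{x_1,\dots,x_k\}$ and $\{y,z_1,\dots,z_k\}$ whose two near-minimum covers (the two sides) are genuinely in competition; that replication step is the missing idea.

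For Case~3 the triangle fails outright. Every feasible query set must include a vertex cover of $\bar{G}$ (Lemma~\ref{lemma:witness_set}), so on a triangle $\OPT\ge 2$ in every realization, never~$1$ as your sketch suggests. Then for any choice of two-vertex cover, $\EX[\ALG]=2+p_v$ where $v$ is the excluded vertex, while $\EX[\OPT]=2+\mathbb{P}[\text{all three mandatory}]$; for the ratio to approach $\tfrac{3}{2}$ one would need $\min_v p_v \to 1$ while $\mathbb{P}[\text{all three mandatory}]\to 0$, which the union bound $\mathbb{P}[|M|=3]\ge p_a+p_b+p_c-2$ rules out. Your appeal to the K\H{o}nig--Egerv\'ary integrality gap does not help: that gap forces \emph{both} ALG and OPT to pay at least the vertex cover number, so it is not a source of separation. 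The paper's Case~3 instead reuses the complete-bipartite mechanism of Case~2 on a graph instance and adds a universal vertex $y$ to destroy bipartiteness while keeping the two competing covers balanced; the $\tfrac{3}{2}$ gap comes from the same adaptivity-versus-commitment tension as in the other cases, not from odd-cycle integrality.
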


We remark that the second step of vertex cover-based algorithms must be adaptive:
In Appendix~\ref{app:strict2stage}, we show that any algorithm consisting of two non-adaptive stages cannot have competitive ratio $o(\log n)$, even for a single hyperedge with $n$ vertices and uniform query costs.

\subparagraph{Bounds on $\EX[\OPT]$.} 
\label{sec:lower-bounds-on-OPT}
Let $\mathcal{R}$ be the set of all possible realizations and let $\OPT(R)$ for $R \in \mathcal{R}$ be the optimal query cost for realization~$R$.
As each feasible query set $Q$ must include a vertex cover of $\bar{G}$,
the minimum weight of a vertex cover of $\bar{G}$ (using the query costs as weights) is a lower bound on the optimal query cost for each realization and, thus, on $\EX[\OPT]$.
This observation in combination with Lemma~\ref{lemma:mandatory} also gives us a way to identify an optimal query set for a fixed realization, by using the knowledge of the exact vertex weights. 
\begin{observation}
	\label{obs:verification}
	For a fixed realization $R$ of an instance of the hypergraph orientation problem, let $M$ be the set of vertices that are mandatory (cf.~Lemma~\ref{lemma:mandatory}), and let $VC_M$ be a minimum-weight vertex cover of $\bar{G}[V\setminus M]$. Then $M \cup VC_M$ is an optimal query set for realization $R$.
\end{observation}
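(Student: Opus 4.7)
My plan is to prove the claim in two parts: a cost lower bound $c(Q^*) \geq c(M\cup VC_M)$ for every feasible query set $Q^*$, and then feasibility of $M \cup VC_M$ itself. The lower bound follows quickly: $M \subseteq Q^*$ by definition of mandatory, and Lemma~\ref{lemma:witness_set} together with Definition~\ref{def:vertex_cover_instance} makes every edge of $\bar{G}$ a witness set, so $Q^*$ is a vertex cover of $\bar{G}$. Thus $Q^* \setminus M$ covers every edge of $\bar{G}[V\setminus M]$, giving $c(Q^*\setminus M) \geq c(VC_M)$, and combining with the disjointness of $M$ and $VC_M \subseteq V\setminus M$ yields $c(Q^*) \geq c(M) + c(VC_M) = c(M \cup VC_M)$.

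For feasibility of $Q := M\cup VC_M$, I would fix a hyperedge $F$ with minimum-weight vertex $m$ and verify the feasibility condition stated in the preliminaries. If $m \in Q$, then for every $u \in F\setminus\{m\}$ with $w_m \in I_u$, Lemma~\ref{lemma:mandatory}(ii) certifies that $u$ is mandatory, so $u \in M \subseteq Q$, matching the first feasibility condition.

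The harder case is $m \notin Q$, and the key step I anticipate is showing that $m$ must be the leftmost vertex $v_1$ of $F$. I would argue by contradiction: if $m \neq v_1$, the preprocessing assumption rules out $I_m \subseteq I_{v_1}$, which together with $\ell_{v_1}\leq \ell_m$ forces $r_{v_1} < r_m$; combined with $w_m < w_{v_1} < r_{v_1}$ this yields $w_{v_1}\in I_m$, so Lemma~\ref{lemma:mandatory}(i) makes $m$ mandatory, contradicting $m \notin Q \supseteq M$. Once $m = v_1$ is established, $m \notin M$ and Lemma~\ref{lemma:mandatory}(i) give $w_u \notin I_m$ for all $u \in F \setminus \{m\}$, which forces $w_u \geq r_m$ since $w_u > w_m > \ell_m$. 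For any such $u$ with $I_u \cap I_m \neq \emptyset$, the edge $\{m,u\}$ belongs to $\bar{E}$ because $m$ is leftmost in $F$; if additionally $u \notin M$, this edge lies in $\bar{G}[V\setminus M]$ and is covered by $VC_M$, and since $m \notin VC_M$ we conclude $u \in VC_M \subseteq Q$. This verifies the second feasibility condition. The main obstacle is exactly this reduction to $m = v_1$, which is what aligns the feasibility requirement with edges actually present in $\bar{G}$.
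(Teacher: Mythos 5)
Your proof is correct. The paper states this as an observation without a written proof, so there is no paper proof to compare against, but your two-part structure (optimality lower bound via $M \subseteq Q^*$ plus $Q^*$ being a vertex cover of $\bar G$, then feasibility via the hyperedge-by-hyperedge case analysis) is exactly the reasoning the surrounding text gestures at. The one genuinely nontrivial step, and you identify it correctly, is the feasibility case $m \notin Q$: you must first show that the minimum-weight vertex $m$ is then necessarily the leftmost vertex $v_1$ of $F$ (via the preprocessing assumption $I_m \not\subseteq I_{v_1}$ forcing $r_{v_1} < r_m$, so $w_{v_1}\in I_m$ would make $m$ mandatory), because only edges incident to the leftmost vertex of $F$ are present in $\bar G$ by Definition~\ref{def:vertex_cover_instance}. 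With that established, $m\notin M$ together with Lemma~\ref{lemma:mandatory}(i) gives $w_u \ge r_m$ for all $u\in F\setminus\{m\}$, and coverage of each edge $\{m,u\}\in\bar E$ by $M\cup VC_M$ with $m\notin VC_M$ yields $u\in Q$, matching the second feasibility condition. No gaps.
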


Computing $\OPT(R)$ for a fixed and known realization $R$ is NP-hard~\cite{ErlebachHLMS-arxiv2020}. 
{This extends to the hypergraph orientation problem and the 
computation of $\EX[\OPT]$: We can reduce from the problem of computing $\OPT(R)$ by concentrating the probability mass of all 
intervals onto the weights in realization~$R$. 
The reduction of~\cite{ErlebachHLMS-arxiv2020} in combination with~\cite{chlebik2007} also implies APX-hardness.}

To analyze the performance of our algorithms, we compare the expected cost of the algorithms to the expected cost of the optimal solution.
By Observation~\ref{obs:verification}, $c(M) + c(VC_M)$ is the minimum query cost for a fixed realization $R$, where $M \subseteq V$ is the set of mandatory elements in the realization and $VC_M$ is a minimum-weight vertex cover for the subgraph $\bar{G}[V\setminus M]$ of the vertex cover instance $\bar{G}=(V,\bar{E})$ induced by $V\setminus M$.
Thus, the optimal solution for a fixed realization is completely characterized by the set of mandatory elements in the realization.
Using this, we can characterize $\EX[\OPT]$ as
	$\EX[\OPT] = \sum_{M\subseteq V} p(M) \cdot c(M)  +  \sum_{M\subseteq V}  p(M) \cdot c(VC_M)$,  
where $p(M)$ denotes the probability that $M$ is the set of mandatory elements.
It follows that $\sum_{M\subseteq V} p(M) \cdot c(M) = \sum_{v\in V} p_v \cdot c(v)$, since both terms describe the expected cost for querying mandatory elements, which leads to the following characterization of $\EX[\OPT]$:
\begin{equation*}
\EX[\OPT] =  \sum_{v\in V} p_v \cdot c_v   +  \sum_{M\subseteq V}  p(M) \cdot c(VC_M). 
\end{equation*}

{A key technique for 
our analysis}
is lower bounding $\EX[\OPT]$ by partitioning the optimal solution into subproblems and discarding dependencies between elements in different subproblems. 

\begin{definition}\label{def:partial_opt}
	For a realization $R$ and any subset $S \subseteq V$, let $\OPT_S = \min_{Q \in \mathcal{Q}} c(Q \cap S)$, where $\mathcal{Q}$ is the set of all feasible query sets for realization~$R$.
\end{definition}

\begin{lemma}
	\label{lemma_opt_partitioning}
	Let $S_1,\ldots,S_k$ be a partition of $V$. Then  
	$
	\EX[\OPT] \ge \sum_{i=1}^k \EX[\OPT_{S_i}].
	$
\end{lemma}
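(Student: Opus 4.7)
The plan is to prove the inequality pointwise for each realization and then take expectations. Fix any realization $R$ and let $Q^*$ be an optimal feasible query set for $R$, so that $c(Q^*) = \OPT(R)$. Since $S_1,\ldots,S_k$ is a partition of $V$, the sets $Q^*\cap S_1,\ldots,Q^*\cap S_k$ partition $Q^*$, and by additivity of the cost function we have
\begin{equation*}
\OPT(R) \;=\; c(Q^*) \;=\; \sum_{i=1}^k c(Q^* \cap S_i).
\end{equation*}

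The next step is to use the definition of $\OPT_{S_i}$. Since $Q^*$ is itself feasible for $R$, it is a valid candidate in the minimization $\min_{Q \in \mathcal{Q}} c(Q \cap S_i)$ that defines $\OPT_{S_i}(R)$ (cf.\ Definition~\ref{def:partial_opt}). Hence $\OPT_{S_i}(R) \leq c(Q^* \cap S_i)$ for every $i$, and summing over $i$ yields
\begin{equation*}
\OPT(R) \;\geq\; \sum_{i=1}^k \OPT_{S_i}(R).
\end{equation*}

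Finally, taking expectations over the random realization $R$ and applying linearity of expectation gives $\EX[\OPT] \geq \sum_{i=1}^k \EX[\OPT_{S_i}]$, as claimed. The statement is conceptually straightforward once one recognises that the partial optimum $\OPT_{S_i}$ is defined by restricting a globally feasible query set to $S_i$ (rather than by solving an independent subproblem); the only subtle point to highlight is precisely this, since one might otherwise worry that $\OPT_{S_i}$ requires $Q \cap S_i$ alone to be feasible. No obstacle arises because the minimum in Definition~\ref{def:partial_opt} is taken over feasible query sets $Q$ for the whole instance, which makes $Q^*$ immediately admissible.
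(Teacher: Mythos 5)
Your proof is correct, and it takes a genuinely more elementary route than the paper's. You argue pointwise for each realization $R$: since the optimal feasible set $Q^*$ is itself a member of the feasibility collection $\mathcal{Q}$, the definition of $\OPT_{S_i}$ immediately gives $\OPT_{S_i}(R) \le c(Q^* \cap S_i)$, and summing over the partition yields $\OPT(R) \ge \sum_i \OPT_{S_i}(R)$; linearity of expectation finishes. You also correctly flag the one subtlety, namely that the minimum in Definition~\ref{def:partial_opt} ranges over \emph{globally} feasible query sets, so no local feasibility condition has to be checked.

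The paper instead establishes an explicit formula, $\EX[\OPT_{S_i}] = \sum_{v\in S_i} p_v \cdot c_v + \sum_{M \subseteq V} p(M) \cdot c(VC^{S_i}_{M})$, and then compares the sum of per-part vertex-cover costs to the global vertex-cover cost. That argument is longer and less direct for proving this particular inequality, but the intermediate characterization (Equation~\eqref{eq_opt_partitioning_1}) is not throwaway: it is invoked again in the proof of Theorem~\ref{thm:threshold} to bound $\EX[\OPT_M]$ from below by $\sum_{v\in M} p_v$. So the paper's detour buys a reusable lemma, while your approach is cleaner if the only goal is the superadditivity inequality itself. Both are valid; yours is the sharper proof of the stated lemma in isolation.
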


\begin{proof}
	We start the proof by characterizing $\EX[\OPT_{S_i}]$ for each $i \in \{1,\ldots,k\}$. 
	Let $R \in \mathcal{R}$ be a realization in which $M$ is the set of mandatory elements. Then $\OPT_{S_i}$ needs to contain all mandatory elements of $S_i$, and resolve all remaining dependencies between vertices of $S_i$, i.e., query a minimum-weight vertex cover $VC^{S_i}_{M}$ for the subgraph $\bar{G}[S_i\setminus M]$.
	Thus, it follows	
	\begin{equation}\label{eq_opt_partitioning_1}
		\EX[\OPT_{S_i}] = \sum_{v\in S_i} p_v \cdot c_v + \sum_{M \subseteq V} p(M) \cdot c(VC^{S_i}_{M}).
	\end{equation}
	By summing Equation~\eqref{eq_opt_partitioning_1} over all $i \in \{1,\ldots,k\}$, we obtain the lemma:
	\begin{align*}
		\sum_{i=1}^k \EX[\OPT_{S_i}] &= \sum_{i=1}^k \left(\sum_{v\in S_i} p_v \cdot c_v + \sum_{M \subseteq V} p(M) \cdot c(VC^{S_i}_{M})\right)\\
		&= \sum_{v \in V} p_v \cdot c_v + \sum_{M \subseteq V} p(M) \cdot \left(\sum_{i=1}^k c(VC_M^{S_i})\right)\\
		&\le \sum_{v \in V} p_v \cdot c_v + \sum_{M \subseteq V} p(M) \cdot c(VC_M) = \EX[\OPT],
	\end{align*}
	where the second equality follows from $S_1,\ldots,S_k$ being a partition. 
	The inequality follows from $\sum_{i=1}^k c(VC_M^{S_i})$ being the cost of a minimum weighted vertex cover for a subgraph of $G[V\setminus M]$, while $c(VC_M)$ is the minimum cost for a vertex cover of the whole graph.
\end{proof}

For the case of arbitrary query costs, we will sometimes
need to partition $V$ in such a way that a vertex $v$ can
be split into fractions that are in different parts of
the partition. We view each fraction as a copy of $v$,
and the split is done in such a way that the query costs
of all copies of $v$ add up to~$c_v$. {Further},
the probability distribution for being mandatory in the
resulting instance is such that either all copies of $v$
are mandatory or none of them is, and the former happens
with probability~$p_v$.
(A 
detailed discussion of this process can be found
in Appendix~\ref{app:vertexSplit}.)
We refer to the application
of this operation to a vertex as a \emph{vertex split}
and note that it can be applied repeatedly.

\begin{restatable}[]{observation}{obsvertexsplitting}
\label{obs:vertex_splitting}
	Let $\OPT'$ be the optimal solution for an instance that is created by iteratively executing vertex splits. Then, $\EX[\OPT'] = \EX[\OPT]$. Furthermore, Lemma~\ref{lemma_opt_partitioning} also applies to $\EX[\OPT']$ and the modified instance.
\end{restatable}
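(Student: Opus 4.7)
My plan is to prove the two claims of the observation sequentially: first the equality $\EX[\OPT'] = \EX[\OPT]$, and then the applicability of Lemma~\ref{lemma_opt_partitioning} to split instances.

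For the first claim, I will reduce to the case of a single vertex split by induction on the number of splits. Given a single split of a vertex $v$ into copies $v_1, \ldots, v_k$ with $\sum_i c_{v_i} = c_v$ and perfectly correlated mandatory status, the key observation is that querying any single copy $v_i$ reveals the same information as querying $v$ would in the original. I will first set up a natural bijection between realizations $R$ of the original instance and realizations $R'$ of the split instance, in which $v$ is mandatory in $R$ iff all copies $v_i$ are mandatory in $R'$. Then for each pair of corresponding realizations, I will exhibit a cost-preserving correspondence between feasible query sets: a feasible $Q$ with $v \in Q$ maps to $Q' = (Q \setminus \{v\}) \cup \{v_1, \ldots, v_k\}$ in one direction, and any $Q'$ containing some copy maps back to $Q = (Q' \setminus \{v_1, \ldots, v_k\}) \cup \{v\}$ in the reverse direction. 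Both directions preserve cost by the sum property of the split, and feasibility is preserved because the revealed information is equivalent. Consequently $\OPT(R) = \OPT'(R')$ for corresponding realizations, and the equality in expectation follows.

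For the second claim, I will inspect the proof of Lemma~\ref{lemma_opt_partitioning} and verify that each step remains valid for the split instance. That proof uses only two ingredients: the identity $\EX[\OPT_{S_i}] = \sum_{v \in S_i} p_v c_v + \sum_M p(M) c(VC^{S_i}_M)$, and the combinatorial inequality $\sum_i c(VC^{S_i}_M) \le c(VC_M)$ on induced-subgraph vertex covers. The combinatorial inequality is purely graph-theoretic and applies to the split vertex cover instance $\bar{G}'$ without modification. The identity reduces via $\sum_{v'} p_{v'} c_{v'} = \sum_M p(M) c(M)$ to an application of linearity of expectation, which does not require independence of vertex weights. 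Both ingredients therefore survive the introduction of correlation between copies, so the inequality $\EX[\OPT'] \ge \sum_i \EX[\OPT'_{S_i}]$ follows for any partition of the split vertex set.

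The main subtlety, and the step that needs the most care, is that vertex splitting introduces correlations between the mandatory statuses of copies, breaking the independence assumption of the original problem. My plan therefore takes care to rely only on linearity of expectation and set-level cost identities, and to avoid using any property that depends on independence across vertex weights.
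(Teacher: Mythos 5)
Your argument for the first claim breaks down in the reverse direction of the claimed cost-preserving correspondence between query sets. Mapping a feasible $Q'$ (in the split instance) that contains a nonempty \emph{proper} subset of the copies $\{v_1,\ldots,v_k\}$ to $Q = (Q'\setminus\{v_1,\ldots,v_k\})\cup\{v\}$ does not preserve cost: if only $v_1\in Q'$, then $c(Q)=c(Q')-c_{v_1}+c_v>c(Q')$, so this map cannot be used to conclude $\OPT(R)\le\OPT'(R')$. Repairing it requires a separate structural argument that an optimal $Q'$ may be taken to contain either all copies of $v$ or none of them. That is true, because every copy has the same neighborhood in the split vertex cover instance $\bar{G}'$, so a minimum vertex cover of $\bar{G}'[V'\setminus M']$ treats the copies uniformly (and the mandatory set $M'$ contains all or none of them by construction). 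Your proposal omits this, which is precisely the crux. A second, related gap: the claim that ``querying any single copy $v_i$ reveals the same information as querying $v$ would'' is not substantiated, and in fact the paper explicitly notes that the postulated mandatory distribution $p'$ (all copies mandatory iff the original is) is \emph{not} in general realizable by intervals with independently drawn weights in an actual hypergraph orientation instance; the statement is therefore proved in a generalized problem where $p(\cdot)$ is part of the input, which suffices because $\OPT'$ is used only as a lower bound. Your proposal does not acknowledge this move.

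The paper's proof avoids realization-level reasoning altogether: it uses the closed form $\EX[\OPT]=\sum_{M\subseteq V}p(M)\bigl(c(M)+c(VC_M)\bigr)$ and checks that both summands are invariant under a vertex split, namely $c'(M')=c(M)$ (copies of a mandatory vertex are mandatory together and their costs sum to the original) and $c'(VC'_{M'})=c(VC_M)$ (replacing a vertex by same-neighborhood copies with the same total weight preserves the minimum vertex cover weight). Your treatment of the second claim (applicability of Lemma~\ref{lemma_opt_partitioning}) is correct and matches the paper: that lemma's proof uses only the mandatory/vertex-cover characterization and a monotonicity property of minimum vertex covers over induced subgraphs, neither of which relies on independence across vertices.
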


\section{A threshold algorithm for orienting hypergraphs}
\label{sec:threshold}

We present an algorithm for orienting graphs and its generalization
to hypergraphs.


\subsection{Orienting graphs}
\label{sec:threshold-algorithm}

{We consider the graph orientation problem.} 
{As a subproblem,}
we solve a 
vertex cover problem. 
{This problem is NP-hard and $2$-approximation algorithms are known \cite{yannakakis_edge_1980}.} For several special graph classes, there are improved algorithms \cite{halperin2002improved}. Using an $\alpha$-approximation 
as a black box, we give a competitive ratio 
between $\phi \approx 1.618$ ($\alpha = 1$) and $2$ ($\alpha = 2$)
as a function depending on $\alpha$.
{Appendix~\ref{app:thres} contains a figure showing the competitive ratio in dependence on $\alpha$.}

\begin{algorithm}[htb]
	\KwIn{Instance $G=(V,E)$, $p_v$ for each $v \in V$, parameter $d \in [0,1]$, \\\phantom{\textbf{Input:\:}}and an $\alpha$-approximation black box for the vertex cover problem}
	Let $M = \{v \in V \mid p_v \ge d\}$\label{line:threshold_addM}\;
	Solve~\eqref{eq:threshold:LP} for $G[V \setminus M]$ and let $x^*$ be an optimal basic feasible solution\label{Line:threshold_LP}\;
	Let $V_1 = \{v \in V \mid x^*_v = 1\}$ and similarly $V_{1/2},V_0$ \label{line:threshold_defV}\;
	Use the $\alpha$-approximation black box to approximate a vertex cover $VC'$ for $G[V_{1/2}]$\label{line:threshold_solvevc}\;
	Query $Q = M \cup V_1 \cup VC'$\label{line:threshold_stage1}\tcc*[l]{$Q$ is a vertex cover}
	Query the mandatory elements of $V \setminus Q$ \label{line:threshold_stage2}\; 
	\caption{\threshold}
	\label{alg:threshold}
\end{algorithm}

Algorithm~\ref{alg:threshold} is parameterized by a threshold $d\in [0,1]$, which is optimized depending on the approximation ratio $\alpha$ of the chosen vertex cover procedure.
The algorithm executes a preprocessing of the vertex cover instance by using the following classical LP relaxation, {for which each optimal basic feasible solution is half-integral}~\cite{nemhauser1975}: 
\begin{equation}\tag{LP}\label{eq:threshold:LP}
\begin{array}{lll}
\min &\sum_{v\in V} c_v\cdot x_v\\
\text{s.t. }& x_v + x_u \ge 1 & \forall \{u,v\} \in E\\
& x_v \ge 0& \forall v \in V
\end{array}
\end{equation}

\begin{restatable}[]{theorem}{theoremThreshold}
	\label{thm:threshold}
	Given an $\alpha$-approximation with $1\le\alpha\le 2$ for the vertex cover problem
	(on the induced subgraph $G[V_{1/2}]$, see Line~\ref{line:threshold_solvevc}),
	\emph{\threshold} with parameter $d$ achieves a competitive ratio of $\max\{\frac{1}{d}, \alpha + (2-\alpha) \cdot d \}$ {for the graph orientation problem.}
	Optimizing~$d$ yields a competitive ratio of $\frac{1}{2}(\alpha + \sqrt{8 - \alpha(4 - \alpha)})$.
\end{restatable}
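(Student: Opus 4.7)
The plan is to combine the LP-based preprocessing with Lemma~\ref{lemma_opt_partitioning} and two carefully chosen lower bounds on $\E[\OPT]$, and then to optimize~$d$. I would first check correctness: since $x^*_u + x^*_v \ge 1$ for every edge of $G[V\setminus M]$ in the half-integral LP, the set $Q := M \cup V_1 \cup VC'$ is a vertex cover of $G$, so after Line~\ref{line:threshold_stage1} every edge has a queried endpoint. Lemma~\ref{lemma:mandatory} then guarantees that the leftmost unqueried vertex of each still-unresolved edge in Line~\ref{line:threshold_stage2} is mandatory, so the algorithm terminates with every edge oriented.

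Next, I would write the expected algorithm cost as
\begin{equation*}
\E[\ALG] \le c(M) + c(V_1) + c(VC') + \sum_{v \in V\setminus Q} p_v\,c_v,
\end{equation*}
using that a vertex $v \notin Q$ is queried in stage~2 only if it is mandatory, an event of probability at most $p_v$. Partitioning $V$ into $S_1 = M$ and $S_2 = V' := V\setminus M$ and applying Lemma~\ref{lemma_opt_partitioning} gives $\E[\OPT] \ge \E[\OPT_M] + \E[\OPT_{V'}]$. For the $M$-part, every $v\in M$ has $p_v \ge d$, hence $c(M) \le \frac{1}{d}\sum_{v\in M} p_v c_v \le \frac{1}{d}\,\E[\OPT_M]$.

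The heart of the argument is to prove the matching bound
\begin{equation*}
c(V_1) + c(VC') + \sum_{v\in V'\setminus Q} p_v c_v \le (\alpha + (2-\alpha)d)\,\E[\OPT_{V'}].
\end{equation*}
By the Nemhauser-Trotter theorem applied to $x^*$, an optimal vertex cover of $G[V']$ is $V_1$ plus an optimal vertex cover of $G[V_{1/2}]$; since $\alpha \ge 1$, an $\alpha$-approximation $VC'$ of the latter yields $c(V_1) + c(VC') \le \alpha \cdot c(VC^*(G[V']))$, and trivially $\E[\OPT_{V'}] \ge c(VC^*(G[V']))$. To control the stage-$2$ term I would exploit that $p_v < d$ on $V'$: this both directly upper-bounds $\sum_{v\in V'\setminus Q} p_v c_v$ and, applied to a fixed optimal vertex cover $VC^*$ of $G[V']$, bounds the expected intersection of $VC^*$ with the random mandatory set by $d\cdot c(VC^*(G[V']))$. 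Combined with a case distinction on whether $\sum_{v\in V'} p_v c_v$ is small or comparable to $c(VC^*(G[V']))$, this gives the target factor $\alpha + (2-\alpha)d$; the tight intermediate regime in the uniform-cost case is handled by the stronger lower bound on $\E[\OPT_{V'}]$ obtained by applying the K\H{o}nig-Egerváry theorem to a bipartite graph constructed from the half-integral structure of $x^*$, as announced in the section overview. This K\H{o}nig-Egerváry step is the main obstacle; for arbitrary costs one would additionally apply the vertex-splitting of Observation~\ref{obs:vertex_splitting} to reduce the subproblem to disjoint stars with obvious vertex covers.

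Combining the two pieces yields $\E[\ALG] \le \max\{1/d,\,\alpha + (2-\alpha)d\}\cdot\E[\OPT]$. Equating $1/d = \alpha + (2-\alpha)d$ gives the quadratic $(2-\alpha)d^2 + \alpha d - 1 = 0$; its positive root is $d^* = \frac{-\alpha + \sqrt{\alpha^2 + 4(2-\alpha)}}{2(2-\alpha)}$, and rationalizing $1/d^*$ produces the stated competitive ratio $\frac{1}{2}\bigl(\alpha + \sqrt{8 - \alpha(4-\alpha)}\bigr)$.
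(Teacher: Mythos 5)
Your proposal gets the framing right---the partition $\{M, V\setminus M\}$ with Lemma~\ref{lemma_opt_partitioning}, the bound $c(M)\le\frac1d\E[\OPT_M]$, the role of Nemhauser--Trotter, K\H{o}nig--Egerv\'ary, and vertex splitting for arbitrary costs, and the final optimization of~$d$ (which you carry out correctly). But the central inequality
$c(V_1)+c(VC')+\sum_{v\in V'\setminus Q} p_v c_v \le (\alpha+(2-\alpha)d)\,\E[\OPT_{V'}]$
is asserted, not proved, and the sketch you offer for it would not go through. The ``case distinction on whether $\sum_{v\in V'} p_v c_v$ is small or comparable to $c(VC^*(G[V']))$'' is not a substitute for the structural argument: the stage-2 contribution from $V_0$ is $\sum_{v\in V_0}p_vc_v$, and $c(V_0)$ can be arbitrarily large relative to $c(VC^*(G[V']))$ (think of a star with one center in $V_1$ and many leaves in $V_0$), so neither $\E[\OPT_{V'}]\ge c(VC^*(G[V']))$ nor $p_v<d$ on its own controls that term. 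You acknowledge this yourself (``This K\H{o}nig--Egerv\'ary step is the main obstacle'') but then defer to the section overview rather than filling the gap.

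What the paper actually does is partition $V\setminus M$ \emph{further} into $V_1\cup V_0$ and $V_{1/2}$, getting $\E[\OPT]\ge\E[\OPT_M]+\E[\OPT_{V_1\cup V_0}]+\E[\OPT_{V_{1/2}}]$, and then bounds the two pieces with \emph{different} factors that are subsequently merged. On $V_1\cup V_0$ (where only $V_1$ is queried in stage~1), it observes that $V_1$ is a minimum vertex cover of the bipartite graph obtained by dropping $V_1$--$V_1$ edges, so K\H{o}nig--Egerv\'ary yields a matching $h:V_1\to V_0$; pairing each $v\in V_1$ with $h(v)$ and using $p_{h(v)}<d$ gives factor $1+d$, while the unmatched $V_0$ vertices are charged at factor~$1$ against $\sum p_v c_v\le\E[\OPT_{V_0\setminus S}]$. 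On $V_{1/2}$ it uses the half-integral slack $c(VC^*(G[V_{1/2}]))\ge\frac12 c(V_{1/2})$ to bound $c(VC')+d\,c(V_{1/2}\setminus VC')\le(\alpha+(2-\alpha)d)\,c(VC^*)$. Only then does it observe $1+d\le\alpha+(2-\alpha)d$ for $\alpha\in[1,2]$ to get a single factor. By collapsing $V_1\cup V_0$ and $V_{1/2}$ into one block and hoping a single $\alpha+(2-\alpha)d$ bound falls out, your sketch skips exactly the matching-based charging on $V_1\cup V_0$ that makes the argument work; that step needs to be carried out explicitly.
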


\begin{proof}
Here, we show the result for uniform query costs. 
The generalization to arbitrary query costs requires an additional technical step involving vertex splitting and is discussed in Appendix~\ref{app:threshold_arbitrary_costs}. 
Since $Q$ is a vertex cover for $G$, querying it in Line~\ref{line:threshold_stage1} and 
resolving all remaining dependencies in Line~\ref{line:threshold_stage2} clearly solves the graph orientation~problem.
Note that $V\setminus Q$ is an independent set in~$G$, and thus the nodes in $V\setminus Q$
can only be made mandatory by the results of the queries to~$Q$. Hence, it is known after
Line~\ref{line:threshold_stage1} which nodes in $V\setminus Q$ are mandatory, and they
can be queried in Line~\ref{line:threshold_stage2} in arbitrary order (or in parallel).

	We continue by showing the competitive ratio of $\max\{\frac{1}{d}, \alpha + (2-\alpha) \cdot d \}$.
	Algebraic transformations show that the  optimal choice for the threshold is $d(\alpha)= 2/(\alpha+\sqrt{8-\alpha(4-\alpha)})$. The desired competitive ratio for \threshold with $d=d(\alpha)$ follows.

	The algorithm queries set $Q$ and all other vertices only if they are mandatory, hence
	\begin{equation}\label{eq_thres_alg}
			\EX[ALG] = |Q| + \sum_{v\in V\setminus Q} p_v 
					= |M| + |V_1| + |VC'| + \sum_{v \in V_0} p_v + \sum_{v \in V_{1/2} \setminus VC'} p_v.
	\end{equation}
	The expected optimal cost can be lower bounded by partitioning and Lemma~\ref{lemma_opt_partitioning}:
	\begin{equation}
	\EX[\OPT] \ge  \EX[\OPT_{M}] + \EX[\OPT_{V_1 \cup V_0}] + \EX[\OPT_{V_{1/2}}].
	\label{eq:LBonOPT}
	\end{equation}
	In the remainder we compare $\EX[ALG]$ with $\EX[\OPT]$ component-wise.

	We can lower bound $\EX[\OPT_M]$ by $\sum_{v\in M} p_v$ using Equation~\eqref{eq_opt_partitioning_1}.
	By definition of~$M$, it holds that $\EX[\OPT_M] \ge \sum_{v \in M} p_v \ge d \cdot |M|$. Thus,
	\begin{equation}\label{gt_eq_2}
		|M| \le \frac{1}{d} \cdot \EX[\OPT_M].
	\end{equation}

	Next, we compare $|V_1| + \sum_{v \in V_0} p_v$ with $\EX[\OPT_{V_1 \cup V_0}]$.
	For this purpose, let $G[V_1 \cup V_0]$ be the subgraph of $G$ induced by $V_1 \cup V_0$,
	and let $G'[V_1 \cup V_0]$ be the bipartite graph that is created by removing all edges between elements of $V_1$ from $G[V_1\cup V_0]$.
	It follows from similar arguments as in~\cite[Theorem~2]{nemhauser1975} that $V_1$ is a minimum vertex cover of $G'[V_1\cup V_0]$. 
	(See
	Appendix~\ref{appThresholdBipartiteLemma}.)
	This allows us to apply the famous K\H{o}nig-Egerv\'{a}ry theorem~\cite{SchrijverBook}.
	By the latter there is a matching $h$ mapping each $v \in V_1$ to a distinct $h(v) \in V_0$ with $\{v,h(v)\} \in E$.
	Denoting $S = \{h(v) \mid v \in V_1 \}$, we can infer $\EX[\OPT_{V_1 \cup V_0}] \ge \EX[\OPT_{V_1 \cup S}] + \EX[\OPT_{V_0 \setminus S}]$.
			  
	Any feasible solution must query at least one endpoint of all edges of the form $\{v,h(v)\}$.
	This implies $\EX[\OPT_{V_1 \cup S}] \ge |V_1|$.
	Since additionally $p_{h(v)} \le d$ for each $h(v) \in S$, we get
	\begin{equation}\label{gt_eq_3}
		\sum_{v\in V_1} \left( 1 + p_{h(v)} \right) \le (1+d) \cdot |V_1| \le (1+d) \cdot \EX[OPT_{V_1\cup S}].
	\end{equation}
	By lower bounding $\EX[OPT_{V_0\setminus S}]$ with $\sum_{v\in V_0\setminus S} p_v$ and using \eqref{gt_eq_3}, we get
	\begin{align}\label{gt_eq_4}
			|V_1| + \sum_{v \in V_0} p_v & =  \sum_{v\in V_1} \left(1 + p_{h(v)}\right) + \sum_{v\in V_0\setminus S} p_v\\
			& \le (1+d) \cdot \EX[\OPT_{V_1\cup S}] + \EX[\OPT_{V_0 \setminus S}] 		\notag
			\le  (1+d) \cdot \EX[\OPT_{V_1\cup V_0}].								\notag
	\end{align}
	
	Finally, consider the term $|VC'| + \sum_{v \in V_{1/2} \setminus VC'} p_v$. 
	Let $VC^*$ be a minimum cardinality vertex cover for $G[V_{1/2}]$.
	Then, it holds $|VC^*| \ge \frac{1}{2} \cdot |V_{1/2}|$.
	This is, because in the optimal basic feasible solution to the LP relaxation $x^*$,  
	each vertex in $V_{1/2}$ has a value of $\frac{1}{2}$.
	A vertex cover with $|VC^*| < \frac{1}{2} \cdot |V_{1/2}| $ would contradict the optimality of $x^*$.
	{The following part of the analysis crucially relies on $|VC^*| \ge \frac{1}{2} \cdot |V_{1/2}| $, which is the reason why \threshold executes the LP relaxation-based preprocessing before applying the $\alpha$-approximation.}
			  
	Now, the expected cost of the algorithm for the subgraph $G[V_{1/2}]$ is $|VC'| + \sum_{v\in I'} p_v \le |VC'| + d\cdot |I'|$ with $I' = V_{1/2} \setminus VC'$.
	Since $|VC'| \ge |VC^*| \ge \frac{1}{2} \cdot |V_{1/2}| $, there is a tradeoff between the quality of $|VC'|$ and the additional cost of $d\cdot |I'|$.
	If $|VC'|$ is close to $\frac{1}{2} \cdot |V_{1/2}| $, then it is close to $|VC^*|$ but, on the other hand, $|I'|$ then is close to $\frac{1}{2} \cdot |V_{1/2}| $, which means that the additional cost $d\cdot |I'|$ is high. 
	Vice versa, if the cost for $|VC'|$ is high because it is larger than $\frac{1}{2} \cdot |V_{1/2}| $, then $|I'|$ is close to zero and the additional cost $d\cdot |I'|$ is low.
	We exploit this tradeoff and upper bound $\frac{|VC'| + d\cdot |I'|}{|VC^*|}$ in terms of the approximation factor $\alpha$ of the vertex cover approximation.
	Assume that the approximation factor $\alpha$ is tight, i.e., $|VC'| = \alpha \cdot |VC^*|$.
	Since $d \le 1$, this is the worst case for the ratio $\frac{|VC'| + d\cdot |I'|}{|VC^*|}$.
	(In other words, if the approximation factor was not tight, we
	could replace $\alpha$ by the approximation factor that is actually achieved and carry out the following
	calculations with that smaller value of $\alpha$ instead, yielding an even better bound.)
	Using $|VC'| = \alpha \cdot |VC^*|$ and $|VC^*| \ge \frac{1}{2} \cdot |V_{1/2}| $, we can derive
	\begin{align*}
		|I'| &= |V_{1/2}| - |VC'|
			 = |V_{1/2}| - \alpha \cdot |VC^*| 
			 \le (2-\alpha) \cdot |VC^*|.
	\end{align*}
	For the cost of the algorithm for subgraph $G[V_{1/2}]$ we get
	\begin{align*}
		|VC'| + d\cdot |I'| &\le \alpha \cdot |VC^*| + d \cdot (2-\alpha) \cdot |VC^*|
		= (\alpha + (2-\alpha) \cdot d) \cdot |VC^*|.
	\end{align*}
	Since $|VC^*| \le \EX[\OPT_{V_{1/2}}]$, we get
	\begin{equation}\label{gt_eq_5}
		|VC'| + d\cdot |I'| \le (\alpha + (2-\alpha) \cdot d) \cdot \EX[\OPT_{V_{1/2}}].
	\end{equation}

	Combining Equations~\eqref{eq_thres_alg},~\eqref{gt_eq_2},~\eqref{gt_eq_4} and~\eqref{gt_eq_5}, we can upper bound the cost of the algorithm:
	\begin{align*}
		\EX[ALG] 
		&= |M| + |V_1| + \sum_{v \in V_0} p_v + |VC'| + \sum_{v \in V_{1/2} \setminus VC'} p_v.\\
		&\le  \frac{1}{d} \cdot \EX[\OPT_M] + (1+d) \cdot \EX[\OPT_{V_1 \cup V_0}] + (\alpha + (2-\alpha) \cdot d)) \cdot \EX[\OPT_{V_{1/2}}]\\
		&\le  \max\left\{\frac{1}{d},(1+d),(\alpha + (2-\alpha) \cdot d)\right\} \cdot \EX[\OPT],
	\end{align*}
	where the last inequality follows from the lower bound on $\OPT$ in \eqref{eq:LBonOPT}. Observe that for any $d \in [0,1]$ and $\alpha \in [1,2]$, it holds that $(\alpha + (2-\alpha) \cdot d) \ge (1+d)$. {We conclude with 
	$
	\EX[ALG] \le \max\{\frac{1}{d},(\alpha + (2-\alpha) \cdot d)\} \cdot \EX[\OPT],
	$
	which implies the theorem.}
\end{proof}

In Appendix~\ref{appThreshold_tightness}, we show that the analysis of \threshold is tight.
To benefit from a better approximation factor than $\alpha=2$ for solving the minimum vertex cover problem, we would need to know in advance the specialized graph class on which we want to solve this subproblem. 
In some cases, we can benefit from optimal or approximation algorithms{, e.g., using \threshold with the PTAS for planar graphs~\cite{Bar1982} allows us to achieve a
competitive ratio of at most $1.618 + \epsilon$, for any $\epsilon > 0$, if the input graph is planar.}
\subsection{Orienting hypergraphs}
\label{sec:orienting-hyperedges}
 
\begin{theorem}
	\label{theorem_threshold_hyper}
	Given an $\alpha$-approximation with $1\le\alpha\le 2$ for the vertex cover problem
	(on the induced subgraph $\bar{G}[V_{1/2}]$ of the vertex cover instance given by Definition~\ref{def:vertex_cover_instance}, cf.~Line~\ref{line:threshold_solvevc}),
	a modified version of the \emph{\textsc{Threshold}} algorithm solves the hypergraph orientation problem with arbitrary query costs with competitive ratio
	$$
	R=	
	\frac{1}{2} \left(\alpha + \sqrt{\alpha ^2 + 4 (2-\alpha) (1 + \alpha \epsilon + (2- \alpha)\epsilon^2)} + (4 - 2 \alpha)\epsilon \right)
	$$
	with probability at least $1 - \delta$. Its running time is upper bounded by the complexity of the sampling procedure and the vertex cover black box procedure.
\end{theorem}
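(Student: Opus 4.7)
The plan is to modify Algorithm~\ref{alg:threshold} in two ways for the hypergraph case: (i)~run it on the vertex cover instance $\bar G$ of Definition~\ref{def:vertex_cover_instance} in place of $G$, so that the LP relaxation and the black-box vertex cover approximation are both solved on subgraphs of $\bar G$; and (ii)~replace each exact mandatory probability $p_v$ by an estimate $\tilde p_v$ produced by the sampling routine of Lemma~\ref{lem:sampling} with accuracy $\epsilon$ and failure probability $\delta/|V|$, using $\tilde p_v$ both in the threshold test $\tilde p_v \ge d$ and wherever the analysis refers to $p_v$. A union bound over the $|V|$ sampling calls yields, with probability at least $1-\delta$, the event $\mathcal{E}$ that $|\tilde p_v - p_v|\le \epsilon$ for all $v$; conditioning on $\mathcal{E}$ drives the rest of the analysis, and the running time is dominated by sampling plus the vertex cover black box. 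Feasibility is immediate from the correctness argument for vertex cover-based algorithms given after Definition~\ref{def:vertex_cover_instance}.

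Next I would redo the component-wise comparison from the proof of Theorem~\ref{thm:threshold}. On $\mathcal{E}$ we have $p_v \ge d-\epsilon$ for every $v\in M$ and $p_v \le d+\epsilon$ for every $v\notin M$, which shifts the three key inequalities of the graph proof into
\begin{align*}
c(M) &\le \tfrac{1}{d-\epsilon}\,\EX[\OPT_M],\\
c(V_1)+\sum_{v\in V_0} c_v p_v &\le (1+d+\epsilon)\,\EX[\OPT_{V_1\cup V_0}],\\
c(VC') + \sum_{v\in V_{1/2}\setminus VC'} c_v p_v &\le \bigl(\alpha + (2-\alpha)(d+\epsilon)\bigr)\,\EX[\OPT_{V_{1/2}}].
\end{align*}
Combined with $\EX[\OPT]\ge \EX[\OPT_M]+\EX[\OPT_{V_1\cup V_0}]+\EX[\OPT_{V_{1/2}}]$ from Lemma~\ref{lemma_opt_partitioning}, this gives $\EX[\ALG]\le \max\{\tfrac{1}{d-\epsilon},\,\alpha+(2-\alpha)(d+\epsilon)\}\cdot\EX[\OPT]$.

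For arbitrary query costs and for the transition from $G$ to $\bar G$, I would import the machinery used in the arbitrary-cost graph case (Appendix~\ref{app:threshold_arbitrary_costs}): apply vertex splitting (Observation~\ref{obs:vertex_splitting}) along the König matching in the bipartite subgraph of $\bar G[V_1\cup V_0]$ so that matched pairs become disjoint edges with equal-cost endpoints, yielding the $(1+d+\epsilon)$ bound above; and re-use the Nemhauser--Trotter argument on $\bar G[V_{1/2}]$ to obtain $c(VC^*)\ge \tfrac12 c(V_{1/2})$ in the weighted setting. Finally, equating the two terms in the maximum yields a quadratic in $u=d-\epsilon$ of the form $(2-\alpha)u^2+\bigl(\alpha+2(2-\alpha)\epsilon\bigr)u-1=0$; taking the positive root and rationalizing produces exactly the stated ratio $R$. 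The main obstacle I expect is checking that the König-matching and half-integrality arguments still go through on $\bar G$ rather than on the original graph, in the presence of both arbitrary costs and of hyperedges that share leftmost vertices; once that is verified, the rest is bookkeeping of the $\epsilon$ shifts and routine algebra to recover $R$.
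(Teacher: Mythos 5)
Your proposal is correct and follows essentially the same route as the paper: work on the vertex cover instance $\bar G$, replace $p_v$ by sampled estimates with per-vertex failure probability calibrated so the union bound gives $1-\delta$ overall, shift the three component-wise bounds by $\pm\epsilon$ to obtain $\EX[\ALG]\le \max\{\tfrac{1}{d-\epsilon},\,1+d+\epsilon,\,\alpha+(2-\alpha)(d+\epsilon)\}\EX[\OPT]$, and optimize $d$ (the paper sets $d=1/R+\epsilon$, equivalent to your $u=d-\epsilon$ substitution). The only cosmetic difference is that the paper parameterizes the sampling failure via $(1-\delta')^n=1-\delta$ rather than a union bound with $\delta/|V|$, and that the paper also makes explicit that the second stage must be carried out iteratively for hypergraphs; your algebra recovering $R$ from the quadratic in $u$ is correct and spelled out more fully than in the paper.
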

\begin{proof}

The modified algorithm works with the vertex cover instance $\bar G$ instead 
of the given hypergraph $H$, following
Definition~\ref{def:vertex_cover_instance}. In Line~\ref{line:threshold_addM},
we use $d(\alpha)=1/R +\epsilon$.
In Line~\ref{line:threshold_stage2}, we iteratively query mandatory vertices
until the instance is solved.
In addition, we use a random estimation
$Y_v$ of $p_v$, instead of the precise probability, using the procedure
described in Lemma~\ref{lem:sampling} with parameters $\epsilon$ and $\delta'$
such that $1- \delta = (1 - \delta')^n$. As a result, with probability at
least $1 - \delta$ we have that for every vertex $v$, the estimation $Y_v$ has
absolute error at most~$\epsilon$.  In case of this event we obtain the
following bound on the cost (which is optimized for the chosen value of $d$), namely $\EX[{\textrm{ALG}}] \leq \max \{\frac1{d- \epsilon}, (1+d  +\epsilon), (\alpha + (2 - \alpha)(d + \epsilon))\}  \cdot \EX[\OPT]$.
\end{proof}

Sorting a set of elements is equivalent to determining,
for each pair of elements, which of the two has smaller weight. Hence,
the problem of sorting multiple sets of elements with uncertain weights
is a special case of the graph orientation problem: For each set to be
sorted, the edge set of a complete graph on its elements is added to
a graph, and the resulting instance of the
graph orientation problem is then equivalent to the given instance
of the sorting problem.
In Appendix~\ref{app:constant}, we show the following theorem.
\begin{restatable}{theorem}{theoConstantSets}
	{
 	For the special cases of orienting $\mathcal{O}(\log \ |V|)$ hyperedges and sorting $\mathcal{O}(1)$ sets, \threshold can be applied with $\alpha = 1$ in polynomial running time.
	}
\end{restatable}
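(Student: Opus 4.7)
The plan is to observe that the only step of \threshold whose polynomiality depends on $\alpha$ is the minimum-weight vertex cover computation on the induced subgraph (Line~\ref{line:threshold_solvevc}); all other steps are already polynomial, since the LP in Line~\ref{Line:threshold_LP} is of polynomial size, the mandatory probabilities $p_v$ can be computed exactly for graphs and estimated via Lemma~\ref{lem:sampling} for hypergraphs, and the remaining steps are local bookkeeping. Hence it suffices to show that, in each of the two special cases, the induced subgraph $\bar{G}[V_{1/2}]$ (resp.\ $G[V_{1/2}]$) inherits enough structure to allow an \emph{exact} minimum-weight vertex cover in polynomial time. Once this is done, applying Theorem~\ref{thm:threshold} (or its hypergraph analogue Theorem~\ref{theorem_threshold_hyper}) with $\alpha=1$ gives the competitive ratio $\tfrac{1}{2}(1+\sqrt{5})=\phi$.

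For $k=\mathcal{O}(\log |V|)$ hyperedges, I would appeal to Definition~\ref{def:vertex_cover_instance}: after the preprocessing, each hyperedge $F$ with leftmost vertex $v_1^F$ contributes exactly the star with center $v_1^F$ and leaves $F\setminus\{v_1^F\}$, so $\bar{G}$, and thus $\bar{G}[V_{1/2}]$, is the union of at most $k$ (sub-)stars. Any vertex cover must, for each hyperedge $F$, either include $v_1^F$ or include all non-leftmost vertices of $F$ in $V_{1/2}$. Enumerating all $2^k=\mathrm{poly}(|V|)$ such binary choices, forming the implied candidate cover, and taking the one of minimum total query cost yields an exact minimum-weight vertex cover in polynomial time.

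For sorting $c=\mathcal{O}(1)$ sets, the graph orientation instance is a union of $c$ complete graphs, so $G[V_{1/2}]$ is a union of at most $c$ cliques on $V_{1/2}$. Minimum-weight vertex cover on a union of cliques is complementary to a maximum-weight independent set, and any independent set contains at most one vertex per clique (hence at most $c$ vertices in total). Enumerating all choices of at most one vertex from each clique gives at most $(|V|+1)^c=\mathrm{poly}(|V|)$ candidate independent sets; the heaviest one (with respect to query costs) has as its complement the exact minimum-weight vertex cover.

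The only delicate point to verify is that restricting to $V_{1/2}$ preserves the relevant structure (a union of stars, resp.\ a union of cliques), which is immediate since vertex deletion only shrinks each star/clique and cannot create new edges. Hence I would not expect a real obstacle; the argument is essentially a structural observation combined with a brute-force enumeration that happens to be polynomial thanks to the smallness of $k$ or $c$.
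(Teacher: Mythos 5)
Your argument for the first special case (orienting $\mathcal{O}(\log|V|)$ hyperedges) matches the paper: a vertex cover of $\bar{G}$ must, for each hyperedge $F$, include the leftmost vertex or all of $F\setminus\{v_1^F\}$, and the $2^k$ choices are enumerable in polynomial time.

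Your argument for the second special case (sorting $\mathcal{O}(1)$ sets) has a genuine gap: the graph is \emph{not} a union of $c$ cliques. Although sorting a set $S_i$ would naively contribute a complete graph on $S_i$, the preprocessing described in Section~\ref{sec:preliminaries} deletes every edge $\{u,v\}$ with $I_u\cap I_v=\emptyset$ (such pairs are already comparable and need no query), so what actually survives is the \emph{interval graph} $G_i$ on $S_i$. After the further assumption that no interval contains another, each $G_i$ is a proper interval graph, which can have a linear-size independent set (e.g.\ a long path). Consequently the independent-set enumeration you propose is exponential, and indeed minimum vertex cover on a union of two interval graphs is not a trivial problem. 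The paper handles this with a two-step construction that is genuinely different from yours: first a local-ratio clique-removal (Algorithm~\ref{alg:clique_removal}) strips all cliques of size $\ge 3$, at a charged cost of at most a $3/2$ factor against $\OPT$, leaving each $G_i$ a proper \emph{triangle-free} interval graph, i.e.\ a disjoint union of paths; then a dynamic program (Lemma~\ref{lemma_vc_constant_width}) over states of size $O(n^k 2^k)$ computes an exact minimum-weight vertex cover on the union of $k$ path graphs. This is why the paper speaks of a ``modified version of \threshold'' rather than a plain invocation with $\alpha=1$: the final competitive ratio is $\max\{3/2,\ \phi\}=\phi$, where the $3/2$ comes from the clique-removal charging, not from the vertex-cover black box. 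You would need to incorporate both the clique-removal preprocessing and the union-of-paths DP (or an alternative polynomial vertex-cover routine for unions of constantly many interval graphs) to make your proof go through.
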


\section{Vertex cover-based algorithms: improved results for special cases}
\label{sec:VC-alg}

Consider an arbitrary vertex cover-based algorithm \ALG. 
It queries a vertex cover $VC$ in the first stage, and continues with elements of $V \setminus VC$ if they are mandatory. Thus,
	\begin{align*}
	\EX[\ALG] &= c(VC) + \sum_{v\in V\setminus VC} p_v \cdot c_v = \sum_{v\in VC} \left(p_v \cdot c_v + (1-p_v) \cdot c_v \right) + \sum_{v\in V\setminus VC} p_v \cdot c_v\\
	&= \sum_{v\in V} p_v \cdot c_v + \sum_{v \in VC} (1-p_v) \cdot c_v.
	\end{align*}
{Since the first term is independent of $VC$, \ALG\ is the best possible vertex cover-based algorithm 
if it minimizes $\sum_{v \in VC} (1-p_v) \cdot c_v$. We refer to this algorithm as \bestVC. 
}

To implement \bestVC, we need the exact value $p_v$, for all $v\in V$, and an optimal algorithm for computing a weighted vertex cover.
As mentioned in Section~\ref{sec:preliminaries},
the first problem is \#P-hard in hypergraphs (Appendix~\ref{app:sharpP}),
but it 
can be solved exactly in polynomial time for graphs.
{The weighted vertex cover problem can be solved optimally in polynomial time for bipartite graphs.}

In general, \bestVC  has competitive ratio at least $1.5$ (\Cref{theorem:VC_based_LB}). 
However, we show in the following that it is $4/3$-competitive for two special cases. 
{It remains open whether \bestVC still outperforms \threshold if the vertex cover is only approximated with a factor $\alpha > 1$.}


\subsection{A best possible algorithm for orienting bipartite graphs}

\begin{theorem}\label{Theorem:vertex_cover_bipartite}
	\emph{\bestVC} is $\frac{4}{3}$-competitive for the bipartite graph orientation problem.
\end{theorem}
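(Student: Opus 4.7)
The plan is to compare $\EX[\bestVC]$ with a lower bound on $\EX[\OPT]$ obtained via LP duality on the bipartite graph, vertex splitting, and a per-edge analysis that exploits independence from bipartiteness, and then to verify the desired ratio through a short per-vertex charging argument.

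First, as established at the start of Section~\ref{sec:VC-alg}, \bestVC computes a minimum-weight vertex cover $VC^*$ with weights $(1-p_v)c_v$, and $\EX[\bestVC] = \sum_v p_v c_v + \sum_{v \in VC^*}(1-p_v)c_v$. Because $G$ is bipartite, the vertex-cover LP with weights $(1-p_v)c_v$ is integral, so LP duality yields a fractional matching $y^* \geq 0$ with $\sum_{e \ni v} y_e^* \leq (1-p_v)c_v$ for every $v$ and $\sum_e y_e^* = \sum_{v \in VC^*}(1-p_v)c_v$.

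Next, invoking \Cref{obs:vertex_splitting}, I split each vertex $v$ with $p_v < 1$ into one copy $v_e$ of cost $c_{v_e} = y_e^*/(1-p_v)$ for every incident edge $e$, plus a residual isolated copy $v_0$ of cost $c_{v_0} = c_v - \sum_{e \ni v} c_{v_e} \geq 0$; this is feasible because $\sum_{e \ni v} c_{v_e} \leq c_v$ by the LP constraint, and vertices with $p_v = 1$ are always mandatory and absorbed into $\sum_v p_v c_v$. The resulting instance is a disjoint union of edges $\{u_e, v_e\}$ and isolated copies. Applying \Cref{lemma_opt_partitioning} with the corresponding partition gives $\EX[\OPT] \geq \sum_e \EX[\OPT'_{\{u_e, v_e\}}] + \sum_v p_v c_{v_0}$. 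For each edge pair, I use \Cref{obs:verification} together with the crucial observation that in a bipartite graph the events ``$u$ is mandatory'' and ``$v$ is mandatory'' depend on weights of vertices on opposite sides of the bipartition and are therefore independent, so
\begin{equation*}
\EX[\OPT'_{\{u_e, v_e\}}] = p_u c_{u_e} + p_v c_{v_e} + (1-p_u)(1-p_v)\min(c_{u_e}, c_{v_e}).
\end{equation*}
A short calculation with the chosen costs gives $(1-p_u)(1-p_v)\min(c_{u_e}, c_{v_e}) = (1-\max(p_u,p_v))\,y_e^*$, and telescoping the $p_v c_{v_e}$ and $p_v c_{v_0}$ terms yields
\begin{equation*}
\EX[\OPT] \;\geq\; \sum_v p_v c_v \;+\; \sum_e (1-\max(p_u,p_v))\,y_e^*.
\end{equation*}

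The target inequality $\EX[\bestVC] \leq \tfrac{4}{3}\EX[\OPT]$ now reduces algebraically to
\begin{equation*}
\sum_e \bigl(4\max(p_u,p_v) - 1\bigr)\,y_e^* \;\leq\; \sum_v p_v c_v,
\end{equation*}
which I regard as the main obstacle. Edges whose larger endpoint has $p \leq 1/4$ contribute non-positively and may be discarded. For each remaining edge $e$, assign it to its endpoint $v^*(e)$ with larger $p$-value (breaking ties arbitrarily). For any fixed $v$ with $p_v > 1/4$, the LP constraint gives $\sum_{e:\,v^*(e)=v} y_e^* \leq (1-p_v)c_v$, while the elementary identity $(2p-1)^2 \geq 0$ rearranges to $(4p-1)(1-p) \leq p$ for all $p \in [0,1]$. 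Combining these two bounds shows that the total contribution of the edges assigned to $v$ is at most $p_v c_v$; summing over such $v$ completes the proof. Once this charging combined with the $(2p-1)^2 \geq 0$ estimate is identified, all remaining steps are routine algebra, and the analysis is tight since it matches the lower bound from \Cref{thm:generalLB43}.
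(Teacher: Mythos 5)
Your proof is correct and genuinely different from the paper's. The paper also begins with vertex splitting driven by an optimal fractional matching for the weighted bipartite vertex cover, but it decomposes the split instance into \emph{stars} centered at the cover vertices and then invokes a separate, fairly long analytic lemma (Lemma~\ref{lemma:vertex_cover_weigthed_star}) about star instances in which the two candidate vertex covers have equal expected cost; that lemma's proof goes through conditional expectations and a calculus argument over a two-parameter family. You instead split all the way down to a disjoint union of single edges $\{u_e, v_e\}$, use the bipartiteness-induced independence of the two endpoints' mandatory events to write $\EX[\OPT'_{\{u_e,v_e\}}]$ in closed form, and then finish with an elementary per-vertex charging that reduces to $(2p-1)^2 \geq 0$. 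I checked the key steps: the closed form $p_u c_{u_e} + p_v c_{v_e} + (1-p_u)(1-p_v)\min(c_{u_e},c_{v_e})$ is exactly what Equation~\eqref{eq_opt_partitioning_1} gives for a two-vertex, one-edge subgraph; the identity $(1-p_u)(1-p_v)\min(c_{u_e},c_{v_e}) = (1-\max(p_u,p_v))\,y_e^*$ follows from $c_{v_e}=y_e^*/(1-p_v)$; the target inequality and the charging via $\sum_{e\ni v} y_e^* \le (1-p_v)c_v$ and $(4p-1)(1-p)\le p$ all go through; and the independence claim is sound because in a graph $v$ is mandatory iff some neighbor's weight lands in $I_v$, so for a bipartite edge the two events are measurable with respect to disjoint sides and hence independent (and this carries to the split copies by Observation~\ref{obs:vertex_splitting}). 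The net effect is a substantially shorter and more elementary argument that entirely avoids the star lemma, at no loss: your bound is tight against the same $4/3$ example as the paper's, since that three-vertex instance is bipartite.
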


\begin{proof}
In Appendix~\ref{subsec:star:subproblem}, we show that \bestVC is $\frac{4}{3}$-competitive for the problem of orienting stars if both vertex cover-based algorithms (either querying the leaves or the center first) have the same expected cost.
In this proof, we divide the instance into subproblems that fulfill these requirements, and use the result for stars to infer $\frac{4}{3}$-competitiveness for bipartite graphs.

Let $VC$ be a minimum-weight vertex cover (with weights $c_v \cdot (1-p_v)$) as computed by \bestVC in the first phase.
By the K\H{o}nig-Egerv\'{a}ry theorem (e.g.,~\cite{SchrijverBook}), there is
a function $\pi:E \rightarrow \mathbb{R}$ with $\sum_{\{u,v\} \in E} \pi(u,v) \le c_v \cdot (1-p_v)$ for each $v \in V$.
By duality theory, 
the constraint is tight for each $v \in VC$, and $\pi(u,v) = 0$ holds if both $u$ and $v$ are in $VC$. 
Thus, we can interpret $\pi$ as a function that distributes the 
weight of each $v \in VC$ to its neighbors outside of $VC$.

For each $v \in VC$ and $u \in V \setminus VC$, let $\lambda_{u,v} := \frac{\pi(u,v)}{(1-p_u) \cdot c_u}$ denote the fraction of the weight of $u$ that is used by $\pi$ to cover the weight of $v$.
Moreover, for $u \in V \setminus VC$, let $\tau_u := 1 - \sum_{\{u, v\} \in E} \lambda_{u,v}$ be the fraction of the weight of $u$ that is not used by $\pi$ to cover the weight of any $v \in VC$.
Then, we can write the expected cost of \bestVC as follows:
\begin{equation}
\EX[\bestVC] = \sum_{v\in VC} \bigg(c_v+\sum_{u\in V\setminus VC} p_u \cdot \lambda_{uv} \cdot c_u \bigg) + \sum_{u\in V\setminus VC} p_u \cdot \tau_u \cdot c_u .
\end{equation}
Using Observation~\ref{obs:vertex_splitting}, we compare $\EX[\bestVC]$ with the expected optimum $\EX[\OPT']$ for an instance $G'=(V',E',c')$ that is created by splitting vertices. 
We modify the mandatory distribution as 
in Section~\ref{sec:lower-bounds-on-OPT}, which implies $\EX[\OPT] = \EX[\OPT']$. 
We add the following copies:
\begin{enumerate}
	\item For each $v \in VC$, we add a copy $v'$ of $v$ to $V'$ with $c'_{v'} = c_v$.
	\item For all $u \in V\setminus VC$ and $v \in VC$ with $\lambda_{u,v} > 0$,
	we add a copy $u_v'$ of $u$ to $V'$ with $c'_{u_v'} = \lambda_{u,v} \cdot c_u$.
	\item For each $u \in V\setminus VC$ with $\tau_u > 0$, we add a copy $u'$ of $u$ to $V'$ with $c'_{u'} = \tau_u \cdot c_u$.
\end{enumerate}
Let $p'_{v}$ denote the probability of $v$ being mandatory for instance $G'$.
By definition of the vertex split operation, we have $p_v = p'_{u}$ for each $v \in V$ and each copy $u \in V'$ of $v$.
Further, for each $v \in VC$, define $H'_v = \{u'_v \mid u \in V \text{ with } \lambda_{u,v} > 0\}$. 
By definition of $H'_v$ and $G'$, all $H'_v$ are pairwise disjoint. 
Let $\mathcal{H}' = \bigcup_{v \in VC} \left(H'_v \cup \{v'\}\right)$; then we can express $\EX[\bestVC]$ as follows:
\begin{align}
\EX[\bestVC] &= \sum_{v\in VC} \bigg(c_v+\sum_{u\in V\setminus VC} p_u \cdot \lambda_{u,v} \cdot c_u \bigg) + \sum_{u\in V\setminus VC} p_u \cdot \tau_u \cdot c_u \nonumber \\
&=  \sum_{v\in VC} \bigg(c'_{v'}+\sum_{u\in H'_v} p_u' \cdot c'_u \bigg) + \sum_{u \in V' \setminus \mathcal{H'}} p'_u \cdot c'_u. \nonumber 
\end{align}
Similarly, we can lower bound $\EX[\OPT']$ using Lemma~\ref{lemma_opt_partitioning}:
\begin{equation}\label{eq_vc_opt_lb}
	\smash{
	\EX[\OPT'] \ge \hspace*{-0.5ex} \sum_{v\in VC} \EX[\OPT'_{H'_v\cup \{v'\}}] + \hspace*{-2ex}\sum_{u \in V' \setminus \mathcal{H'}} \hspace*{-1ex}\EX[\OPT'_{\{u\}}]
	\ge  \sum_{v\in VC} \EX[\OPT'_{H'_v\cup \{v'\}}] + \hspace*{-2ex} \sum_{u \in V' \setminus \mathcal{H'}} \hspace*{-1ex} p_u' \cdot c'_u.}\hspace*{1.5ex}
\end{equation}
Since the term $\sum_{u \in V' \setminus \mathcal{H'}} p_u' \cdot c'_u$ shows up in both inequalities, it remains, for each $v\in VC$, to bound $c'_{v'}+\sum_{u\in H'_v} p_u' \cdot c'_u$ in terms of  $\EX[\OPT'_{H'_v\cup \{v'\}}]$.
By definition of $H'_v$, we have $ (1-p'_{v'}) \cdot c'_{v'} = \sum_{u \in H_v'} (1-p_u') \cdot c'_u$, which implies
\begin{equation}\label{eq_secVC_1}
c'_{v'}+\sum_{u\in H'_v} p'_{u} \cdot c'_u = p'_{v'} \cdot c'_{v'} + \sum_{u\in H'_v} c'_u.
\end{equation}
The value $\EX[\OPT'_{H'_v \cup \{v'\}}]$ corresponds to the expected optimum for the subproblem which considers the subgraph induced by $H'_v \cup \{v'\}$ (which is a star), uses $p'_u$ as the mandatory probability for each $u \in H'_v \cup \{v'\}$, and uses $c'_u$ as the query cost of each $u \in H'_v \cup \{v'\}$.
For this subproblem, $c'_{v'}+\sum_{u\in H'_v} p'_{u} \cdot c'_u $ corresponds to the expected cost of the vertex cover-based algorithm that queries vertex cover $\{v'\}$ in the first stage.
Furthermore, $ p'_{v'} \cdot c'_{v'} + \sum_{u\in H'_v} c'_u$ corresponds to the expected cost of the vertex-cover based algorithm that queries vertex cover $H'_v$ in the first stage. 
In summary, we have a star orientation subproblem for which both vertex cover-based algorithms (querying the leaves or the center first) have the same expected cost (cf. Equation~\eqref{eq_secVC_1}).
As a major technical step, we show that \bestVC is $\frac{4}{3}$-competitive for such subproblems, which implies
$$c'_{v'}+\sum_{u\in H'_v} p'_u \cdot c'_u \le \frac{4}{3} \cdot \EX[\OPT'_{H'_v \cup \{v'\}}].$$  
A corresponding lemma is proven in \cref{subsec:star:subproblem}.
We remark that the lemma requires $p'_{v'}$ to be independent of each $p'_{u}$ with $u \in  H'_v$; otherwise the subproblem does not correspond to the star orientation problem. 
As the input graph is bipartite, such independence follows by definition.

Using this inequality and Equation~\eqref{eq_vc_opt_lb}, we conclude that \bestVC is $4/3$-competitive.
\begin{align*}
\EX[\bestVC] &= \sum_{v\in VC} \bigg(c'_{v'}+\sum_{u\in H'_v} p'_u \cdot c'_u \bigg) + \sum_{u \in V' \setminus \mathcal{H'}} p'_u \cdot c'_u \\
&\le \frac{4}{3} \cdot \sum_{v\in VC} \EX[\OPT'_{H'_v\cup\{v'\}}] + \sum_{u \in V' \setminus \mathcal{H'}} p'_u \cdot c'_u 
\le \frac{4}{3} \cdot \EX[\OPT'] = \frac{4}{3} \cdot \EX[\OPT] \qedhere
\end{align*}
\end{proof}


\subsection{An almost best possible algorithm for orienting a hyperedge }

\begin{restatable}[]{theorem}{theoremVCSingleSet}
\label{thm:VC-single-set}
\emph{\bestVC} has a competitive ratio at most $\min\{\frac{4}{3},\frac{n+1}{n}\}$ for the hypergraph orientation problem on a single hyperedge with $n\geq 2$ vertices and uniform query costs.
For a hyperedge with only two vertices, the algorithm is $1.207$-competitive.
\end{restatable}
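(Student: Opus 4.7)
The proof plan has three parts: a direct analysis for $n=2$, closed-form expressions for $\mathbb{E}[\bestVC]$ and $\mathbb{E}[\OPT]$ for general $n$, and the ratio bounds.

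\textbf{Part 1 ($n=2$).} The VC instance $\bar{G}$ is the single edge $\{v_1,v_2\}$. By independence of the weight distributions, $p_{v_1}=\mathbb{P}[w_{v_2}\in I_{v_1}]$ and $p_{v_2}=\mathbb{P}[w_{v_1}\in I_{v_2}]$; denote these by $b$ and $a$. WLOG $b\ge a$, so \bestVC\ queries $v_1$ first and then $v_2$ iff $w_{v_1}\in I_{v_2}$, giving $\mathbb{E}[\bestVC]=1+a$. The optimum makes two queries iff both vertices are mandatory, which by independence happens with probability $ab$, so $\mathbb{E}[\OPT]=1+ab$. The ratio $(1+a)/(1+ab)$ is non-decreasing in $a$ for fixed $b\ge a$, so it is maximized at $b=a$; optimizing $(1+a)/(1+a^2)$ by calculus gives $a=\sqrt 2-1$ and the ratio $(1+\sqrt 2)/2\approx 1.207$.

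\textbf{Part 2 (general $n$, formulas).} The VC instance $\bar{G}$ is a star centered at $v_1$ with leaves $v_2,\ldots,v_n$, so \bestVC\ picks either $VC_1:=\{v_1\}$ (Case A) or $VC_2:=\{v_2,\ldots,v_n\}$ (Case B). Using the paper's claim that every stage-2 query is mandatory together with a matching converse (every mandatory leaf must be queried by the left-to-right rule), we get the equality $\mathbb{E}[\bestVC]=|VC|+\sum_{v\notin VC}p_v$, yielding $1+S$ in Case A and $(n-1)+p_{v_1}$ in Case B, where $S:=\sum_{i\ge 2}p_{v_i}$. Observation~\ref{obs:verification} combined with the star structure of $\bar{G}$ gives $c(VC_M)=\mathbf{1}[v_1\notin M \text{ and } \exists i: v_i\notin M]$, so
\[
\mathbb{E}[\OPT]=\sum_v p_v+\mathbb{E}[c(VC_M)] = 1+S-P, \qquad P:=\mathbb{P}[v_1\notin M\text{ and }\forall i\ge 2:v_i\in M].
\]

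\textbf{Part 3 (ratio bounds).} In Case A the ratio equals $(1+S)/(1+S-P)$, so we need $P\le (1+S)/(n+1)$ for the $(n+1)/n$ bound and a corresponding bound for $4/3$. The key step is a tight bound on $P$, using two ingredients: (i) by Lemma~\ref{lemma:mandatory}, the event defining $P$ forces every leaf weight to exceed $r_{v_1}$ (otherwise some $w_{v_i}\in I_{v_1}$ and $v_1$ would be mandatory), which by independence gives $P\le \prod_{i\ge 2}\mathbb{P}[w_{v_i}\ge r_{v_1}]$; (ii) the Case A selection criterion $S\le n-2+p_{v_1}$, which limits how large $S$ (and hence $P$) can be. Case B is handled symmetrically, with the analogous per-instance identity $\mathbb{E}[\bestVC]-\mathbb{E}[\OPT] = \sum_{i\ge 2}(1-p_{v_i})-(1-p_{v_1})+P$, now combined with the Case B criterion.

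The main obstacle is exactly the tight bound on $P$: a generic Markov-style bound $P\le S/(n-1)$ only yields competitive ratio $n/(n-1)$, which is strictly weaker than $(n+1)/n$. The refinement requires exploiting the geometric constraint that all leaf weights exceed $r_{v_1}$ in the bad event together with the algorithm's VC selection criterion, so that the slack in OPT (captured by the $-P$ term in $\mathbb{E}[\OPT]$) exactly compensates the extra queries made by \bestVC.
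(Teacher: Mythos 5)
Your Part~1 (the two-vertex case) reproduces the paper's calculation and is correct, modulo a slip in the wording: the relevant monotonicity is that $(1+a)/(1+ab)$ is \emph{decreasing in $b$} for fixed $a$, which is what forces the worst case to $b=a$; the phrase ``non-decreasing in $a$'' does not justify that conclusion, although the result is right.

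The rest of the proposal, however, has a genuine gap that you yourself flag. You set up the clean identities $\EX[\bestVC]=|VC|+\sum_{v\notin VC}p_v$ and $\EX[\OPT]=1+S-P$ with $P=\mathbb{P}[v_1\notin M\text{ and }\forall i\ge 2: v_i\in M]$, which are correct for a star with uniform costs. But the proof then hinges entirely on a tight upper bound on $P$ (something like $P\le(1+S)/(n+1)$ in Case~A, and an analogous inequality in Case~B), and you explicitly state that you do not know how to prove it --- you note that the naive bound $P\le S/(n-1)$ only yields $n/(n-1)$, which is too weak, and then describe the ingredients that ``would be needed'' without carrying them out. A plan whose declared ``main obstacle'' is unresolved is not a proof, and it is not clear that the proposed combination of ingredient~(i) (all leaf weights exceed $r_{v_1}$) and ingredient~(ii) (the VC-selection criterion) actually yields the required inequality.

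It is also worth noting that your unified $P$-based approach is quite different from what the paper does, and the paper's route is substantially simpler where it is simple. For hyperedges with at least four vertices the paper does not work with expectations at all: it argues \emph{per realization} that either $\OPT$ queries the leftmost vertex $v_0$, in which case $L$ and $\OPT$ coincide, or $\OPT$ queries all $n$ leaves, in which case $L$ queries at most $n+1$ vertices; hence $L\le\frac{n+1}{n}\OPT$ pointwise. This sidesteps any need to bound $P$. The only place where a refined probabilistic argument is needed is the three-vertex hyperedge, and there the paper resorts to a direct, somewhat brute-force algebraic case analysis in three parameters $p,q,r$ (after normalizing away two further parameters), verifying $\min\{\EX[L],\EX[R]\}\le\frac43\EX[\OPT]$ by a sequence of polynomial inequalities. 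Your framework could in principle subsume both cases in one shot, which would be more elegant, but as written it only reduces the theorem to an unproven inequality about $P$, whereas the paper's case split yields a complete --- if less uniform --- argument. If you want to pursue your route, the missing step is precisely to establish, for each VC-selection regime, the required bound on $P$; I would encourage you to first test whether it even holds on the three-vertex lower-bound configurations in the paper before attempting a general proof.
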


The proof is in Appendix~\ref{subsec:proof-vc-single-set}. 
Our analysis improves upon a ${(n+1)}/{n}$-competitive algorithm by Chaplick et al.~\cite{chaplick20stochasticLATIN} in  case that the hyperedge has two or three vertices. Moreover, we show that this is near-optimal:
It is not hard to show a matching lower bound for two vertices and, due to Theorem~\ref{thm:generalLB43} and the theorem below, this is the best possible for three vertices, and in general the difference between the upper and lower bounds is less than 4\%.

\begin{restatable}[]{theorem}{theoremLBSingleSetUniform}
\label{thm:lb1setuniform}
Any algorithm for orienting a single hyperedge with $n+1 \geq 2$ vertices has competitive ratio at least ${n^2}/({n^2-n+1})$, even for uniform query costs.
\end{restatable}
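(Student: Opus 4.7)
I would generalize the adversarial instance from the proof of Theorem~\ref{thm:generalLB43} from three to $n+1$ vertices. Take a single hyperedge on $\{x, y_1, \ldots, y_n\}$ with uniform query costs, intervals $I_x = (0, 2)$ and $I_{y_i} = (1, 3)$ for each $i$, and continuous distributions with $\mathbb{P}[w_x \in (0, 1]] = 1/n$ and $\mathbb{P}[w_{y_i} \in (1, 2)] = \epsilon$ for each $i$, where $\epsilon \to 0$.

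First I would compute $\EX[\OPT]$ via Lemma~\ref{lemma:mandatory} and Observation~\ref{obs:verification}, splitting the realizations into three cases. If $w_x \leq 1$ then $x$ is the minimum, no $y_i$ is mandatory, and $\{x\}$ is an optimal query set, so $\OPT = 1$. If $w_x \in (1, 2)$ and every $w_{y_i} \geq 2$, then $x$ is still the minimum but Lemma~\ref{lemma:mandatory}(ii) now renders every $y_i$ mandatory while $x$ itself is not, giving $\OPT = n$. Otherwise $w_x \in (1,2)$ and some $w_{y_i} \in (1,2)$, and one checks that every vertex is mandatory, yielding $\OPT = n+1$. Summing contributions, $\EX[\OPT] = \tfrac{1}{n} + (1 - \tfrac{1}{n})\bigl[n(1-\epsilon)^n + (n+1)(1 - (1-\epsilon)^n)\bigr]$, which tends to $(n^2 - n + 1)/n$ as $\epsilon \to 0$.

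Next I would lower-bound $\EX[\ALG]$ for any deterministic algorithm; by Yao's principle this extends to randomized algorithms, and by symmetry of the $y_i$ one may restrict attention to algorithms that treat all $y_i$ identically. Such an algorithm, always querying a provably mandatory vertex first (so that $x$ is queried as soon as some $y_{k+1}$ with $w_{y_{k+1}} \in (1,2) \subset I_x$ is discovered), only passes through states parameterized by the number $k$ of $y_i$'s queried so far with all observed weights $\geq 2$ and with $x$ still unqueried. Let $G(k)$ denote the minimum expected total cost from such a state. Then $G(n) = n$, since $w_x < 2 \leq w_{y_i}$ certifies $x$ as the minimum without another query. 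For $k < n$ the algorithm chooses between querying $x$, which yields expected total $q(k+1) + (1-q)(n+1) = n + k/n$ with $q = 1/n$ (since $w_x > 1$ forces the remaining $n-k$ $y_i$'s to be queried), and querying another $y_i$, which yields $(1-\epsilon)\,G(k+1) + \epsilon\bigl(q(k+2) + (1-q)(n+1)\bigr) \to G(k+1)$ as $\epsilon \to 0$. A backward induction then gives $G(k) \to n$ for every~$k$, and starting by querying $x$ also yields $q + (1-q)(n+1) = n$, so $\EX[\ALG] \geq n$ in the limit.

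Combining the two bounds gives $\EX[\ALG]/\EX[\OPT] \geq n \cdot n/(n^2 - n + 1) = n^2/(n^2 - n + 1)$, as required. The main obstacle is the backward induction: one must verify that the $\epsilon$-weighted sub-branch in the second option (where the newly queried $y_{k+1}$ falls in $(1,2)$, forcing a mandatory query of $x$ and, when $w_x \in (1,2)$, queries to all remaining $y_i$'s) contributes only $O(\epsilon)$ and does not let the algorithm beat~$n$. Justifying the reduction to symmetric deterministic algorithms via Yao plus symmetrization of the $y_i$'s is routine but should be stated explicitly.
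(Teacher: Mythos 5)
Your proposal is correct and uses exactly the same adversarial instance as the paper (up to renaming: your $x, y_1,\ldots,y_n$ are the paper's $e_0, e_1,\ldots,e_n$, with identical intervals, costs, and probability masses), and your computation of $\EX[\OPT]$ agrees term-by-term with the paper's $n - \frac{n-1}{n}(1-\epsilon)^n$. The only real difference is in how the lower bound $\EX[\ALG]\ge n$ is established: the paper cites the decision-tree characterization of Chaplick et al.\ to reduce to two candidate trees (query $e_0$ first, or query $e_1,\ldots,e_n$ adaptively), and evaluates the second via a forward recurrence $A(k)$; you instead carry out a self-contained backward-induction / MDP argument on the value function $G(k)$ over states where $k$ of the $y_i$'s have been queried with weights $\ge 2$ and $x$ is unqueried, showing $G(k)\ge n$ for all $k$. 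These are equivalent; your version has the small advantage of not outsourcing the decision-tree structure to a citation, at the cost of having to justify the two WLOG reductions (query provably mandatory vertices immediately; by i.i.d.\ symmetry of the $y_i$'s the order in which they are queried is irrelevant, so the state is captured by $k$), which you flag but should spell out if writing this in full.
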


Note that Theorem~\ref{thm:VC-single-set} is in contrast to the problem of orienting a hyperedge with arbitrary query costs:
In this setting,~\cite{chaplick20stochasticLATIN} showed that the algorithm is $1.5$-competitive, which matches the corresponding lower bound for vertex cover-based algorithms of Theorem~\ref{theorem:VC_based_LB}.

\section{Conclusion}
\label{sec:conclusion}

In this paper, we present algorithms for the (hyper)graph orientation problem under stochastic explorable uncertainty.
It remains open to determine the competitive ratio of \bestVC for the general (hyper)graph orientation problem, and to investigate how the algorithm behaves if it has to rely on an $\alpha$-approximation to solve the vertex cover subproblem.
In this context, one can consider the resulting algorithm as a standalone algorithm, or as a subroutine for \threshold.
Our analysis suggests that, to achieve a competitive ratio better than~$1.5$, algorithms have to employ more adaptivity; exploiting this possibility remains an open problem.  
Finally, it would be interesting to characterize the vertex cover instances arising in our \threshold algorithm. In addition to the relevance from a combinatorial point of view, such a characterization may allow an improved $\alpha$-approximation algorithm for those instances.

\bibliography{graph-explore}

\begin{thebibliography}{10}

\bibitem{AdamczykSW16}
Marek Adamczyk, Maxim Sviridenko, and Justin Ward.
\newblock Submodular stochastic probing on matroids.
\newblock {\em Math. Oper. Res.}, 41(3):1022--1038, 2016.

\bibitem{albersE2020}
Susanne Albers and Alexander Eckl.
\newblock Explorable uncertainty in scheduling with non-uniform testing times.
\newblock In {\em WAOA}, 2020.

\bibitem{arantes18schedulingqueries}
Luciana Arantes, Evripidis Bampis, Alexander~V. Kononov, Manthos Letsios,
  Giorgio Lucarelli, and Pierre Sens.
\newblock Scheduling under uncertainty: A query-based approach.
\newblock In {\em IJCAI 2018: 27th International Joint Conference on Artificial
  Intelligence}, pages 4646--4652, 2018.
\newblock \href {https://doi.org/10.24963/ijcai.2018/646}
  {\path{doi:10.24963/ijcai.2018/646}}.

\bibitem{Assadi2021}
Sepehr Assadi, Deeparnab Chakrabarty, and Sanjeev Khanna.
\newblock Graph connectivity and single element recovery via linear and {OR}
  queries.
\newblock In {\em ESA 2021: 29th Annual European Symposium on Algorithms},
  volume 204 of {\em LIPIcs}. Schloss Dagstuhl - Leibniz-Zentrum f{\"{u}}r
  Informatik, 2021.

\bibitem{AssadiKL19}
Sepehr Assadi, Sanjeev Khanna, and Yang Li.
\newblock The stochastic matching problem with (very) few queries.
\newblock {\em {ACM} Trans. Economics and Comput.}, 7(3):16:1--16:19, 2019.

\bibitem{BansalGLMNR12}
Nikhil Bansal, Anupam Gupta, Jian Li, Juli{\'{a}}n Mestre, Viswanath Nagarajan,
  and Atri Rudra.
\newblock When {LP} is the cure for your matching woes: Improved bounds for
  stochastic matchings.
\newblock {\em Algorithmica}, 63(4):733--762, 2012.

\bibitem{BansalN15}
Nikhil Bansal and Viswanath Nagarajan.
\newblock On the adaptivity gap of stochastic orienteering.
\newblock {\em Math. Program.}, 154(1-2):145--172, 2015.

\bibitem{baryehuda04}
Reuven Bar{-}Yehuda, Keren Bendel, Ari Freund, and Dror Rawitz.
\newblock Local ratio: {A} unified framework for approxmation algrithms. {In}
  memoriam: {Shimon} {Even} 1935-2004.
\newblock {\em {ACM} Comput. Surv.}, 36(4):422--463, 2004.
\newblock \href {https://doi.org/10.1145/1041680.1041683}
  {\path{doi:10.1145/1041680.1041683}}.

\bibitem{Bar1982}
Reuven Bar{-}Yehuda and Shimon Even.
\newblock On approximating a vertex cover for planar graphs.
\newblock In Harry~R. Lewis, Barbara~B. Simons, Walter~A. Burkhard, and
  Lawrence~H. Landweber, editors, {\em Proceedings of the 14th Annual {ACM}
  Symposium on Theory of Computing, May 5-7, 1982, San Francisco, California,
  {USA}}, pages 303--309. {ACM}, 1982.
\newblock \href {https://doi.org/10.1145/800070.802205}
  {\path{doi:10.1145/800070.802205}}.

\bibitem{Beame2018}
Paul Beame, Sariel Har{-}Peled, Sivaramakrishnan~Natarajan Ramamoorthy, Cyrus
  Rashtchian, and Makrand Sinha.
\newblock Edge estimation with independent set oracles.
\newblock In Anna~R. Karlin, editor, {\em 9th Innovations in Theoretical
  Computer Science Conference, {ITCS} 2018, January 11-14, 2018, Cambridge, MA,
  {USA}}, volume~94 of {\em LIPIcs}, pages 38:1--38:21. Schloss Dagstuhl -
  Leibniz-Zentrum f{\"{u}}r Informatik, 2018.
\newblock \href {https://doi.org/10.4230/LIPIcs.ITCS.2018.38}
  {\path{doi:10.4230/LIPIcs.ITCS.2018.38}}.

\bibitem{BehnezhadFHR19}
Soheil Behnezhad, Alireza Farhadi, MohammadTaghi Hajiaghayi, and Nima Reyhani.
\newblock Stochastic matching with few queries: New algorithms and tools.
\newblock In {\em {SODA}}, pages 2855--2874. {SIAM}, 2019.

\bibitem{BlumDHPSS20}
Avrim Blum, John~P. Dickerson, Nika Haghtalab, Ariel~D. Procaccia, Tuomas
  Sandholm, and Ankit Sharma.
\newblock Ignorance is almost bliss: Near-optimal stochastic matching with few
  queries.
\newblock {\em Oper. Res.}, 68(1):16--34, 2020.

\bibitem{bruce05uncertainty}
Richard Bruce, Michael Hoffmann, Danny Krizanc, and Rajeev Raman.
\newblock Efficient update strategies for geometric computing with uncertainty.
\newblock {\em Theory of Computing Systems}, 38(4):411--423, 2005.
\newblock \href {https://doi.org/10.1007/s00224-004-1180-4}
  {\path{doi:10.1007/s00224-004-1180-4}}.

\bibitem{BubeckC12}
S{\'{e}}bastien Bubeck and Nicol{\`{o}} Cesa{-}Bianchi.
\newblock Regret analysis of stochastic and nonstochastic multi-armed bandit
  problems.
\newblock {\em Foundations and Trends in Machine Learning}, 5(1):1--122, 2012.

\bibitem{chaplick20stochasticLATIN}
Steven Chaplick, Magnu\'{u}s~M. Halld\'{o}rsson, Murilo~S. de~Lima, and Tigran
  Tonoyan.
\newblock Query minimization under stochastic uncertainty.
\newblock In Y.~Kohayakawa and F.~K. Miyazawa, editors, {\em LATIN 2020: 14th
  Latin American Theoretical Informatics Symposium}, volume 12118 of {\em
  Lecture Notes in Computer Science}, pages 181--193. Springer Berlin
  Heidelberg, 2020.

\bibitem{ChenIKMR09}
Ning Chen, Nicole Immorlica, Anna~R. Karlin, Mohammad Mahdian, and Atri Rudra.
\newblock Approximating matches made in heaven.
\newblock In {\em {ICALP} {(1)}}, volume 5555 of {\em Lecture Notes in Computer
  Science}, pages 266--278. Springer, 2009.

\bibitem{Chen2020}
Xi~Chen, Amit Levi, and Erik Waingarten.
\newblock Nearly optimal edge estimation with independent set queries.
\newblock In Shuchi Chawla, editor, {\em Proceedings of the 2020 {ACM-SIAM}
  Symposium on Discrete Algorithms, {SODA} 2020, Salt Lake City, UT, USA,
  January 5-8, 2020}, pages 2916--2935. {SIAM}, 2020.
\newblock \href {https://doi.org/10.1137/1.9781611975994.177}
  {\path{doi:10.1137/1.9781611975994.177}}.

\bibitem{chlebik2007}
Miroslav Chlebik and Janka Chleb{\'\i}kov{\'a}.
\newblock The complexity of combinatorial optimization problems on
  d-dimensional boxes.
\newblock {\em SIAM Journal on Discrete Mathematics}, 21(1):158--169, 2007.

\bibitem{DeanGV08}
Brian~C. Dean, Michel~X. Goemans, and Jan Vondr{\'{a}}k.
\newblock Approximating the stochastic knapsack problem: The benefit of
  adaptivity.
\newblock {\em Math. Oper. Res.}, 33(4):945--964, 2008.

\bibitem{DurrEMM20}
Christoph D{\"{u}}rr, Thomas Erlebach, Nicole Megow, and Julie Mei{\ss}ner.
\newblock An adversarial model for scheduling with testing.
\newblock {\em Algorithmica}, 82(12):3630--3675, 2020.

\bibitem{erlebach14mstverification}
Thomas Erlebach and Michael Hoffmann.
\newblock Minimum spanning tree verification under uncertainty.
\newblock In D.~Kratsch and I.~Todinca, editors, {\em WG 2014: International
  Workshop on Graph-Theoretic Concepts in Computer Science}, volume 8747 of
  {\em Lecture Notes in Computer Science}, pages 164--175. Springer Berlin
  Heidelberg, 2014.
\newblock \href {https://doi.org/10.1007/978-3-319-12340-0_14}
  {\path{doi:10.1007/978-3-319-12340-0_14}}.

\bibitem{ErlebachHL2021}
Thomas Erlebach, Michael Hoffmann, and Murilo~S. de~Lima.
\newblock Round-competitive algorithms for uncertainty problems with parallel
  queries.
\newblock In Markus Bl{\"{a}}ser and Benjamin Monmege, editors, {\em {STACS}
  2021: 38th International Symposium on Theoretical Aspects of Computer
  Science}, volume 187 of {\em LIPIcs}, pages 27:1--27:18. Schloss Dagstuhl -
  Leibniz-Zentrum f{\"{u}}r Informatik, 2021.
\newblock \href {https://doi.org/10.4230/LIPIcs.STACS.2021.27}
  {\path{doi:10.4230/LIPIcs.STACS.2021.27}}.

\bibitem{ErlebachHLMS-arxiv2020}
Thomas Erlebach, Michael Hoffmann, Murilo~S. de~Lima, Nicole Megow, and Jens
  Schl{\"{o}}ter.
\newblock Untrusted predictions improve trustable query policies.
\newblock {\em CoRR}, abs/2011.07385, 2020.

\bibitem{erlebach08steiner_uncertainty}
Thomas Erlebach, Michael Hoffmann, Danny Krizanc, Mat{\'{u}}s Mihal{\'{a}}k,
  and Rajeev Raman.
\newblock Computing minimum spanning trees with uncertainty.
\newblock In {\em STACS'08: 25th International Symposium on Theoretical Aspects
  of Computer Science}, volume~1 of {\em LIPIcs}, pages 277--288. Schloss
  Dagstuhl - Leibniz-Zentrum fuer Informatik, Germany, 2008.
\newblock \href {https://doi.org/10.4230/LIPIcs.STACS.2008.1358}
  {\path{doi:10.4230/LIPIcs.STACS.2008.1358}}.

\bibitem{feder07pathsqueires}
Tom{\'{a}}s Feder, Rajeev Motwani, Liadan O'Callaghan, Chris Olston, and Rina
  Panigrahy.
\newblock Computing shortest paths with uncertainty.
\newblock {\em Journal of Algorithms}, 62(1):1--18, 2007.
\newblock \href {https://doi.org/10.1016/j.jalgor.2004.07.005}
  {\path{doi:10.1016/j.jalgor.2004.07.005}}.

\bibitem{feder03medianqueries}
Tom{\'{a}}s Feder, Rajeev Motwani, Rina Panigrahy, Chris Olston, and Jennifer
  Widom.
\newblock Computing the median with uncertainty.
\newblock {\em SIAM Journal on Computing}, 32(2):538--547, 2003.
\newblock \href {https://doi.org/10.1137/S0097539701395668}
  {\path{doi:10.1137/S0097539701395668}}.

\bibitem{focke20mstexp}
Jacob Focke, Nicole Megow, and Julie Mei{\ss}ner.
\newblock Minimum spanning tree under explorable uncertainty in theory and
  experiments.
\newblock {\em {ACM} J. Exp. Algorithmics}, 25:1--20, 2020.
\newblock \href {https://doi.org/10.1145/3422371} {\path{doi:10.1145/3422371}}.

\bibitem{FrankBook}
Andr\'as Frank.
\newblock {\em Connections in Combinatorial Optimization}.
\newblock Oxford University Press, USA, 2011.

\bibitem{GittinsGW11-book}
John Gittins, Kevin Glazebrook, and Richard Weber.
\newblock {\em Multi-armed Bandit Allocation Indices}.
\newblock Wiley, 2nd edition, 2011.

\bibitem{goerigk15knapsackqueries}
Marc Goerigk, Manoj Gupta, Jonas Ide, Anita Sch{\"{o}}bel, and Sandeep Sen.
\newblock The robust knapsack problem with queries.
\newblock {\em Computers \& Operations Research}, 55:12--22, 2015.
\newblock \href {https://doi.org/10.1016/j.cor.2014.09.010}
  {\path{doi:10.1016/j.cor.2014.09.010}}.

\bibitem{Goldreich2017}
Oded Goldreich.
\newblock {\em Introduction to Property Testing}.
\newblock Cambridge University Press, 2017.
\newblock \href {https://doi.org/10.1017/9781108135252}
  {\path{doi:10.1017/9781108135252}}.

\bibitem{gupta2019markovian}
Anupam Gupta, Haotian Jiang, Ziv Scully, and Sahil Singla.
\newblock The markovian price of information.
\newblock In {\em International Conference on Integer Programming and
  Combinatorial Optimization}, pages 233--246. Springer, 2019.

\bibitem{GuptaKNR15}
Anupam Gupta, Ravishankar Krishnaswamy, Viswanath Nagarajan, and R.~Ravi.
\newblock Running errands in time: Approximation algorithms for stochastic
  orienteering.
\newblock {\em Math. Oper. Res.}, 40(1):56--79, 2015.

\bibitem{GuptaN13}
Anupam Gupta and Viswanath Nagarajan.
\newblock A stochastic probing problem with applications.
\newblock In {\em {IPCO}}, volume 7801 of {\em Lecture Notes in Computer
  Science}, pages 205--216. Springer, 2013.

\bibitem{GuptaNS16}
Anupam Gupta, Viswanath Nagarajan, and Sahil Singla.
\newblock Algorithms and adaptivity gaps for stochastic probing.
\newblock In {\em {SODA}}, pages 1731--1747. {SIAM}, 2016.

\bibitem{gupta16queryselection}
Manoj Gupta, Yogish Sabharwal, and Sandeep Sen.
\newblock The update complexity of selection and related problems.
\newblock {\em Theory of Computing Systems}, 59(1):112--132, 2016.
\newblock \href {https://doi.org/10.1007/s00224-015-9664-y}
  {\path{doi:10.1007/s00224-015-9664-y}}.

\bibitem{halldorssonL21sortingfull}
Magn{\'{u}}s~M. Halld{\'{o}}rsson and Murilo~Santos de~Lima.
\newblock Query-competitive sorting with uncertainty.
\newblock {\em Theor. Comput. Sci.}, 867:50--67, 2021.
\newblock \href {https://doi.org/10.1016/j.tcs.2021.03.021}
  {\path{doi:10.1016/j.tcs.2021.03.021}}.

\bibitem{halperin2002improved}
Eran Halperin.
\newblock Improved approximation algorithms for the vertex cover problem in
  graphs and hypergraphs.
\newblock {\em SIAM Journal on Computing}, 31(5):1608--1623, 2002.

\bibitem{kahan91}
Simon Kahan.
\newblock A model for data in motion.
\newblock In {\em STOC'91: 23rd Annual ACM Symposium on Theory of Computing},
  pages 265--277, 1991.
\newblock \href {https://doi.org/10.1145/103418.103449}
  {\path{doi:10.1145/103418.103449}}.

\bibitem{khanna01queries}
Sanjeev Khanna and Wang-Chiew Tan.
\newblock On computing functions with uncertainty.
\newblock In {\em PODS'01: 20th ACM SIGMOD-SIGACT-SIGART Symposium on
  Principles of Database Systems}, pages 171--182, 2001.
\newblock \href {https://doi.org/10.1145/375551.375577}
  {\path{doi:10.1145/375551.375577}}.

\bibitem{LeviMS19}
Retsef Levi, Thomas~L. Magnanti, and Yaron Shaposhnik.
\newblock Scheduling with testing.
\newblock {\em Manag. Sci.}, 65(2):776--793, 2019.

\bibitem{Ma18}
Will Ma.
\newblock Improvements and generalizations of stochastic knapsack and markovian
  bandits approximation algorithms.
\newblock {\em Math. Oper. Res.}, 43(3):789--812, 2018.

\bibitem{Mazzawi2010}
Hanna Mazzawi.
\newblock Optimally reconstructing weighted graphs using queries.
\newblock In Moses Charikar, editor, {\em Proceedings of the Twenty-First
  Annual {ACM-SIAM} Symposium on Discrete Algorithms, {SODA} 2010, Austin,
  Texas, USA, January 17-19, 2010}, pages 608--615. {SIAM}, 2010.
\newblock \href {https://doi.org/10.1137/1.9781611973075.51}
  {\path{doi:10.1137/1.9781611973075.51}}.

\bibitem{megow17mst}
Nicole Megow, Julie Mei{\ss}ner, and Martin Skutella.
\newblock Randomization helps computing a minimum spanning tree under
  uncertainty.
\newblock {\em SIAM Journal on Computing}, 46(4):1217--1240, 2017.
\newblock \href {https://doi.org/10.1137/16M1088375}
  {\path{doi:10.1137/16M1088375}}.

\bibitem{MerinoS19}
Arturo~I. Merino and Jos{\'{e}}~A. Soto.
\newblock The minimum cost query problem on matroids with uncertainty areas.
\newblock In {\em Proceedings of {ICALP}}, volume 132 of {\em LIPIcs}, pages
  83:1--83:14. Schloss Dagstuhl - Leibniz-Zentrum f{\"{u}}r Informatik, 2019.

\bibitem{nemhauser1975}
George~L. Nemhauser and Leslie E.~Trotter Jr.
\newblock Vertex packings: Structural properties and algorithms.
\newblock {\em Mathematical Programming}, 8:232--248, 1975.

\bibitem{Nisan2021}
Noam Nisan.
\newblock The demand query model for bipartite matching.
\newblock In {\em SODA}, 2021.

\bibitem{olston2000queries}
Chris Olston and Jennifer Widom.
\newblock Offering a precision-performance tradeoff for aggregation queries
  over replicated data.
\newblock In {\em VLDB 2000: 26th International Conference on Very Large Data
  Bases}, pages 144--155, 2000.
\newblock URL: \url{http://ilpubs.stanford.edu:8090/437/}.

\bibitem{robbins1939}
Herbert~E.\ Robbins.
\newblock A theorem on graphs, with an application to a problem of traffic
  control.
\newblock {\em The American Mathematical Monthly}, 46(5):281--283, 1939.
\newblock URL: \url{http://www.jstor.org/stable/2303897}.

\bibitem{Rubinstein2018}
Aviad Rubinstein, Tselil Schramm, and S.~Matthew Weinberg.
\newblock Computing exact minimum cuts without knowing the graph.
\newblock In Anna~R. Karlin, editor, {\em 9th Innovations in Theoretical
  Computer Science Conference, {ITCS} 2018, January 11-14, 2018, Cambridge, MA,
  {USA}}, volume~94 of {\em LIPIcs}, pages 39:1--39:16. Schloss Dagstuhl -
  Leibniz-Zentrum f{\"{u}}r Informatik, 2018.
\newblock \href {https://doi.org/10.4230/LIPIcs.ITCS.2018.39}
  {\path{doi:10.4230/LIPIcs.ITCS.2018.39}}.

\bibitem{SchrijverBook}
Alexander Schrijver.
\newblock {\em Combinatorial Optimization $-$ Polyhedra and Efficiency}.
\newblock Springer, 2003.

\bibitem{singla2018price}
Sahil Singla.
\newblock The price of information in combinatorial optimization.
\newblock In {\em Proceedings of the Twenty-Ninth Annual ACM-SIAM Symposium on
  Discrete Algorithms}, pages 2523--2532. SIAM, 2018.

\bibitem{Thompson33}
William~R. Thompson.
\newblock On the likelihood that one unknown probability exceeds another in
  view of the evidence of two samples.
\newblock {\em Biometrika}, 25(3/4):285--294, 1933.

\bibitem{Weitzman1979}
Martin Weitzman.
\newblock Optimal search for the best alternative.
\newblock {\em Econometrica}, 47(3):641--54, 1979.

\bibitem{yannakakis_edge_1980}
Mihalis Yannakakis and Fanica Gavril.
\newblock Edge {Dominating} {Sets} in {Graphs}.
\newblock {\em SIAM Journal on Applied Mathematics}, 38(3):364--372, June 1980.
\newblock URL: \url{http://epubs.siam.org/doi/10.1137/0138030}, \href
  {https://doi.org/10.1137/0138030} {\path{doi:10.1137/0138030}}.

\end{thebibliography}

\appendix


\section{Appendix for Section~\ref{sec:preliminaries}}
\label{app:preliminaries}

\subsection{Proof of Lemma~\ref{lemma:mandatory}}
\lemmamandatory*

\begin{proof}
	If~$v$ is a minimum-weight vertex of~$F$ and $w_u \in I_v$ for another vertex $u\in F \setminus \{v\}$, then~$F$ cannot be oriented even if we query all vertices in $F \setminus \{v\}$.
	If~$v$ is not a minimum-weight vertex of~$F$ and contains the weight $w_u$ of the minimum-weight vertex $u$ of $F$, then~$F$ cannot be oriented even if we query all vertices in $F \setminus \{v\}$, as we cannot prove that $w_u \leq w_v$.
	
	If~$v$ is a minimum-weight vertex of~$F$, but $w_u \notin I_v$ for every $u \in F \setminus \{v\}$, then $F \setminus \{v\}$ is a feasible query set for orienting $F$.
	If~$v$ is not a minimum-weight vertex of~$F$ and does not contain the weight $w_u$ of a minimum-weight vertex of~$F$, then again $F \setminus \{v\}$ is a feasible query set for orienting $F$. If every hyperedge $F$ that contains $v$ falls into one of these two cases, then querying all vertices except $v$ is a feasible query set for the whole~instance.
\end{proof}

\subsection{Proof of \#P-hardness}
\label{app:sharpP}

\begin{theorem}
\label{thm:sharpPhard}
Computing $p_v$ for a hypergraph $H=(V,E)$ is \#P-hard, even if all hyperedges have size~$3$.
\end{theorem}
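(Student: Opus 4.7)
The plan is to reduce from the classical \#P-complete problem \#MONOTONE-2-DNF (Valiant, 1979): given a formula $\phi=\bigvee_{j=1}^m (x_{a_j}\wedge x_{b_j})$ over $n$ Boolean variables, count its satisfying assignments. Given such a $\phi$, I would construct a $3$-uniform hypergraph with one distinguished vertex $v$ whose mandatory probability encodes $\#\phi$, plus one vertex $y_i$ per variable $x_i$. Set $I_v=(1,2)$ with the uniform density, and $I_{y_i}=(0,2)$ with a piecewise-uniform density of total mass $1/2$ on $(0,1)$ and total mass $1/2$ on $(1,2)$, so that $\Pr[w_{y_i}\in(1,2)]=\tfrac12$ independently across $i$. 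For each clause $C_j$ introduce the size-$3$ hyperedge $F_j=\{v,y_{a_j},y_{b_j}\}$. The construction is polynomial in $|\phi|$ and every hyperedge has size $3$.

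The analysis proceeds by applying Lemma~\ref{lemma:mandatory} to each $F_j$. Because $w_v\in(1,2)=I_v$ with probability one, a short case analysis shows that $v$ is mandatory via $F_j$ exactly when $w_{y_{a_j}},w_{y_{b_j}}\in(1,2)$: if at least one $y$-weight lies in $(0,1)$ then that vertex is the strict minimum of $F_j$ and its weight is outside $I_v$, so neither condition (i) nor condition (ii) of the lemma holds; if instead both $y$-weights lie in $(1,2)$, then all three weights of $F_j$ are in $I_v$ and one of the two conditions necessarily holds regardless of which vertex realizes the minimum. Letting $X_i:=\mathbf{1}[w_{y_i}\in(1,2)]$ give independent Bernoulli$(1/2)$ variables, taking the union over the hyperedges containing $v$ yields
\[
p_v=\Pr\!\left[\bigvee_{j=1}^m (X_{a_j}\wedge X_{b_j})\right]=\Pr[\phi\text{ is satisfied}]=2^{-n}\cdot\#\phi,
\]
so an exact evaluation of $p_v$ recovers $\#\phi$ in polynomial time, establishing \#P-hardness for size-$3$ hyperedges.

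The main care concerns the paper's modelling assumptions. The density on $I_{y_i}$ is positive on $(0,1)\cup(1,2)$, so its essential infimum and supremum are $0$ and $2$, matching $I_{y_i}=(0,2)$, and it admits a polynomial-size probability-matrix representation as described in the paper's footnote. A possible worry is that some $y_i$ share identical intervals, so that an \emph{algorithm's} preprocessing would flag the leftmost as always-mandatory; but this is irrelevant to the proof because $p_v$ is defined purely through Lemma~\ref{lemma:mandatory} and does not depend on any algorithmic transformation of the input. If strictly distinct intervals are preferred, the right endpoints $r_{y_i}$ can be perturbed by tiny distinct $\varepsilon_i>0$ without affecting any of the case distinctions above.
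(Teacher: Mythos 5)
Your proposal is correct and is essentially the same construction as the paper's: the paper reduces from counting vertex covers of a graph $G$, introduces a distinguished vertex $v'$, builds one size-$3$ hyperedge per edge of $G$, gives each original vertex a Bernoulli($1/2$) chance of lying above a common threshold, and observes that $v'$ is mandatory precisely when some edge has both endpoints above the threshold (so $p_{v'}=1-\ell/2^n$ with $\ell$ the number of vertex covers). Your reduction from \#MONOTONE-2-DNF is the same gadget viewed through the complementary lens---an assignment fails a monotone $2$-DNF exactly when the set of true variables is independent, i.e.\ its complement is a vertex cover---and your case analysis via Lemma~\ref{lemma:mandatory}, including the justification that $p_v$ is well-defined without preprocessing and the optional $\varepsilon$-perturbation to obtain distinct intervals, matches the role of the paper's $i\varepsilon$ shifts.
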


\begin{proof}
The proof consists of a reduction of the \#P-hard problem of counting the number of vertex covers of a given graph $G=(V,E)$.  We construct the following instance to the hypergraph orientation problem.  

Let $v'$ be a new vertex, which does not belong to $V$. We construct the hypergraph $H=(V\cup\{v'\}, E')$, where every edge $\{u,v\}\in E$ from the original graph corresponds to an hyperedge in $E'$ of the form $\{u,v,v'\}$.  The value $w_v$ associated to every vertex $v\in V$ follows a distribution on the interval $I_v = (i\varepsilon, 2+i\varepsilon)$ for an
arbitrary $\varepsilon$ with $0 < \varepsilon < 1/n$, where $i$ is the rank of $v$ for some arbitrary fixed ordering of $V$.  The role of the shift by $\varepsilon$ is to avoid inclusion of the intervals. The distribution on
$I_v$ is constructed such that the variable $v$ is less than~$1$ with probability $1/2$
and at least~$1$ with probability $1/2$. 
The value associated to vertex $v'$ belongs to the interval $[1,2+(n+1) \varepsilon]$.  

What is the probability	that $v'$ is mandatory? For every realization of the
vertices, it is only crucial which of them have value less than $1$.  Let
$T\subseteq V$ be the set of the vertices with a value below the threshold $1$.  Every
potential set $T$ has the same probability $1/2^n$. 

Now $v'$ becomes mandatory by a set $\{u,v,v'\}$ if and only if both
values $w_u,w_v$ are at least~$1$.  This means that $v'$ is \emph{not} mandatory if and
only if the set $T$ defined by the realization is a vertex cover
for the graph $G$.  In other words the probability that $v'$ is mandatory
equals $1-\ell/2^n$, where $\ell$ is the number of vertex covers of $G$. 
As a consequence, computing this probability is \#P-hard.
\end{proof}

\subsection{Proof of Lemma~\ref{lem:sampling}}
\lemsampling*

\begin{proof}
The algorithm consists of drawing independently $k =\lceil \ln(2/\delta)/(2\epsilon^2) \rceil$ realizations of the vertex values, and determining for each realization whether $v$ is mandatory by using Lemma~\ref{lemma:mandatory}.  
The output $y$ is the fraction of realizations where this event happened.

For the analysis, let $X_v^i$ be the random variable indicating whether $v$ is mandatory in the $i$-th realization. These variables are independent and have mean $p_v$.  Using Hoeffding's concentration bound we have
\begin{align*}
	{\mathbb P}\left[\left|\frac1k\sum_{i=1}^k X_v^i - p_v\right| > \epsilon\right] &< 2 \exp(-2k\epsilon^2)
\end{align*}
The choice of $k$ is motivated by the following equivalent inequalties.
\begin{align*}
	2\exp(-2k \epsilon^2) & \leq \delta
	\\
	-2k \epsilon^2 & \leq \ln(\delta/2)
	\\
	k &\geq \ln(2/\delta)/(2 \epsilon^2).
\end{align*}
For the time complexity we observe that verifying if a vertex is mandatory can be done in time $O(|V|^2)$.
Its practical implementation makes use of the given probability matrix. Let $t_1,\ldots,t_{2|V|}$ be the sorted elements of $\{\ell_v,r_v|v\in V\}$, defining elementary intervals of the form $(t_i, t_{i+1})$.  The given probability matrix specifies for every given vertex $v\in V$ the probability that its weight $w_v$ belongs to a given interval $(t_i,t_{i+1})$.  Instead of sampling the actual weights, we only sample the elementary intervals to which they belong. Hence, for a fixed sample, 
we know for each hyperedge $E$ in which elementary interval $M$ its minimum weight lies.
In this setting, the vertices that are mandatory because of $E$ can be determined as follows:
\begin{itemize}
	\item If at least two weights of vertices in $E$ belong to~$M$, then all vertices $v\in E$ with $M\subseteq I_v=(\ell_v,r_v)$ are mandatory.
	\item Otherwise, let $m\in E$ be the unique vertex whose weight lies in $M$. We then have:
	\begin{itemize}
		\item Every other vertex $u\in E$ with $M\subseteq I_u$ is mandatory.
		\item If some vertex $u\in E$ has its weight in $I_m$, then $m$ is also mandatory.
	\end{itemize}
\end{itemize}
The union of the mandatory vertices of all hyperedges then gives the set of all mandatory vertices.
\end{proof}


\subsection{Proof of Theorem~\ref{theorem:VC_based_LB}}

\theoremVCbasedLB*

\begin{proof}
	\begin{figure}[tbh]
		\centering
		\begin{tikzpicture}[line width = 0.3mm]
		\interval{$x$}{0}{3}{0}
		\interval{$y$}{1}{4}{-0.5}
		\interval{$z$}{2}{5}{-1}
		
		\draw[dotted] (1, 0.5) -- (1, -1.5);
		\draw[dotted] (2, 0.5) -- (2, -1.5);
		\draw[dotted] (3, 0.5) -- (3, -1.5);
		
		\node at (0.5, 0.2) {$\epsilon$};
		\node at (1.5, 0.25) {$0$};
		\node at (2.5, 0.23) {$1 - \epsilon$};
		
		\node at (1.5, -0.3) {$1/2$};
		\node at (2.5, -0.26) {$0$};
		\node at (3.5, -0.3) {$1/2$};
		
		\node at (2.5, -0.8) {$\epsilon$};
		\node at (4, -0.8) {$1 - \epsilon$};
		
		\node at (6.5, 0) {$c_x = k$};
		\node at (6.5, -0.5) {$c_y = 1$};
		\node at (6.5, -1) {$c_z = k$};
		\end{tikzpicture}
		\caption{Lower bound example for orienting a single hyperedge.}
		\label{Ex_1}
	\end{figure}
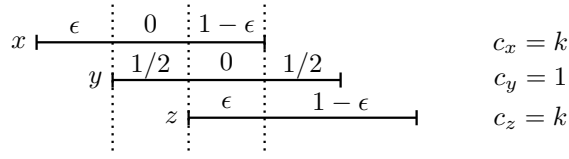
	
	\textbf{Case 1.} Consider the instance of Figure~\ref{Ex_1}.
	Each two-stage algorithm has to query a vertex cover for the vertex cover instance in the first stage. 
	Otherwise, it is not guaranteed that the problem can be solved in the second stage while only querying mandatory elements.
	Thus, each algorithm either queries $A_1 = \{x\}$, $A_2 = \{y, z\}$ or a superset of one of them in the first stage.
	Since querying supersets of $A_1$ or $A_2$ in the first stage never improves in comparison to just querying $A_1$ or $A_2$, we discard that option. 
	For $\epsilon$ tending to zero, the expected cost of the two-stage algorithms are
	$\EX[A_1] = \EX[A_2] = k + 1 + \frac{1}{2} \cdot k = \frac{3}{2} \cdot k + 1$,
	while the expected optimum is 
	$\EX[\OPT] = 1 + \frac{1}{2} \cdot k + \frac{1}{2} \cdot k = k+1$
	and therefore, for $k$ tending to infinity, the competitive ratio approaches $\frac{3}{2}$.
	\begin{figure}[tb]
		\centering
		\begin{tikzpicture}[line width = 0.3mm]
		\interval{$x_1$}{0}{3}{1}
		\path (-0.3, 0) -- (-0.3, 1) node[font=\normalsize, midway, sloped]{$\dots$};
		\interval{$x_k$}{0}{3}{0}
		\interval{$y$}{1}{4}{-0.5}
		\interval{$z_1$}{2}{5}{-1}
		\path (1.7, -1) -- (1.7, -2) node[font=\normalsize, midway, sloped]{$\dots$};
		\interval{$z_k$}{2}{5}{-2}
		
		\draw[dotted] (1, 1.5) -- (1, -2.5);
		\draw[dotted] (2, 1.5) -- (2, -2.5);
		\draw[dotted] (3, 1.5) -- (3, -2.5);
		
		\node at (0.5, 1.2) {$\epsilon$};
		\node at (1.5, 1.25) {$0$};
		\node at (2.5, 1.23) {$1 - \epsilon$};
		
		\node at (0.5, 0.2) {$\epsilon$};
		\node at (1.5, 0.25) {$0$};
		\node at (2.5, 0.23) {$1 - \epsilon$};
		
		\node at (1.5, -0.3) {$1/2$};
		\node at (2.5, -0.26) {$0$};
		\node at (3.5, -0.3) {$1/2$};
		
		\node at (2.5, -0.8) {$\epsilon$};
		\node at (4, -0.8) {$1 - \epsilon$};
		
		\node at (2.5, -1.8) {$\epsilon$};
		\node at (4, -1.8) {$1 - \epsilon$};
		\end{tikzpicture}
		\qquad
		\begin{tikzpicture}[thick, scale=0.5]
		\draw (0, 0) node[vertex, label=west:$x_1$]{} -- (4, 0) node[vertex, label=east:$z_1$]{};
		\draw (0, 0) -- (4, 1.5) node[vertex, label=east:$y$]{};
		\draw (0, -3) node[vertex, label=west:$x_k$]{} -- (4, -3) node[vertex, label=east:$z_k$]{};
		\draw (0, 0) -- (4, -3);
		\draw (0, -3) -- (4, 1.5);
		\draw (0, -3) -- (4, 0);
		
		\path (-0.8, 0) -- (-0.8, -3) node[font=\normalsize, midway, sloped]{$\dots$};
		\path (4.7, 0) -- (4.7, -3) node[font=\normalsize, midway, sloped]{$\dots$};
		
		\path (0.6, 0.5) -- (0.6, -3) node[font=\normalsize, midway, sloped]{$\dots$};
		\path (3.2, 0) -- (3.2, -3) node[font=\normalsize, midway, sloped]{$\dots$};
		\end{tikzpicture}
		\caption{Lower bound example for the hyperedge orientation problem.
			The vertex cover instance is shown in the complete bipartite graph at the right.}
		\label{Ex_2}
	\end{figure}
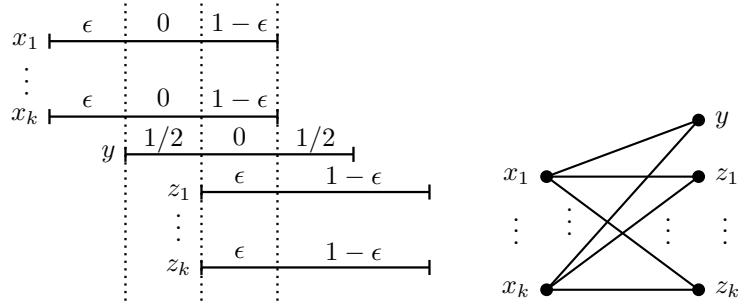

	\textbf{Case 2.} The argumentation is analogous to the proof of the first case but uses the instance in Figure~\ref{Ex_2}.
	The hyperedges are $S_1, \ldots, S_k$, with $S_i = \{x_i, y, z_1, \ldots, z_k\}$.
	
\textbf{Case~3.} The argumentation is analogous to the proof of Case~$1$ but uses the instance in Figure~\ref{Ex_3}.
		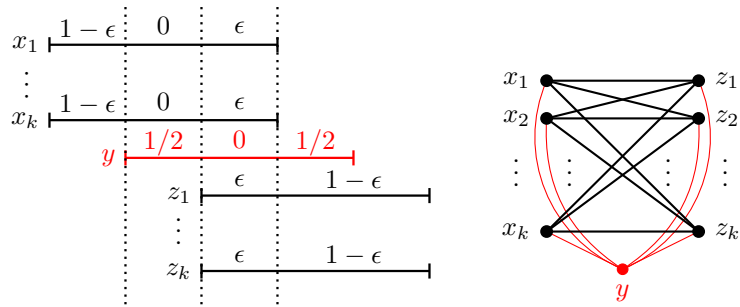
\begin{figure}[tb]
		\centering
		\begin{tikzpicture}[line width = 0.3mm]
		\interval{$x_1$}{0}{3}{1}
		\path (-0.3, 0) -- (-0.3, 1) node[font=\normalsize, midway, sloped]{$\dots$};
		\interval{$x_k$}{0}{3}{0}
		\intervalcolor{$y$}{1}{4}{-0.5}{red}
		\interval{$z_1$}{2}{5}{-1}
		\path (1.7, -1) -- (1.7, -2) node[font=\normalsize, midway, sloped]{$\dots$};
		\interval{$z_k$}{2}{5}{-2}
		
		\draw[dotted] (1, 1.5) -- (1, -2.5);
		\draw[dotted] (2, 1.5) -- (2, -2.5);
		\draw[dotted] (3, 1.5) -- (3, -2.5);
		
		\node at (0.5, 1.2) {$1- \epsilon$};
		\node at (1.5, 1.25) {$0$};
		\node at (2.5, 1.23) {$\epsilon$};
		
		\node at (0.5, 0.2) {$1 - \epsilon$};
		\node at (1.5, 0.25) {$0$};
		\node at (2.5, 0.23) {$\epsilon$};
		
		\node[red] at (1.5, -0.3) {$1/2$};
		\node[red] at (2.5, -0.26) {$0$};
		\node[red] at (3.5, -0.3) {$1/2$};
		
		\node at (2.5, -0.8) {$\epsilon$};
		\node at (4, -0.8) {$1 - \epsilon$};
		
		\node at (2.5, -1.8) {$\epsilon$};
		\node at (4, -1.8) {$1 - \epsilon$};
		\end{tikzpicture}
		\qquad
		\begin{tikzpicture}[thick, scale=0.5]
		\draw (0, 1) node[vertex, label=west:$x_1$]{} -- (4, 1) node[vertex, label=east:$z_1$]{};
		\draw (0, 0) node[vertex, label=west:$x_2$]{} -- (4, 0) node[vertex, label=east:$z_2$]{};
		\draw (0, -3) node[vertex, label=west:$x_k$]{} -- (4, -3) node[vertex, label=east:$z_k$]{};
		\draw (0, 0) -- (4, 1);
		\draw (0, 0) -- (4, -3);
		\draw (0, 1) -- (4, 0);
		\draw (0, 1) -- (4, -3);
		\draw (0, -3) -- (4, 1);
		\draw (0, -3) -- (4, 0);
		
		\begin{scope}[on background layer]
		\draw[red] (2, -4) node[red, fill=red, vertex, label=south:$y$]{} -- (0, -3);
		\draw[red] (2, -4) -- (4, -3);
		\draw[red] plot (2, -4) to [bend left=30] (0, 0);
		\draw[red] plot (2, -4) to [bend left=45] (0, 1);
		\draw[red] plot (2, -4) to [bend right=30] (4, 0);
		\draw[red] plot (2, -4) to [bend right=45] (4, 1);
		\end{scope}
		
		\path (-0.8, 0) -- (-0.8, -3) node[font=\normalsize, midway, sloped]{$\dots$};
		\path (4.7, 0) -- (4.7, -3) node[font=\normalsize, midway, sloped]{$\dots$};
		
		\path (0.6, 0) -- (0.6, -3) node[font=\normalsize, midway, sloped]{$\dots$};
		\path (3.2, 0) -- (3.2, -3) node[font=\normalsize, midway, sloped]{$\dots$};
		\end{tikzpicture}
		\caption{Lower bound example for the graph orientation problem with uniform query costs.
			The subgraph induced by $\{x_1, \ldots x_k, z_1, \ldots, z_k\}$ is complete bipartite, and $y$ is a universal vertex (adjacent to all others).}
		\label{Ex_3}
	\end{figure}
\end{proof}


\subsection{Lower bound for strict two-stage algorithms for minimum}
\label{app:strict2stage}

An algorithm $ALG$ is a \emph{strict two-stage algorithm} if in a first stage it non-adaptively queries a set of elements $U_1$ and in a second stage it non-adaptively queries a set of elements $U_2$.
The choice of $U_2$ can depend on the outcome of the queries for~$U_1$.
The algorithm must guarantee that after the second stage it can determine
the minimum weight element.

\begin{theorem}
No strict two-stage algorithm can have competitive ratio
$o(\log n)$ for the hypergraph orientation problem, even
for a single hyperedge with uniform query costs.
\end{theorem}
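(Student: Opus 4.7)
My plan is to construct a single-hyperedge instance on $n$ vertices with uniform costs in which $\EX[\OPT]=O(1)$ but every strict two-stage algorithm has expected cost $\Omega(\log n)$. The construction encodes a ``branching chain'' of $k+1=\Theta(\log n)$ distinguished vertices $v_0,v_1,\ldots,v_k$ inside the $n$ total vertices. The intervals $I_i$ of the chain vertices are chosen so that every $I_i$ overlaps the leftmost interval (keeping the hyperedge valid) and no $I_i$ is contained in any $I_j$ (keeping the preprocessing step inert). Each weight distribution splits evenly into a ``stop'' subregion, in which the realized weight either makes $v_i$ the clear minimum of the hyperedge or certifies that some earlier chain vertex is, and a ``continue'' subregion, in which the weight lies inside $I_{i+1}$ and thereby forces $v_{i+1}$ to become mandatory via Lemma~\ref{lemma:mandatory}(ii). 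The remaining $n-k-1$ vertices are auxiliary fillers whose intervals lie in a region of $I_0$ that the minimum never enters and that are never mandatory in any realization.

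The expected-optimum bound follows because, in every realization, the mandatory set is precisely the chain prefix up to the first ``stop'' event: $\OPT=i+1$ with probability $2^{-(i+1)}$ for $i<k$ (plus a negligible tail at $i=k$), so $\EX[\OPT]\le \sum_{i\ge 0}(i+1)\,2^{-(i+1)}+O(1)=O(1)$.

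For the lower bound, I fix any strict two-stage algorithm $A$ with stage-1 set $U_1$. If $|U_1|\ge \tfrac14\log_2 n$, the cost is already $\Omega(\log n)$, so assume otherwise; then some chain vertex $v_j$ with $j=\Omega(\log n)$ lies outside $U_1$, and the stage-1 outcomes carry no information about the independent coin flips on $v_j,\ldots,v_k$. With constant probability the walk continues past level $j-1$, and conditional on that, stage 2 (which is non-adaptive given the stage-1 outcomes) must include \emph{every} chain vertex at or above level $j$: omitting some $v_\ell$ leaves unresolved the realization where the walk stops exactly at $v_\ell$, because the minimum then cannot be distinguished from $v_{\ell-1}$ without querying $v_\ell$. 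This forces $\E[|U_2|]=\Omega(\log n)$, hence $\EX[A]=\Omega(\log n)=\Omega(\log n)\cdot \EX[\OPT]$.

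The principal obstacle is engineering the intervals so that all structural requirements coexist: pairwise overlap with the leftmost interval, antichain under inclusion (so the preprocessing step does not fire), and per-realization mandatoryness that faithfully encodes the random walk as described. A promising scheme is to take $I_i=(i,i+a)$ for a sufficiently large constant $a$ (so all chain intervals have equal length and form an antichain) and to concentrate each $v_i$'s probability mass on two narrow subintervals realizing the ``stop'' and ``continue'' events, placing the filler vertices on isolated subintervals inside the common overlap region that never contain the~minimum.
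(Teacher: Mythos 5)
Your construction has a fatal scaling error. You place only $k{+}1 = \Theta(\log n)$ ``chain'' vertices (each with an independent fair coin deciding stop/continue) and pad the rest with $n - k - 1$ fillers that are never mandatory and hence cost nothing. But then the claim that ``with constant probability the walk continues past level $j-1$'' for a chain vertex $v_j$ with $j = \Omega(\log n)$ is false: that probability is $2^{-(j-1)} = n^{-\Omega(1)}$, which is polynomially small, not constant. Conditional on that event, stage 2 must query the remaining $O(\log n)$ chain vertices, so the contribution to $\EX[|U_2|]$ is $n^{-\Omega(1)} \cdot O(\log n) = o(1)$, not $\Omega(\log n)$. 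More generally, with a chain of length $m$ and a stage-1 set of size $t$, the forced stage-2 cost is about $2^{-t}(m - t)$, so $\EX[ALG] \gtrsim t + 2^{-t}(m-t)$; minimizing over $t$ gives $\Theta(\log m)$. With $m = \Theta(\log n)$ this is only $\Theta(\log\log n)$, which does not prove the theorem.

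The paper's instance avoids this by making \emph{all} $n$ intervals chain vertices: $I_1 = (1, n+1)$ and $I_i = (i, n+2)$ for $i \ge 2$, each with probability $\tfrac12$ of landing in a left unit subinterval (``stop'') or a right unit subinterval (``continue''). Then $\EX[\OPT] \le 2$ by querying in order of left endpoints, but for any stage-1 set of size $t$, the event that all $t$ stage-1 queries land at the right end has probability $2^{-t}$ and forces querying essentially all $n - t$ remaining intervals in stage 2, giving $\EX[ALG] \ge t + 2^{-t}(n - t - 1)$. Balancing $t$ against $2^{-t} n$ yields $\EX[ALG] = \Omega(\log n)$ (at $t < \tfrac12\log n$ the second term alone is $\Omega(\sqrt n)$). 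Your overall plan --- encode a geometric stopping chain so that $\EX[\OPT] = O(1)$ while a non-adaptive stage 1 of size $t$ misses information that forces querying the whole tail with probability $2^{-t}$ --- is exactly the right idea; you just need the tail to have $\Omega(n^\epsilon)$ vertices rather than $\Theta(\log n)$, so there is no room for filler vertices if you want the bound to read $\Omega(\log n)$ in terms of the total instance size $n$.
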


\begin{proof}
Consider the following instance:
\begin{itemize}
\item $I_1=(1,n+1)$
\item $I_i=(i,n+2)$ for $2\le i\le n$
\end{itemize}
The probability distributions are such that, for each interval $(a,b)$,
the probability is $1/2$ for the weight to be in $(a,a+1)$, and
$1/2$ for the weight to be in $(b-1,b)$. So the weight of each interval
is either at its left end or at its right end, and each of these events happens with probability~$1/2$.

If we query the intervals in order of left endpoint, we
are done as soon as one interval has its weight at its left end.
So each interval has probability $1/2$ to be the last one
we need to query, and the expected number of queries is~$2$.
So $\EX[\OPT]\le 2$.

Now consider an arbitrary strict two-stage algorithm $ALG$.
Assume that the algorithm queries $k$ intervals in the
first stage.

Consider the event $E$ that each of the intervals
queried by $ALG$ in the first stage has its weight at the
right end. The event $E$ occurs with probability $1/2^k$
and, if it occurs, the algorithm must query all $n-k$
remaining intervals in the second stage. (Otherwise,
it could happen that all queries in the second stage
also yield a weight at the right end of the queried
interval, and then it is impossible to decide whether an unqueried
element is the minimum or not.)
So the expected cost of $ALG$ is
$$
\EX[ALG] = k + \frac{1}{2^k} (n-k) = \frac{n}{2^k} + k\left(1-\frac{1}{2^k}\right)
$$
If $k\ge \frac12 \log n$, the term $k(1-1/2^k)$ is $\Omega(\log n)$.
If $k < \frac12 \log n$, the term $n/2^k$ is $\Omega(\sqrt{n})$.
So in any case $\EX[ALG]=\Omega(\log n)$ while $\EX[\OPT]\le 2$.
\end{proof}

\subsection{The vertex split operation}
\label{app:vertexSplit}%
Lemma~\ref{lemma_opt_partitioning} allows us to partition
the vertex set $V$ of a hypergraph orientation instance
$H=(V,E)$
into parts, in such a way that
the expected optimal query cost can be lower-bounded by
the sum of the expected optimal query costs in each
parts. Each part inherits the probability for a
set of vertices to be the set of mandatory vertices in
the part from the original instance. Sometimes we
also need to consider partitions where a single
vertex can be split into fractions that are placed
in different parts of the partition.
In order to do so, we aim to create an instance $H'$ with query costs~$c'$ that is created from $H$ by splitting vertices into fractions and satisfies $\EX[\OPT'] = \EX[\OPT]$, where $\OPT'$ is the optimal query cost for instance $H'=(V',E')$.
In a first step, we consider the vertex cover instance $\bar{G}=(\bar{V},\bar{E})$.
Iteratively, we can replace the original vertices by copies:
We pick a vertex $v \in \bar{V}$ and replace it by multiple copies $v'_1,\ldots, v'_l$ with $c_v = \sum_{i=1}^l c'_{v'_i}$ such that there is an edge between a copy $v'_i$ and a vertex $u \in \bar{V} \setminus \{v\}$ if and only if $\{v,u\} \in \bar{E}$.
Clearly, the weight of the minimum vertex cover for the vertex cover instance remains the same.
Note that we cannot directly modify the vertex cover instance, instead we have to create hyperedges in $H'$ such that $H'$ has the described vertex cover instance.
For the described modifications, this is clearly possible.

For each $M \subseteq V$ let $M'$ be the set of copies created for vertices in $M$.
Let $p'(M')$ be the probability that $M'$ is the set of mandatory vertices for instance $H'$.
If we have $p(M) = p'(M')$ for each $M \subseteq V$, and $p'(S) = 0$ for each subset $S$ that is not the set of copies for some $M \subseteq V$, then it follows
\begin{align*}
\EX[\OPT] &= \sum_{M\subseteq V} p(M) \cdot c(M)  + \sum_{M\subseteq V}  p(M) \cdot c(VC_M)\\
&= \sum_{M\subseteq V'} p'(M) \cdot c'(M)  +  \sum_{M\subseteq V'}  p'(M) \cdot c'(VC'_{M}) = \EX[\OPT'],
\end{align*}
where $VC'_{M}$ is the minimum weight vertex cover for the subgraph  $\bar{G}'[V'\setminus M]$ of the vertex cover instance of $H'$.
In general, we cannot just set $p'(M)$ arbitrarily, since it depends on the structure of the given hypergraph and the distributions of the corresponding intervals.
However, since we use $\EX[\OPT']$ exclusively to lower bound $\EX[\OPT]$, we just consider the artificial generalized version of the graph orientation problem, where $p(M)$ for each $M \subseteq V$ is part of the input and can be freely set.
Then, we can create the described instance with $\EX[\OPT]=\EX[\OPT']$ and compare our algorithm against the optimal cost of the created instance for the generalized graph orientation problem.
Note that the construction of $p'$ implies $p_v = p'_{v'}$ where $v \in V$ is an original vertex and $v'$ is a copy of $v$ in $H'$.

\obsvertexsplitting*

\section{Appendix for Section~\ref{sec:threshold}}
\label{app:thres}


Figure~\ref{fig:alpha_ratio} shows the competitive ratio of $\threshold$ in dependence on the approximation factor $\alpha$ of the vertex cover problem blackbox. 
The following sections contain the missing proofs for Section~\ref{sec:threshold}.

\begin{figure}[b]
\centering
\begin{tikzpicture}
\begin{axis}[enlargelimits=false
			,ymin=1.5
			,width=10cm
			,height=5cm,
			,xlabel={Vertex cover approximation factor $\alpha$},
			ylabel={Competitive ratio}]
\addplot[domain=1:2, samples=100]{0.5 * (x+sqrt(8-x*(4-x)))};
\end{axis}
\end{tikzpicture}
\caption{Competitive ratio of \threshold for different approximation factors $\alpha$ of the vertex cover problem blackbox.}
\label{fig:alpha_ratio}
\end{figure}
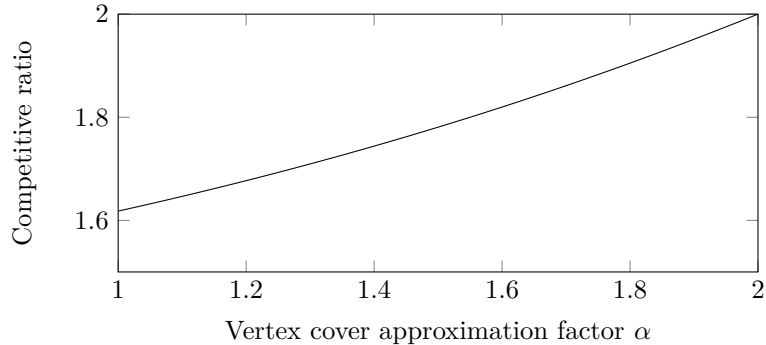

\subsection{Auxiliary lemma for the proof of Theorem~\ref{thm:threshold}}
\label{appThresholdBipartiteLemma}

\begin{restatable}[]{lemma}{lemmaLPPreprocessingBipartite}
	\label{lemma:LP_preprocessing_bipartite}
	Let $x^*$ be a half-integral optimal solution to the vertex cover LP relaxation, and let $V_1 = \{v \in V \mid x^*_v = 1\}$ and  $V_0 = \{v \in V \mid x^*_v = 0\}$.
	Then $V_1$ is a minimum-weight vertex cover for the bipartite graph $G'_{V_1\cup V_0}$ that is created from $G[V_1\cup V_0]$ by removing all edges within the partitions $V_1$ and $V_0$.
\end{restatable}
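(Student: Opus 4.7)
The plan is to prove the two parts of the claim separately. That $V_1$ is a vertex cover of $G'_{V_1\cup V_0}$ is immediate from the construction: $G'_{V_1\cup V_0}$ is bipartite with sides $V_1$ and $V_0$, so $V_1$ trivially meets every one of its edges. All the substance lies in minimality, which I would establish by contradiction, mimicking the classical Nemhauser--Trotter persistency argument: from any hypothetical vertex cover $C \subseteq V_1\cup V_0$ of $G'_{V_1\cup V_0}$ with $c(C) < c(V_1)$, I would build a feasible solution $\tilde x$ to~\eqref{eq:threshold:LP} on the full graph $G$ whose objective is strictly smaller than $c^\top x^*$, contradicting the optimality of $x^*$.

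For the perturbation, partition $V_1 \cup V_0$ into $A = V_1 \setminus C$, $B = V_0 \cap C$, $D = V_1 \cap C$, $E = V_0 \setminus C$, and fix a small $\epsilon > 0$. Define $\tilde x_v = 1-\epsilon$ on $A$, $\tilde x_v = \epsilon$ on $B$, and $\tilde x_v = x^*_v$ elsewhere. From $c(A) + c(D) = c(V_1) > c(C) = c(B) + c(D)$ we get $c(A) > c(B)$, and hence $c^\top \tilde x = c^\top x^* + \epsilon(c(B) - c(A)) < c^\top x^*$, as required.

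The main obstacle is to verify that $\tilde x$ satisfies every edge constraint of $G$, not merely of $G'_{V_1\cup V_0}$. I would first record a key structural consequence of $x^*$ being LP-feasible: there are no edges inside $V_0$ (else $0+0<1$) and no edges between $V_0$ and $V_{1/2}$ (else $0+\tfrac12 < 1$), so every edge of $G$ incident to $V_0$ already lies in $G'_{V_1\cup V_0}$ and is covered by $C$. A short case analysis over the remaining edge types then closes the argument for $\epsilon \le \tfrac12$: edges inside $V_1$ sum to at least $2(1-\epsilon) \ge 1$; $V_1$--$V_{1/2}$ edges to $(1-\epsilon) + \tfrac12 \ge 1$; and $V_{1/2}$--$V_{1/2}$ edges remain at $1$. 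The only delicate case is a $V_1$--$V_0$ edge $\{u,v\}$: the bad configuration $u \in A$, $v \in E$ is exactly ruled out by $C$ being a vertex cover of $G'_{V_1\cup V_0}$, and each of the three remaining sub-cases ($u\in A, v\in B$; $u\in D, v\in B$; $u\in D, v\in E$) contributes sum $\ge 1$ by direct calculation. This completes the contradiction, establishing minimality of $V_1$.
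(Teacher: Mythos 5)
Your proposal is correct and takes essentially the same approach as the paper: both perturb $x^*$ on the sets $V_1\setminus C$ and $V_0\cap C$, show the objective strictly decreases because $c(V_1\setminus C)>c(V_0\cap C)$, and verify LP feasibility by a case analysis over edge types, using that $C$ covers $G'_{V_1\cup V_0}$ to handle the $V_1$--$V_0$ edges. The only difference is cosmetic — the paper sets the perturbed values to exactly $\tfrac12$ on both sets, which is your argument specialized to $\epsilon=\tfrac12$; incidentally, you are slightly more explicit than the paper in recording that LP feasibility of $x^*$ forbids $V_0$--$V_{1/2}$ edges, which the paper leaves implicit.
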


\begin{proof}
	We claim that $V_1$ is a minimum weighted vertex cover for $G'_{V_1\cup V_0}$.
	Assume otherwise; then there is a vertex cover $VC^*$ for $G'_{V_1 \cup V_0}$ with $c(VC^*) < c(V_1)$. 
	Let $R_1 = V_1 \setminus VC^*$ and let $A_0 = V_0 \cap VC^*$, then $VC^* = (V_1 \setminus R_1) \cup A_0$.
	Since $c(VC^*) < c(V_1)$, we get $c(R_1)>c(A_0)$.
	We define a solution $x'$ for the LP Relaxation of Step 4 as follows:
	$$
	x'_v = \begin{cases}
	x'_v = \frac{1}{2} & \text{if } v \in R_1 \cup A_0\\
	x'_v = x^*_v & \text{otherwise}.
	\end{cases}
	$$
	The objective value $c(x')$ of $x'$ is $c(x') = c(x^*) - \frac{1}{2} \cdot c(R_1) + \frac{1}{2} c(A_0)$, where $c(x^*)$ is the objective value of $x^*$.
	Since $c(R_1)>c(A_0)$, we get $c(x') < c(x^*)$.
	We argue that $x'$ is feasible for the LP relaxation, which contradicts the optimality of $x^*$.
	
	All edges that are only incident to elements of $V_{1/2} = V \setminus ({V_1 \cup V_0})$ are still covered, because the values of the corresponding variables were not changed.
	Each edge between some $u \in V_{1/2}$ and some $v \in V_1$ is still covered because $x'_u = \frac{1}{2}$ and $x'_v \ge \frac{1}{2}$ hold by definition of $x'$.
	Edges between elements of $V_1$ are covered for the same reason.
	Since there are no edges between elements of $V_0$, it remains to consider an edge between some $u \in V_0$ and some $v \in V_1$.
	If $v \in V_1 \cap VC^*$, then $x_v' = 1$, and the edge is covered.
	If $v \in V_1 \setminus VC^* = R_1$, then $u \in A_0$ follows because $VC^*$ is a vertex cover for $G'_{V_1 \cup V_0}$.
	By definition of $x'$, it follows $x'_v = \frac{1}{2}$ and $x_u' = \frac{1}{2}$, and, therefore, the edge is covered.
	It follows that $x'$ is feasible, which contradicts the optimality of $x^*$.
	Thus, $V_1$ is a minimum weighted vertex cover for $G'_{V_1 \cup V_0}$.
\end{proof}

\subsection{Tightness proof for the analysis of \threshold}
\label{appThreshold_tightness}

\begin{theorem}	
	The analysis of {\em \threshold} is tight.
	More precisely, there is no threshold $d\in (0,1]$ such that 
	its competitive ratio is less than the Golden ratio $\phi=(1+\sqrt{5})/2$.
\end{theorem}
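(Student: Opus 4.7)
The high-level approach is to show that, for every choice of $d \in (0,1]$, there is a family of instances on which the ratio $\EX[\ALG]/\EX[\OPT]$ of \threshold\ tends to $\max\{1/d,\,1+d\}$; since $\min_{d \in (0,1]} \max\{1/d,\,1+d\} = \phi$ and the minimum is attained at $d = 1/\phi$ (where $1/d = 1+d = \phi$), no choice of $d$ yields a competitive ratio strictly below $\phi$. I split the construction into two cases according to which term in the maximum is active.

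For $d \le 1/\phi$, I use a star $S_n$ with center $v^*$, leaves $u_1,\ldots,u_n$, unit query costs, and intervals $I_{v^*} = (0,2)$, $I_{u_i} = (1,3)$, with two-point weight distributions satisfying $\Pr[w_{v^*}\in(1,2)] = \Pr[w_{u_i}\in(1,2)] = d$. A direct calculation gives $p_{u_i} = d$ and $p_{v^*} = 1-(1-d)^n \ge d$, so the whole vertex set enters $M$ and \threshold\ queries all $n+1$ vertices. For $\EX[\OPT]$, I would partition on $w_{v^*}$: when $w_{v^*}\in(0,1)$ the singleton $\{v^*\}$ is feasible of cost $1$; when $w_{v^*}\in(1,2)$ every leaf is mandatory by Lemma~\ref{lemma:mandatory}, and $v^*$ is additionally mandatory precisely when some $w_{u_i}\in(1,2)$, which occurs with conditional probability $1-(1-d)^n$. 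Summing gives $\EX[\OPT] = 1 + dn - d(1-d)^n$, so the ratio $(n+1)/\EX[\OPT]$ tends to $1/d \ge \phi$ as $n\to\infty$.

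For $d > 1/\phi$, I use a single edge $\{v,u\}$ with arbitrary costs $c_v = 1$ and $c_u = 1+\delta$ for small $\delta > 0$, intervals $I_v = (0,2)$ and $I_u = (1,3)$, $\Pr[w_v\in(1,2)] = d-\epsilon$, and $w_u\in(2,3)$ almost surely. Then $p_v = 0$ and $p_u = d-\epsilon$, so $M = \emptyset$; and because $c_v < c_u$, the weighted LP~\eqref{eq:threshold:LP} has the unique optimal basic feasible solution $x_v = 1,\, x_u = 0$, forcing $V_1 = \{v\}$ and $V_0 = \{u\}$. Hence \threshold\ queries $v$ first and subsequently queries $u$ only when $w_v\in(1,2)$, for expected cost $1+(1+\delta)(d-\epsilon)$. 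Per realisation, $\{v\}$ is feasible of cost $1$ when $w_v\in(0,1)$ and $\{u\}$ is feasible of cost $1+\delta$ when $w_v\in(1,2)$ (because $w_u > 2 = r_v$ certifies $v$ as the minimum); taking expectation gives $\EX[\OPT] = 1+(d-\epsilon)\delta$, and letting $\delta,\epsilon\to 0$ drives the ratio to $1+d \ge \phi$.

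The main obstacle I anticipate lies in the OPT computation of the first case, where I must use Observation~\ref{obs:verification} together with Lemma~\ref{lemma:mandatory} to verify that, in the sub-case ``$w_{v^*}\in(1,2)$ with some $w_{u_i}\in(1,2)$'', both $v^*$ and every leaf really are mandatory, so that $n+1$ is actually the optimal cost and not just an upper bound. A secondary subtlety appears in the second case: with strictly uniform costs the LP relaxation admits the basic feasible solutions $(1,0)$, $(0,1)$ and $(1/2,1/2)$, and a favourable tie-break could let \threshold\ evade the bound; the infinitesimal cost perturbation $\delta > 0$ eliminates this ambiguity, and is permissible because the \threshold\ analysis is stated for arbitrary query costs.
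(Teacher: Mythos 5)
Your proof is correct and takes essentially the same route as the paper: two lower-bound families, a star giving a ratio tending to $1/d$ and a single edge giving a ratio tending to $1+d$, combined via $\min_d\max\{1/d,1+d\}=\phi$. The differences are matters of rigor rather than approach. The paper's star asserts $p_v=1$ and $p_{u_i}=d$ without giving intervals; you construct explicit intervals and distributions, obtaining $p_{v^*}=1-(1-d)^n$, and compute $\EX[\OPT]=1+dn-d(1-d)^n$ exactly, which still yields $1/d$ in the limit $n\to\infty$. For the single-edge instance, the paper fixes the LP tie-break by fiat ($x^{\mathrm{LP}}_v=1$); you eliminate the ambiguity by perturbing the costs to $c_u=1+\delta$, forcing the LP to pick $V_1=\{v\}$. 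Both refinements are sound, and your worry about the sub-case ``$w_{v^*}\in(1,2)$ with some $w_{u_i}\in(1,2)$'' is unfounded: Lemma~\ref{lemma:mandatory} does confirm that all $n+1$ vertices are mandatory there, so $n+1$ is indeed $\OPT(R)$.
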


\begin{proof}
	Consider the Threshold algorithm with some threshold $d\in (0,1]$. We give two instances of the graph orientation problem with uniform query cost and show that the algorithm has either a competitive ratio of at least $1+d-\epsilon$, for an arbitrarily small $\epsilon > 0$ (in particular, $\epsilon < d$), or at least $1/d$. This implies the lower bound.
	
	As the first instance, consider a single edge $\{u,v\}$ and mandatory probabilities $p_u=d-\epsilon$ and $p_v=0$.
	Threshold finds a basic feasible solution for the vertex cover LP relaxation for the edge $\{u,v\}$, say $x^{\textup{LP}}_u=0$ and $x^{\textup{LP}}_v=1$. It queries $v$, and then it still has to query $u$ with probability $p_u=d-\epsilon$, while it would have been sufficient to query only~$u$. The algorithm's competitive ratio is at least $1+d-\epsilon$.
	
	As a second instance, consider a star with the center $v$ and $n$ edges $\{v,u_i\}$ for $i\in\{1,2,\ldots,n\}$. Let $p_v=1$ and $p_{u_i}=d$. Then the algorithm queries all $v$ and $u_i$ due to the threshold $d$, whereas it would have been sufficient to query $v$ first and then continue with querying $u_i$ only if needed, that is, with probability $\EX[X_{u_i}]=d$ for each $i$. The algorithm has cost $n+1$, while the optimal cost is at most $1+n\cdot d$. Thus, the competitive ratio is at least $(n+1)/(1+n\cdot d)$, which tends to $1/d$ for $n\rightarrow \infty$.	
\end{proof}


\subsection{Proof of Theorem~\ref{thm:threshold} for arbitrary query costs}
\label{app:threshold_arbitrary_costs}

\threshold can also handle the graph orientation problem with arbitrary query costs.
In contrast to the uniform problem variant, the algorithm solves a weighted version of~\eqref{eq:threshold:LP}, and a weighted vertex cover problem in Line~\ref{line:threshold_solvevc}. 
In both instances, the query costs are used as weights.

\theoremThreshold*

\begin{proof}
We show that the analysis of Section~\ref{sec:threshold-algorithm} can be generalized for arbitrary query costs.
The expected cost of $\threshold$ is
\begin{align*}
\EX[ALG] &= c(Q) + \sum_{v\in V\setminus Q} c_v \cdot p_v\\
&= c(M) + c(V_1) + c(VC') + \sum_{v \in V_0} c_v \cdot p_v + \sum_{v \in V_{1/2} \setminus VC'} c_v \cdot p_v.
\end{align*}
Analogously to the analysis of \threshold for uniform query costs, we compare $\EX[ALG]$ and $\EX[\OPT]$ component-wise.
Remember that $\EX[\OPT]$ is characterized by Equation~\eqref{eq:LBonOPT}.
The bounds in Equations~\eqref{gt_eq_2} and~\eqref{gt_eq_5} regarding $\EX[\OPT_M]$ and $c(M)$, and $\EX[\OPT_{V_{1/2}}]$ and $c(VC') + \sum_{v \in V_{1/2} \setminus VC'} c_v \cdot p_v$, respectively, can be generalized straightforwardly using arguments analogous to those in Section~\ref{sec:threshold-algorithm}.
Generalizing Equation~\eqref{gt_eq_4} requires more effort.
	
Compare $c(S) + \sum_{v \in V_0} c_v \cdot p_v$ and $\EX[\OPT_{V_1 \cup V_0}]$. 
According to Lemma~\ref{lemma:LP_preprocessing_bipartite}, $V_1$ is a minimum weighted vertex cover for the bipartite graph $G'[V_1\cup V_0]$ that is created by removing all edges within the partitions $V_1$ and $V_0$ from the subgraph induced by $V_1 \cup V_0$.
In contrast to Section~\ref{sec:threshold-algorithm}, we cannot apply the  K\H{o}nig-Egerv\'{a}ry theorem for unweighted vertex covers.
Instead, the theorem gives us a function $\pi:E \rightarrow \mathbb{R}$ with $\sum_{\{u,v\} \in E} \pi(u,v) \le c_v$ for each $v \in V_1 \cup V_0$.
By duality theory, the constraint is tight for each $v \in V_1$, and $\pi(u,v) = 0$ holds if both $u$ and $v$ are in $V_1$. 
Thus, we can interpret $\pi$ as a function that distributes the complete weight $c_v$ of each $v \in V_1$ to the neighbors of $v$ outside of $V_1$.
For each $v \in V_0$ let $\tau_{v}$ denote the fraction of $c_v$ that is not used by $\pi$ to cover the weight of any $v\in V_1$, i.e., $\tau_v = c_v - \sum_{\{u, v\} \in E} \pi(u, v)$.
We can rewrite $c(V_1) + c(V_0)$ as
$$
	\sum_{v\in V_1} \left(c_v + \sum_{u\in V_0} \pi(u,v) \cdot p_v \right) + \sum_{v\in V_0} \tau_v \cdot p_v.
$$
Instead of comparing this expression to $\EX[\OPT_{V_1 \cup V_0}]$, we use Observation~\ref{obs:vertex_splitting} to split each $u \in V_0$ into copies $u'_v$ with query cost $c'_{u'_v} = \pi(u,v)$ for each $v \in V_1$ with $\pi(u,v) > 0$, and, if $\tau_u > 0$, in an additional copy $u'$ with query cost $c'_{u'} = \tau_u$. 
We assume that the mandatory probability of each copy is equal to the mandatory probability of the original.
Let $\OPT'_{V_1 \cup V'_0}$ denote random variable of Definition~\ref{def:partial_opt} for the modified instance, where $V'_0$ refers to the set of copies of vertices in $V_0$.
Then, according to Observation~\ref{obs:vertex_splitting}, $\EX[\OPT_{V_1 \cup V_0}] = \EX[\OPT'_{V_1 \cup V'_0}]$.
For each $v\in V_1$, let $H'_v = \{u'_v \mid u \in V_0 \land \pi(u,v) > 0\}$ be the set of copies that cover $c_v$ according to $\pi$.
Further let $\mathcal{H}' = \bigcup_{v\in V_1} H'_v$.
Using Lemma~\ref{lemma_opt_partitioning}, we further partition $\EX[\OPT'_{V_1 \cup V'_0}]$ and obtain
$$
	\EX[\OPT'_{S \cup D'}] \ge \sum_{v \in S}  \EX[\OPT'_{\{v\}\cup H'_v}] + \sum_{u' \in V'_0 \setminus \mathcal{H'}} \EX[\OPT'_{\{u'\}}].
$$
For each $s \in S$, the subgraph of the vertex cover instance induced by $\{v\}\cup H'_v$ is a star with a minimum weighted vertex cover of weight $c_v$ and thus 
$$
		\EX[\OPT'_{\{v\}\cup H'_v}] \ge c_v.
$$
Each $u' \in V'_0 \setminus \mathcal{H'}$ is a copy of some $u \in V_0$ with $\tau_u > 0$, and has cost $c'_{u'} = \tau_u$.
Since $\OPT'_{u'}$ must query $u'$ if it is mandatory, we have 
$$
		\EX[\OPT'_{V_1 \cup V'_0}] \ge c(V_1) + \sum_{u \in V'_0 \setminus \mathcal{H'}} \tau_u \cdot p_u.
$$
Finally, by putting the inequalities together, we get
\begin{align*}
		&c(S) + \sum_{v \in V_0} c_v \cdot p_v \\
		 \le &\sum_{v\in V_1} \left(c_v + \sum_{u\in V_0} \pi(u,v)\cdot d \right) + \sum_{v\in V_0} \tau_v \cdot p_v \\
		 \le &  \sum_{v\in V_1} (1+d) \cdot c_v + \sum_{v\in V_0} \tau_v \cdot p_v\\
		 \le &  \sum_{v\in V_1} (1+d) \cdot \EX[\OPT'_{\{v\}\cup H'_v}] + \sum_{u' \in V'_0 \setminus \mathcal{H'}} \tau_{u} \cdot p_u\\
		 \le& (1+d) \cdot \EX[\OPT'_{V_1 \cup V'_0}] = (1+d) \cdot \EX[\OPT_{V_1 \cup V_0}].
\end{align*}

The rest of the analysis follows the same pattern as Section~\ref{sec:threshold-algorithm}.
\end{proof}

\subsection{Applying the threshold algorithm to special cases}
\label{app:constant}

The threshold algorithm of Section~\ref{sec:threshold} requires an $\alpha$-approximation for a vertex cover problem.
In this section, we discuss further special cases for which \threshold can be applied with a polynomial running time. 
We show the following theorem.

\theoConstantSets*
We separately show the results for orienting $\mathcal{O}(\log \ |V|)$ hyperedges and sorting $\mathcal{O}(1)$ sets.

\paragraph*{Orienting $\mathcal{O}(\log \ |V|)$ hyperedges}  

Consider the problem of orienting a hypergraph $H=(V,E)$ with $|E| \le k$ for some $k \in \mathcal{O}(\log \ |V|)$.
In order to apply the threshold algorithm, we have to solve the minimum vertex cover problem for a subgraph of the vertex cover instance $\bar{G}$ of $H$.
Note that even in this special case, the number edges and vertices in $\bar{G}$ can be arbitrarily large.

\begin{theorem}
	Given a hypergraph orientation instance $H=(V,E)$ with $|E| \le k$ for some $k \in \mathcal{O}(\log \ |V|)$, a minimum weight vertex cover for $\bar{G}$ of $H$ can be computed in polynomial time. 
	Thus, for $\epsilon,\delta > 0$, \threshold is $1.618+\epsilon$-competitive with probability at least $1-\delta$.
\end{theorem}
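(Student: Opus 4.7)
The central observation is a structural one about the vertex cover instance $\bar{G}$. By Definition~\ref{def:vertex_cover_instance}, every edge of $\bar{G}$ is of the form $\{v_F, u\}$ where $v_F$ is the \emph{leftmost} vertex of some hyperedge $F \in E$ and $u \in F$. Let $L = \{v_F : F \in E\} \subseteq V$; then $|L| \le |E| \le k$, and every edge of $\bar{G}$ has at least one endpoint in $L$. In other words, $L$ is itself a vertex cover of $\bar{G}$, and $\bar{G}$ decomposes as a union of at most $k$ stars whose centers lie in $L$ (with the caveat that some leaves may themselves be centers of other stars).

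The plan is then to compute a minimum-weight vertex cover by exhaustive enumeration over the small set $L$. Concretely, for each subset $S \subseteq L$:
\begin{enumerate}
\item Verify that $S$ covers all edges of $\bar{G}$ whose both endpoints lie in $L$; if not, discard $S$.
\item For every vertex $u \in V \setminus L$, include $u$ in the candidate cover if and only if $u$ has a neighbor in $L \setminus S$ (such a $u$ must be in any cover extending $S$, and otherwise we save cost by omitting it).
\item Record the total weight of the resulting cover.
\end{enumerate}
Return the candidate of minimum total weight. Because every edge has an endpoint in $L$, the output is always a feasible vertex cover, and by construction the best $S$ coincides with the intersection of an optimal cover with $L$, so the output is optimal. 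The number of subsets is $2^{|L|} \le 2^k = 2^{O(\log |V|)} = |V|^{O(1)}$, and each subset is processed in polynomial time, yielding an overall polynomial-time exact algorithm.

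With a polynomial-time exact vertex cover routine available, we invoke \threshold with $\alpha = 1$. By Theorem~\ref{thm:threshold} (in its hypergraph form, Theorem~\ref{theorem_threshold_hyper}), this yields competitive ratio $\tfrac{1}{2}(1 + \sqrt{5}) = \phi \approx 1.618$ in the deterministic setting. Since computing the mandatory probabilities $p_v$ exactly is \#P-hard for hypergraphs (Theorem~\ref{thm:sharpPhard}), we replace them by the sampling estimates of Lemma~\ref{lem:sampling}, applied with suitable parameters $\epsilon' = \Theta(\epsilon)$ and $\delta' = \delta/|V|$ so that, by a union bound, with probability at least $1-\delta$ every estimate is within $\epsilon'$ of the true value. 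Plugging this into the competitive ratio expression of Theorem~\ref{theorem_threshold_hyper} with $\alpha = 1$ yields a bound arbitrarily close to $\phi$, hence $1.618 + \epsilon$ for any desired $\epsilon > 0$.

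The only potential subtlety is step~1 of the enumeration: the set $L$ need not be independent in $\bar{G}$, so we must not forget to check the edges internal to $L$. Since $\binom{|L|}{2} \le \binom{k}{2} = O(\log^2 |V|)$, this check is trivially polynomial and does not affect the complexity analysis. The rest of the argument is a direct application of the already-established threshold framework.
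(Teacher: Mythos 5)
Your proof is correct and takes essentially the same approach as the paper: both exploit the fact that every edge of $\bar{G}$ is incident to one of at most $k$ leftmost vertices, so that $\bar G$ is a union of stars and one can enumerate $2^{O(k)}=|V|^{O(1)}$ canonical covers before invoking Theorem~\ref{theorem_threshold_hyper} with $\alpha=1$ and the sampling of Lemma~\ref{lem:sampling}. The paper phrases the enumeration as a per-hyperedge choice between including $v_1$ or all of $F\setminus\{v_1\}$, whereas you enumerate subsets of the set $L$ of distinct leftmost vertices and greedily extend to $V\setminus L$; this is a cosmetic difference (yours is in fact slightly tighter when hyperedges share a leftmost vertex).
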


\begin{proof}
	It only remains to show that the minimum weight vertex cover for $\bar{G}$ can be computed in polynomial time. Then, Theorem~\ref{theorem_threshold_hyper} implies the theorem.
	
	By definition of $\bar{G}$, for each hyperedge $F$, a vertex cover must contain either the left-most element $v_1$ of $F$ or all elements of $F\setminus \{v_1\}$.
	Thus, when computing a vertex cover, we, for each hyperedge $F$, have to decide whether to include $v_1$ or $F \setminus \{v_1\}$.
	The running time of enumerating through all such vertex covers is $\mathcal{O}(2^k)$, which is polynomial for $k \in \mathcal{O}(\log \ |V|)$.
\end{proof}

\paragraph*{Orienting the union of a constant number of interval graphs}
Consider the problem of orienting a graph $G=(V,E)$ with $E = \bigcup_{i=1}^k E_i$, where $G_i=(V_i,E_i)$ with $V_i \subseteq V$, for $1 \le i \le k$, is an interval graph.
Orienting such an $G_i$ corresponds to the problem of sorting a single set of uncertainty intervals.
Thus, orienting $G=(V,E)$ corresponds to the problem of sorting $k$ (possibly overlapping) sets.
In the following, we assume that $k$ is constant, and show the following theorem.

\begin{theorem}
	\label{theo_sorting_constant}
	A modified version of \threshold with polynomial running time is $1.618$-competitive for the problem of orienting a graph $G=(V,E)$ with $E = \bigcup_{i=1}^k E_i$, where $G_i=(V_i,E_i)$ with $V_i \subseteq V$, for $1 \le i \le k$, is an interval graph and $k$ is constant.
\end{theorem}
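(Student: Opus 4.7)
The plan is to apply Theorem~\ref{thm:threshold} with $\alpha = 1$, which yields competitive ratio $\tfrac{1}{2}(1+\sqrt{5}) = \phi \approx 1.618$; the only remaining issue is to instantiate the vertex cover oracle so that \threshold runs in polynomial time on this input class. Since we are in the graph orientation setting, the vertex cover instance $\bar G$ coincides with $G$, and the LP-based preprocessing of \threshold reduces the task to computing an \emph{exact} minimum weighted vertex cover on $G[V_{1/2}]$. I would first observe that $G[V_{1/2}]$ inherits the structural assumption of the theorem: its edge set decomposes as $\bigcup_{i=1}^k E_i'$, where $E_i' = \{\{u,v\} \in E_i : u,v \in V_{1/2}\}$ is the edge set of $G_i[V_i \cap V_{1/2}]$, an induced subgraph of the interval graph $G_i$ and hence itself an interval graph. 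So it suffices to show that minimum weighted vertex cover, equivalently maximum weighted independent set, is polynomial-time solvable on any graph that is the union of $k = \mathcal{O}(1)$ interval graphs sharing the common interval representation $\{I_v\}_{v \in V}$.

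The main technical step I propose is a sweep-line dynamic program. Sort the vertices by left endpoint and process them in this order. The DP state is a $k$-tuple $(R_1,\ldots,R_k)$ with $R_i \in \{r_v : v \in V_i\} \cup \{-\infty\}$, representing the largest right endpoint among vertices of $V_i$ already placed in the independent set. When processing a vertex $v$, two transitions are considered: excluding $v$ leaves the state unchanged; including $v$ is feasible iff $R_i < \ell_v$ for every $i$ with $v \in V_i$, in which case each such $R_i$ is updated to $r_v$ and weight $c_v$ is collected. Correctness will hinge on the observation that, in any independent set, the vertices of $V_i$ have pairwise disjoint intervals; since they are processed in left-endpoint order, they are also added in right-endpoint order, so $R_i$ captures precisely the information needed to check feasibility of future inclusions from~$V_i$.

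A state-space count shows there are $\mathcal{O}((n+1)^k)$ states and each transition costs $\mathcal{O}(k)$, giving overall running time $\mathcal{O}(k \cdot n^{k+1})$, which is polynomial for constant $k$. Plugging this exact oracle into the vertex cover slot of \threshold and invoking Theorem~\ref{thm:threshold} with $\alpha = 1$ then concludes the argument. I expect the main obstacle to be justifying the DP state design: specifically, arguing that tracking only the maximum right endpoint per set is information-lossless, which follows from combining the left-to-right sweep order with the pairwise non-overlap condition among chosen $V_i$-vertices. A minor secondary point to check is that the half-integral LP preprocessing and the partition into $V_0, V_{1/2}, V_1$ can still be computed in polynomial time on $G$, which is immediate since the LP has $\mathcal{O}(|V|)$ variables and $\mathcal{O}(|E|)$ constraints.
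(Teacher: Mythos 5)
Your proof is correct, but it follows a genuinely different route from the paper's. The paper's proof does not attack the vertex-cover subproblem on $G[V_{1/2}]$ directly; instead it first runs a local-ratio clique-removal step (repeatedly finding a clique $C$ with $|C|\ge 3$, subtracting the minimum cost from all of $C$, and querying the zero-cost vertices), which costs at most $\tfrac{3}{2}$ times the resulting decrease in $\EX[\OPT]$ since OPT must query at least $|C|-1$ vertices of any clique. This leaves each $G_i$ proper \emph{and} triangle-free, i.e., a disjoint union of paths, and the paper's DP (Lemma~\ref{lemma_vc_constant_width}) then tracks, for each $G_i$, the identity of the current frontier vertex together with a bit recording its membership in the vertex cover; this is only information-lossless when each new vertex has at most one relevant predecessor, which is exactly what triangle-freeness guarantees. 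Your sweep-line DP sidesteps the clique-removal step entirely by tracking a purely geometric statistic -- the largest right endpoint of an independent-set vertex chosen so far from $V_i$ -- and your correctness argument (that among pairwise-disjoint intervals chosen from $V_i$, left-endpoint order coincides with right-endpoint order, so $R_i$ is exactly the information needed to test future additions) is sound even in the presence of triangles or nested intervals. So your approach is simpler in that it needs neither the local-ratio analysis nor the reduction to paths, and you correctly verify that $G[V_{1/2}]$ inherits the union-of-$k$-interval-graphs structure. Two small points worth tightening: with open intervals $I_v=(\ell_v,r_v)$ the feasibility test should read $R_i \le \ell_v$ rather than $R_i < \ell_v$, and the DP table must also be indexed by the sweep position $j$ (giving $\mathcal{O}(n\cdot(n+1)^k)$ cells), which you account for in the stated running time but should make explicit in the state description.
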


Since we assume preprocessed instances, i.e., $I_{v_1} \not\subseteq I_{v_2}$ holds for each edge $\{v_1,v_2\}$, it follows that each $G_i$ is a proper interval graph.

In order to use the threshold algorithm for this special case, we employ Algorithm~\ref{alg:clique_removal}, which uses a local ratio technique to remove all cliques $C$ of size $|C| \ge  3$ from $G$.

\begin{algorithm}[htb]
	\KwIn{Instance $G=(V,E)$ with query costs $c_v$ for each $v \in V$}
	Initialize $Q = \emptyset$\;
	\While{$G$ contains a clique $C$ with $|C| \ge 3$}{
		Let $\delta = \min_{v\in C} c_v$\;
		Set $c_v = c_v - \delta$ for each $v \in C$\label{Line_clique_reduction}\;
		Set $Q = Q \cup \{v \mid v\in C \land c_v = 0 \}$ and $V = V \setminus Q$\;		
	}
	\Return The modified $G$, the modified query costs $c_v$, and query set $Q$\;
	\caption{Clique removal}
	\label{alg:clique_removal}
\end{algorithm}

After applying this algorithm, each $G_i$ is a proper interval graph without cliques of size at least $3$, and, thus, triangle-free.

\begin{lemma}
	\label{lemma_vc_constant_width}
	Let $G=(V,E)$ be a graph orientation instance with $E = \bigcup_{i=1}^k E_i$, where $G_i=(V_i,E_i)$, for $1 \le i \le k$, with $V_i \subseteq V$ is a proper and triangle-free interval graph and $k$ is constant. Then, a minimum weight vertex cover of $G$ can be computed in polynomial time.
\end{lemma}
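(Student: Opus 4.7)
The plan is to establish that the union graph $G$ has pathwidth bounded by $2k-1$, and then invoke the classical dynamic programming algorithm that computes a minimum weighted vertex cover on graphs of bounded pathwidth in polynomial time.

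First, I would argue that at any point $t \in \mathbb{R}$, at most $2k$ vertices have their interval $I_v$ containing $t$. For any fixed $i$, the set $\{v \in V_i : t \in I_v\}$ is a clique in the interval graph $G_i$ by the Helly property of intervals, and since $G_i$ is triangle-free it has clique number at most~$2$. Assuming without loss of generality that $V = \bigcup_i V_i$ (isolated vertices can be dropped without affecting the minimum weighted vertex cover), we get
\begin{equation*}
|A(t)| \;\le\; \sum_{i=1}^k \bigl|\{v \in V_i : t \in I_v\}\bigr| \;\le\; 2k,
\qquad \text{where } A(t) := \{v \in V : t \in I_v\}.
\end{equation*}

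Second, I would construct a path decomposition of $G$ from these bags. Enumerate the sorted endpoints $t_1 < \cdots < t_{2|V|}$ of the intervals and, for each $j$, let $B_j := A(t_j)$. To verify the path decomposition axioms: every vertex $v$ lies in some bag (for instance the bag at $t_j = \ell_v$); every edge $\{u,v\} \in E$ comes from some $E_i$, hence $u, v \in V_i$ and $I_u \cap I_v \neq \emptyset$, so both endpoints appear together in any bag $B_j$ with $t_j \in I_u \cap I_v$; and for each $v$, the bags containing $v$ correspond to the consecutive indices $j$ with $t_j \in I_v$, forming a contiguous subpath. Hence this is a valid path decomposition of width at most $2k-1$.

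Finally, since $k$ is constant, standard dynamic programming over this path decomposition computes a minimum weighted vertex cover of $G$ in time $2^{O(k)} \cdot |V|^{O(1)}$, which is polynomial. The main subtlety I anticipate is the point-multiplicity bound in the first step: one has to recall that a triangle-free proper interval graph is in fact a disjoint union of paths (being claw-free, $C_{\ge 4}$-free, and triangle-free), so that at any point at most two intervals of each $V_i$ overlap. Everything else follows from standard interval-graph arguments and well-known bounded-pathwidth dynamic programming for weighted vertex cover.
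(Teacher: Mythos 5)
Your argument is correct, and it takes a genuinely different route from the paper's. The paper constructs a bespoke dynamic program whose states are tuples $(P,S)$ consisting of one ``frontier'' vertex per $G_i$ together with a $\{0,1\}$ in/out label, and encodes minimum-weight vertex cover as a shortest $u_0$--$u_f$ path in a state graph with $\mathcal{O}(n^k \cdot 2^k)$ nodes. You instead observe that the union graph has pathwidth at most $2k-1$ --- because triangle-freeness of each $G_i$, whose interval representation is precisely $\{I_v\}_{v\in V_i}$, forces at most two intervals of $V_i$ to contain any fixed point --- and then appeal to the off-the-shelf bounded-pathwidth DP for weighted vertex cover. Both proofs exploit the same left-to-right sweep structure, but the pathwidth abstraction is cleaner, more modular (it lets you cite a black-box algorithm rather than verify a hand-built state graph), and yields a slightly smaller state space of $\mathcal{O}(n \cdot 4^k)$. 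Notably, your argument does not actually use that the $G_i$ are \emph{proper} interval graphs, only that they are triangle-free interval graphs, so it is also marginally more general.

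One small technical glitch in the construction as written: because the intervals $I_v = (\ell_v, r_v)$ are open, taking bags $B_j = A(t_j)$ at the endpoints $t_j$ does not quite satisfy the axioms. For example, $v$ is not in the bag at $t_j = \ell_v$ since $\ell_v \notin I_v$, and an edge $\{u,v\}$ may have $I_u \cap I_v$ equal to an open elementary interval containing no endpoint at all. The standard fix is to take one bag per elementary interval $(t_j, t_{j+1})$, evaluated at an interior representative point (say the midpoint); the bound $|A(t)| \le 2k$ holds for every $t \in \mathbb{R}$, so the width bound is preserved, and all three path-decomposition axioms then check out. With that adjustment the proof is complete.
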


Before we prove the lemma, we use it to show that the clique removal allows us to solve the special case in polynomial time using the following algorithm:

\begin{enumerate}
	\item Execute Algorithm~\ref{alg:clique_removal} to obtain a modified graph $G$, modified query costs $c_v$, and a query set $Q_P$.
	\item Execute \threshold on the modified instance, add $Q_p$ to the query set $Q$ that is queried in Line~\ref{line:threshold_stage1} of \threshold.
	Use Lemma~\ref{lemma_vc_constant_width} as black box for the vertex cover problem.
\end{enumerate}

\begin{proof}[Proof of Theorem~\ref{theo_sorting_constant}]
	We show that this algorithm is $1.618$-competitive for the considered special case.
	Since Lemma~\ref{lemma_vc_constant_width} allows us to optimally solve the vertex cover problem,  we can apply Theorem~\ref{thm:threshold} with $\alpha=1$ to conclude that \threshold is $1.618$-competitive on the modified instance in Step~$2$ of the algorithm.
	
	Thus, it remains to argue about the cost for querying $Q_P$.
	Let $i$ be an iteration of the while-loop of Algorithm~\ref{alg:clique_removal}, let $C_i$ denote the clique that is considered in this iteration, and let $\delta_i$ denote the value $\delta$ of this iteration. 
	The cost for querying $Q_P$ is bounded by $c(Q_P) \le \sum_i |C_i| \cdot \delta_i$.
	Since $\OPT$ must query at least $|C_i|-1$ elements of each clique (cf.~Lemma~\ref{lemma:witness_set}), the reduction of the query costs in Line~\ref{Line_clique_reduction} reduces the cost of $\OPT$ by at least $(|C_i|-1)\cdot \delta_i$.
	Since this holds for each realization, $\EX[\OPT]$ is reduced by at least this amount as well.
	Let $\EX[\OPT_r]$ denote the expected reduction in the query cost of $\EX[\OPT]$ caused by the clique removal, then $\EX[\OPT_r] \ge \sum_i (|C_i|-1) \cdot \delta_i$.
	Thus, we can charge $c(Q_p)$ against $\EX[\OPT_r]$ to derive 
	$$
	\frac{c(Q_P)}{\EX[\OPT_r]} \le \max_i \left\{\frac{|C_i|}{(|C_i|-1)}\right\} \le \frac{3}{2},
	$$
	where the last inequality follows from $|C_i| \ge 3$.
	We can conclude that the algorithm is $\max\{1.5,\ 1.618\} = 1.618$-competitive.
\end{proof}

\paragraph*{Proof of Lemma~\ref{lemma_vc_constant_width}}

Let $G=(V,E)$ be a graph orientation instance with $E = \bigcup_{i=1}^k E_i$, where $G_i=(V_i,E_i)$, for $1 \le i \le k$, with $V_i \subseteq V$ is a proper and triangle-free interval graph and $k$ is constant.
Each $G_i$ being a proper and triangle-free interval graph implies that $G_i$ is the union of a set of disjoint paths.
Furthermore, since we assume preprocessed instances, i.e., $I_{v_1} \not\subseteq I_{v_1}$ for each edge $\{v_1,v_2\}$, the lower limits of the uncertainty intervals of each $G_i$ define a unique total order $\prec_i$ of the vertices in $V_i$.

We compute a minimum weight vertex cover via a dynamic program.
A \emph{state} $u$ of the dynamic program is a tuple $u=(P,S)$, where $P$ contains a single vertex $v_i$ for each $G_i$, and $S_{v_i}$ assigns each $v_i\in P$ to either $0$ or $1$.
The state $u$ represents a partial vertex cover for the sub graph induced by the vertices $v_i$ and all vertices that are before $v_i$ in the corresponding order $\prec_i$. 
The value $S_{v_i}$ denotes whether $v_i$ is part of the represented vertex cover ($S_{v_i}=1$) or not ($S_{v_i}=0$). 
We introduce two dummy vertices $s$ and $t$ with cost of zero, and add them to each $G_i$ as the first and last element of the order $\prec_i$.
Without loss of generality, we assume that $s$ and $t$ are part of every vertex cover.
We define $u_0 = (\{s\},1)$ as the initial state, representing that $s$ is part of every vertex cover.
Further, $u_f = (\{t\},1)$ represents a terminal state.
Since the number of subgraphs $G_i$ is bounded by $k$, the number of states is bounded by $\mathcal{O}(n^k \cdot 2^k)$.

We define a directed state graph $G' = (U,A,w)$, where $U$ is the set of DP states, such that the weight of the shortest $u_0$-$u_f$-path in $G'$ corresponds to the weight of a minimum weight vertex cover of $G$.
Consider a state $u=(P,S)$.
Let $v$ be a vertex, such that, for each $G_i$ with $v \in V_i$, the direct predecessor of $v$ according to $\prec_i$ is contained in $P$.
Note that, if $u\not= u_f$, such a vertex $v$ always exists.
Let $B \subseteq P$ denote the set of those predecessors, and 
let $B' = \{v' \in B \mid \{v',v\} \in E\}$ denote the subset of $B$ containing only vertices that share an edge with $v$.	
We add an arc $a$ between $u=(P,S)$ and $u'=(P',S')$ if the following holds:
\begin{enumerate}
	\item 
	If $S_{v'} = 1$ holds for all $v' \in B'$, we add an edge to $u'$ if $P' = \{ v' \in P \mid \exists 1\le i \le k: v' \in V_i \land v \not\in V_i\} \cup \{v\}$,
	 $S'_{v'} = S_{v'}$ for all $v' \in P \cap P'$, and $S'_{v} \in \{0,1\}$. 
	Intuitively, this arc models the extension of the partial vertex cover represented by $u$ by either adding $v$ to the partial vertex cover or deciding that $v$ is not part of the extended partial vertex cover.
	By definition, in the subproblem modeled by $u'$, the vertex $v$ only shares edges with elements of $B'$.
	In case $S_{v'} = 1$ holds for all $v' \in B'$, all elements of $B'$ are part of the vertex cover represented by $u$.
	Thus, in state $u'$, all vertices that share edges with $v$ are already part of the vertex cover, and vertex $v$ can therefore either be part of the represented vertex cover or not, in both cases the partial vertex cover is feasible.
	\item 
	If there exists an $v' \in B'$ with $S_{v'} = 0$, we additionally require $S'_{v}=1$. 
	In this case, $v$ shares an edge with a vertex $v'$ in the sub graph of $u'$, such that $v'$ is not part of the corresponding vertex cover. 
	Thus, $v$ must be part of the vertex cover.
\end{enumerate}
If $S'_v = 1$, we set $w_a = c_v$, which represents that the cost of the vertex cover increased by adding $v$.
Otherwise, we set $w_a = 0$.

A straightforward induction on the construction implies the following:
There is a vertex cover of $G$ with weight $C$ if and only if there is a $u_0$-$u_f$-path with cost $C$ in $G'$.
Since the number of states is bounded by $\mathcal{O}(n^k \cdot 2^k)$, we can find the shortest $u_0$-$u_f$-path with a running time exponential in $k$ but polynomial in the rest of the input.
This implies Lemma~\ref{lemma_vc_constant_width}.

\section{Appendix for Section~\ref{sec:VC-alg}}


\subsection{The graph orientation problem on a special weighted star}
\label{subsec:star:subproblem}

\begin{restatable}[]{lemma}{VertexCoverWeightedStar}
	\label{lemma:vertex_cover_weigthed_star}
	Consider the problem of orienting a star $G=(V,E)$  with center $v \in V$. 
	Let $L$ and $R$ be the vertex cover-based algorithms that query $VC_1 = \{v\}$ and $VC_2 = V\setminus\{v\}$ in the first stage, respectively.
	If $\EX[L] = \EX[R]$, then $L$ is $\frac{4}{3}$-competitive.	
\end{restatable}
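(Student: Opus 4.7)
My plan is to reduce the claim to a two-variable inequality by exploiting a realization-wise identity between $\OPT$ and $\min(L, R)$, where $L, R$ denote the (random) per-realization costs of the two algorithms, and then verify that inequality through an elementary case analysis. I first set up notation: let $c = c_v$ denote the center's cost, $c_i$ the leaves' costs, $C = \sum_i c_i$, and $Y = \sum_{u_i \in M} c_i$ the random total cost of mandatory leaves, with $\EX[Y] = X := \sum_i p_i c_i$. Then $\EX[L] = c + X$ and $\EX[R] = C + p_v c$, so the hypothesis $\EX[L] = \EX[R]$ rewrites as $c(1 - p_v) = C - X$, or equivalently $\EX[L] = p_v c + C$.

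Using Observation~\ref{obs:verification}, in any realization $\OPT$ queries the mandatory vertices plus a minimum vertex cover of the residual star, giving
\begin{equation*}
\OPT = \begin{cases} c + Y & \text{if } v \in M, \\ Y + \min(c, C - Y) & \text{if } v \notin M. \end{cases}
\end{equation*}
Since the per-realization costs are $L = c + Y$ and $R = C + \mathbf{1}_{v \in M} c$, and since $Y + \min(c, C - Y) = \min(c + Y, C)$, a direct check confirms the crucial realization-wise identity $\OPT = \min(L, R)$, so that $\EX[\OPT] = \EX[\min(L, R)]$.

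To lower bound $\EX[\OPT]$, I will observe that $\{v \in M\}$ occurs iff some $w_{u_i}$ lies in $I_v$ and thus depends only on the leaves' weights, whereas $Y$ depends only on $w_v$ (a leaf $u_i$ is mandatory iff $w_v \in I_{u_i}$); hence $\mathbf{1}_{v \in M}$ and $Y$ are independent. This gives $\EX[\OPT] = p_v(c + X) + (1 - p_v)\EX[\min(c + Y, C)]$. Since $y \mapsto \min(c + y, C)$ is concave on $[0, C]$, applying the chord lower bound at the endpoints $y \in \{0, C\}$ together with the identity $1 - p_v = (C - X)/c$ simplifies to
\begin{equation*}
\EX[\OPT] \ge \begin{cases} c + X^2/C & \text{if } c \le C, \\ p_v(c + X) + (1 - p_v)C & \text{if } c \ge C. \end{cases}
\end{equation*}

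An elementary case analysis then finishes the proof. When $c \le C$, the target inequality reduces to $4X^2 - 3CX + cC \ge 0$ on $X \in [C - c, C]$; the boundary values evaluate to $(C - 2c)^2$ and $C(C + c)$, and the quadratic's unconstrained minimum at $X = 3C/8$ lies outside the interval whenever $c < 5C/8$, confirming nonnegativity throughout. When $c \ge C$, the inequality reduces to $p_v c(3 - 4p_v) \le C$ for $p_v \in [1 - C/c, 1]$; the interior maximum at $p_v = 3/8$ gives $9c/16 \le C$ (valid since $c \le 8C/5$ in that subcase), while in the opposite subcase the boundary maximum collapses to $(2C/c - 1)^2 \ge 0$. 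The main technical obstacle is that $\EX[\min(c + Y, C)]$ has no closed form because $Y$ carries a problem-dependent distribution; the resolution is the concave chord bound, which is tight precisely on bimodal $Y$ and produces the matching $4/3$ instance at $c = C/2$, $p_v = 0$, $Y \in \{0, C\}$.
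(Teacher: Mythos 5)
Your proof is correct, and it takes a genuinely different route from the paper's for the harder case $c_v < c(H_v)$. Both proofs start from the same building blocks: the per-realization identity $\OPT = \min(L,R)$ (which the paper states less explicitly but uses), the independence of the event $\{v\in M\}$ from the leaves' mandatory set (since the center being mandatory depends only on the leaves' weights, while which leaves are mandatory depends only on $w_v$), and the consequent decomposition $\EX[\OPT]=p_v\,\EX[L]+(1-p_v)\EX[\min(c+Y,C)]$. For $c_v\ge c(H_v)$ the two analyses coincide after unwinding (the paper's identity $\EX[\OPT]=\EX[L]-p_v(1-p_v)c_v$ equals your $p_v\EX[L]+(1-p_v)C$ upon expanding using $\EX[L]=\EX[R]$). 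Where you diverge is the case $c_v<c(H_v)$: the paper introduces the auxiliary probability $q_L=\mathbb{P}[\sum_{u}(1-X_u)c_u\ge c_v]$ and a slack parameter $s$, derives bounds on $q_L s$, and then maximizes the two-variable rational function $f(p_v,s')$ via a calculus case analysis. You instead observe that $y\mapsto\min(c+y,C)$ is concave on $[0,C]$, so the chord through $(0,c)$ and $(C,C)$ gives the pointwise lower bound $\min(c+y,C)\ge c+\frac{C-c}{C}y$, hence $\EX[\min(c+Y,C)]\ge c+\frac{C-c}{C}X$; combined with the hypothesis this collapses $\EX[\OPT]\ge c + X^2/C$, and the $4/3$ bound reduces to the one-variable quadratic inequality $4X^2-3CX+cC\ge0$ on $X\in[C-c,C]$, disposed of by evaluating endpoints and the vertex. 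This is a cleaner and shorter argument; the chord bound implicitly captures the worst case (a $\{0,C\}$-valued $Y$) that the paper's $(q_L,s)$ parametrization is tracking by hand, and indeed you correctly identify that the chord bound is tight exactly for bimodal $Y$, producing the matching $4/3$ instance. Both proofs rely on the same independence assumption, which is noted in the paper as the reason the lemma only applies to subproblems inherited from bipartite instances.
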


\begin{proof}
The analysis relies on $p_v$ and $p_u$ being independent for each $u \in H_v$. 
In the graph orientation problem, for every vertex $u$ the probability $p_u$ corresponds to the probability that the exact value of at least one $v$ with $\{u,v\} \in E$ is contained in $I_u$. 
Since the graph is bipartite and there is an edge between $v$ and $u$, it follows that $v$ and $u$ do not share any adjacent vertices.
Thus, $p_v$ and $p_{u}$ are independent.
This also implies that $p_v$ and~$L$ are independent.
In the following let $H_v = V \setminus \{v\}$ denote the set of leaves. 
Thus, for the expected costs of the algorithms, we have
\begin{equation}
\EX[L] = c_v + \sum_{u\in V} p_u \cdot c_u = p_v \cdot c_v + c(H_v) = \EX[R].
\label{eq:LRdef}
\end{equation}

For each $v \in V$, let $X_v$ be an indicator variable denoting whether $v$ is mandatory.
Note that $p_v =\mathbb{P}[X_v=1]$.

\subparagraph{Case $c(H_v) \le c_v$.}
\label{subsubsec:star:assumption}

If $c(H_v) \le c_v$, querying the leaves is always the best strategy if the center is not mandatory, i.e., if $X_v = 0$, independently of whether the elements of $H_v$ are mandatory.
Similarly, if the center is mandatory, i.e., if $X_v = 1$, querying the center first is always the best strategy.
Thus, we can write $\EX[\OPT]$ as
\begin{align*}
	\EX[\OPT] &= p_v \cdot \EX[L \mid X_v = 1] + (1-p_v) \cdot \EX[R \mid X_v = 0]\\
	 &= p_v \cdot \EX[L] + (1-p_v) \cdot c(H_v)\\
	&= p_v \cdot \EX[L] + (1-p_v) \cdot (\EX[L] - p_v \cdot c_v)\\
	&\le \EX[L] - p_v \cdot (1-p_v) \cdot c_v,
\end{align*}
where the second equality uses the fact that $p_v$ and $L$ are independent, and the third equality uses $\EX[L] = \EX[R]$.
Since $0 \le p_v \le 1$, it follows that $p_v \cdot (1-p_v) \cdot c_v \le \frac{c_v}{4}$.
This directly implies $\frac{\EX[L]}{\EX[\OPT]} \le \frac{\EX[L]}{\EX[L] - p_v \cdot (1-p_v) \cdot c_v} \le \frac{4}{3}$, where the last inequality uses $\EX[L] \ge c_v$.
For the remainder of the analysis, we assume $c(H_v) > c_v$.

\subparagraph{Characterizing $\EX[L]$ and $\EX[\OPT]$.}

We begin by characterizing $\EX[L]$ in terms of conditional expectations:
$$
\EX[L] = p_v \cdot \EX[L \mid X_v = 1] + (1-p_v) \cdot \EX[L \mid X_v = 0].
$$
Since $X_v$ and $L$ are independent, it follows that $\EX[L \mid X_v = 1] = \EX[L]$.
If the center is mandatory, i.e., $X_v = 1$, we know that both $\OPT$ and $L$ have the same expected value.
That is, $\EX[\OPT \mid X_v = 1] = \EX[L \mid X_v = 1]$.

Thus, we are more interested in the case when $X_v = 0$.
For this case, we know that
$$\EX[L \mid X_v = 0] = \EX[L] =  c_v + \sum_{u\in H_v} p_u \cdot c_u = \EX[R] = p_v \cdot c_v + c(H_v).$$
But whether algorithm $L$ is the best possible strategy depends on how much query cost of the leaves is mandatory.
We have two cases.
In the first case, when $\sum_{u \in H_v} (1-X_u) \cdot  c_u \ge c_v$, querying the center first is a better strategy than querying the leaves first, even if the center is not mandatory.
In the second case, when $\sum_{u \in H_v} (1-X_u) \cdot c_u < c_v$, querying the leaves is the better strategy if additionally $X_v = 0$.

Consider the case when $\sum_{u \in H_v} (1-X_u) \cdot c_u \ge c_v$.
In this case, the cost of algorithm~$L$ is at most $c(H_v)$ and we can write $\EX[L \mid X_v = 0 \land c(H_v) \cdot (1-X_u) \ge c_v] = c(H_v) - s$, where $s \ge 0$ describes the slack between $c(H_v)$ and $\EX[L \mid X_v = 0 \land \sum_{u \in H_v} (1-X_u) \cdot c_u \ge c_v]$.
Define $q_L := \mathbb{P}\left[\sum_{u \in H_v} (1-X_u) \cdot c_u\ge c_v\right]$; then we have
\begin{align*}
\EX[L \mid X_v = 0] = q_L  \cdot \left(c(H_v) - s\right) + (1-q_L) \cdot \EX\Big[L \Big| X_v = 0 \land \sum_{u \in H_v} (1-X_u)  \cdot c_u< c_v\Big].
\end{align*}
Using this equation and $\EX[L \mid X_v = 0] = \EX[L] = \EX[R] = p_v \cdot c_v + c(H_v)$, we can derive
\begin{align*}
\EX\Big[L \Big| X_v = 0 \land \sum_{u \in H_v} (1-X_u) \cdot c_u < c_v\Big] = c(H_v) + \frac{1}{1-q_L} \cdot p_v \cdot c_v + \frac{q_L}{1-q_L} \cdot s.
\end{align*}
Finally, we use this to characterize $\EX[L]$:
\begin{align*}
	\EX[L] &= p_v \cdot \EX[L] + (1-p_v) \cdot q_L \cdot \left(c(H_v) - s\right)\\
	&+ (1-p_v) \cdot (1-q_L) \cdot \left(c(H_v) + \frac{1}{1-q_L} \cdot p_v \cdot c_v + \frac{q_L}{1-q_L} \cdot s\right)
\end{align*}

To characterize $\EX[\OPT]$, observe that the only case where $\OPT$ and $L$ deviate is the case where $X_v = 0$ and $\sum_{u \in H_v} (1-X_u) \cdot c_u < c_v$.
For that case, $\OPT$ queries only the leaves at cost $c(H_v)$.
Thus, we can write $\EX[\OPT]$ as
$$\EX[\OPT] = p_v \cdot \EX[L] + (1-p_v) \cdot q_L \cdot \left(c(H_v) - s\right) + (1-p_v) \cdot (1-q_L) \cdot c(H_v).$$

\subparagraph{Upper bounding the competitive ratio.}

We now use the characterizations of $\EX[L]$ and $\EX[\OPT]$ to show $\frac{4}{3}$-competitiveness.
Consider the difference between $\EX[L]$ and $\EX[\OPT]$,
$$
\EX[L] - \EX[\OPT] = (1-p_v) \cdot (p_v \cdot c_v + q_L \cdot s).
$$
We can use this difference to upper bound the competitive ratio:
$$
\frac{\EX[L]}{\EX[\OPT]} \le \frac{\EX[L]}{\EX[L]- (1-p_v) \cdot p_v \cdot c_v - (1-p_v) \cdot q_L \cdot s}.
$$

We continue by upper bounding $q_L \cdot s$.
Consider again the characterization of $\EX[L]$.
In the case when $X_v = 0$ and $\sum_{v\in H_v} (1-X_u) \cdot c_u < c_v$, the expected value of $L$ is larger than $c(H_v)$  by $\frac{1}{(1-q_L)} \cdot p_v \cdot c_v + \frac{q_L}{1-q_L} \cdot s$.
By Equation~(\ref{eq:LRdef}), the difference between $c(H_v)$ and the cost of $L$ is at most $c_v$.
It follows that
\begin{align*}
&\frac{1}{(1-q_L)} \cdot p_v \cdot c_v + \frac{q_L}{1-q_L} \cdot s \le c_v\\
\implies& p_v \cdot c_v + q_L \cdot s \le c_v \cdot (1 - q_L)\\
\implies& q_L \cdot s \le c_v \cdot (1-p_v) - c_v \cdot q_L \\
\implies& q_L \cdot (c_v + s) \le c_v \cdot (1-p_v)\\
\implies& q_L  \le c_v \cdot \frac{1-p_v}{c_v + s}
\end{align*}
Thus, we obtain 
$$
\frac{\EX[L]}{\EX[\OPT]} \le \frac{\EX[L]}{\EX[L]- (1-p_v) \cdot p_v \cdot c(v) - \frac{c_v \cdot (1-p_v)^2 \cdot s}{c_v+s}}.
$$
Moreover, we can observe that this upper bound decreases for increasing $\EX[L]$, so we only need a lower bound on $\EX[L]$.
In the characterization of $\EX[L]$, the expected cost of~$L$ for the case when $X_v = 0$ and $c(H_v) \cdot (1-p_u) \ge c_v$ is $c(H_v) - s$. 
Since the cost of $L$ can never be smaller than $c_v$, it follows that $s \le c(H_v) - c_v$, so the expected cost of the algorithm is $\EX[L] = \EX[R] = c(H_v) + c_v \cdot p_v \ge c_v + s + p_v \cdot c_v$.
By plugging this into the ratio we obtain:
\begin{align*}
\frac{\EX[L]}{\EX[\OPT]} &\le \frac{c_v + p_v \cdot c_v+ s}{c_v + p_v \cdot c_v + s -(1-p_v) \cdot p_v \cdot c_v - \frac{c_v \cdot (1-p_v)^2 \cdot s}{c_v+s}}\\
&= \frac{c_v + p_v \cdot c_v + s}{c_v + p_v \cdot c_v + s - (1 - p_v) \cdot p_v \cdot c_v - \frac{c_v \cdot (1-p_v)^2 \cdot s}{c_v+s}}
\end{align*}
We can now use  
$$\frac{c_v \cdot (1-p_v)^2 \cdot s}{c_v+s} = \frac{c_v \cdot (1-p_v)^2 \cdot s}{c_v \cdot (1+\frac{s}{c_v})} = \frac{(1-p_v)^2 \cdot s}{1+\frac{s}{c_v}}$$
to write the ratio as
\begin{align*}
\frac{\EX[L]}{\EX[\OPT]} &\le \frac{c_v + c_v \cdot p_v + s}{c_v + p_v \cdot c_v + s - (1 -p_v) \cdot p_v \cdot c_v - \frac{(1-p_v)^2 \cdot s}{1+\frac{s}{c_v}}}\\
&= \frac{1 + p_v + \frac{s}{c_v}}{1 + p_v + \frac{s}{c_v} - (1 - p_v) \cdot p_v - \frac{(1-p_v)^2 \cdot \frac{s}{c_v}}{1+\frac{s}{c_v}}}
\end{align*}
Assume $c_v > 0$; we can do this w.l.o.g.\ because we can query free elements in a preprocessing step. Let $s' = \frac{s}{c_v}$; then we have that 
\begin{align*}
\frac{\EX[L]}{\EX[\OPT]} 
&\le \frac{1 + p_v + s'}{1 + p_v + s' - (1 - p_v) \cdot p_v - \frac{(1-p_v)^2 \cdot s'}{1+s'}} = f(p_v, s').
\end{align*}

\subparagraph{Showing $\frac{4}{3}$-competitiveness.} To complete the proof, we show that $f(p_v, s') \leq \frac{4}{3}$ for any $0 \le p_v \le 1$ and $s' \ge 0$.
We can rewrite
\begin{displaymath}
 f(p_v, s') = \frac{(1 + s')(1 + p_v + s')}{1 + s' + (p_v + s')^2}.
\end{displaymath}
We fix the value of~$s'$ and look for critical points, so we consider the partial derivative of $f(p_v, s')$ on $p_v$, which is
\begin{displaymath}
 \frac{\partial}{\partial p_v} f(p_v, s') = - \frac{(1 + s')(p_v^2 + 2 p_v (1 + s') + s'^2 + s' - 1)}{((p_v + s')^2 + 1 + s')^2}.
\end{displaymath}
The denominator is clearly always greater than zero.

First let us consider $s' \geq \frac{1}{\phi} = \frac{\sqrt{5}-1}{2}$.
If $s'^2 + s' - 1 \geq 0$, which always holds for $s' \geq \frac{1}{\phi}$, then clearly $\displaystyle\frac{\partial}{\partial p_v} f(p_v, s') \leq 0$, so $f(p_v, s')$ is non-increasing and is maximized when $p_v = 0$.
Let $g(s') = f(0, s') = \displaystyle\frac{(1+s')^2}{1 + s' + s'^2}$; then $\displaystyle\frac{d}{ds'} g(s') = \displaystyle\frac{1-s'^2}{(1+s'+s'^2)^2}$, and the only critical point is obtained by taking $\displaystyle \frac{d}{ds'} g(s') = 0$, which holds when $s' = 1$.
It is clear that $\displaystyle \frac{d}{ds'} g(s') \geq 0$ for $0 \leq s' \leq 1$ and $\displaystyle \frac{d}{ds'} g(s') \leq 0$ for $s' \geq 1$, so $g(s')$ has a global maximum when $s' = 1$.
Thus the maximum value of $f(p_v, s')$ for $s' \geq \frac{1}{\phi}$ and $p_v \geq 0$ is $f(0, 1) = \frac{4}{3}$.

Now assume $0 \leq s' < \frac{1}{\phi}$.
Let us look at the critical points by taking $\displaystyle\frac{\partial}{\partial p_v} f(p_v, s') = 0$.
Since the denominator is always positive and $s' \geq 0$, we only have a zero when
\begin{displaymath}
  N(p_v, s') = p_v^2 + 2 p_v (1 + s') + s'^2 + s' - 1 = 0, 
\end{displaymath}
which by the quadratic formula yields $p_v = -1 -s' \pm \sqrt{s' + 2}$.
Since we need $p_v \geq 0$, we have $p_v = -1 -s' + \sqrt{s' + 2}$, and $0 \leq p_v \leq 1$ since $0 \leq s' < \frac{1}{\phi}$.
We claim that this is always a maximum point.
Clearly $\displaystyle\frac{\partial}{\partial p_v} f(p_v, s') \geq 0$ whenever $N(p_v, s') \leq 0$ and vice-versa, but $\displaystyle\frac{\partial}{\partial p_v} N(p_v, s') = 2(1 + p_v + s')$, which is non-negative for $p_v, s' \geq 0$, so $N(p_v, s')$ is non-decreasing.
We can thus conclude that $\displaystyle\frac{\partial}{\partial p_v} f(p_v, s') \geq 0$ for $0 \leq p_v \leq -1 -s' + \sqrt{s' + 2}$ and $\displaystyle\frac{\partial}{\partial p_v} f(p_v, s') \leq 0$ for $-1 -s' + \sqrt{s' + 2} \leq p_v \leq 1$, so we have a maximum point when $p_v = -1 - s' + \sqrt{s' + 2}$.
Finally, let 
\begin{displaymath}
 h(s') = f(-1 -s' + \sqrt{s'+2}, s') = \frac{(1 + s')\sqrt{s' + 2}}{2s' + 4 -2 \sqrt{s' + 2}};
\end{displaymath}
we have that $\displaystyle\frac{d}{ds'} h(s') = \frac{1}{4 \sqrt{s' + 2}}$, which is always non-negative for $s' \geq 0$.
Thus $h(s')$ is a non-decreasing function, so its maximum for $0 \leq s' < \frac{1}{\phi}$ is attained when $s'$ tends to $\frac{1}{\phi}$.
Therefore, for $0 \leq s' < \frac{1}{\phi}$ and $p_v \geq 0$, we have that $f(p_v, s') \leq h(\frac{1}{\phi}) = \frac{\sqrt{5}+3}{4} < \frac{4}{3}$.
\end{proof}


\subsection{Proof of Theorem~\ref{thm:VC-single-set}}
\label{subsec:proof-vc-single-set}

\theoremVCSingleSet*

\begin{proof}
Let $S= \{v_0,v_1,\ldots,v_n\}$ be the single hyperedge.
We assume that $v_0,v_1,\ldots,v_n$ are ordered by non-decreasing 
left endpoints of the corresponding intervals $I_0,\ldots,I_n$.
That is, $v_0$ is the leftmost vertex of $S$ and,
by our assumptions in Section~\ref{sec:preliminaries}, 
all intervals $I_1,\ldots,I_n$ intersect $I_0$, 
but none of them is contained in $I_0$.

In this setting, the vertex cover instance is a star with center $v_0$
and leaves $v_1,\ldots,v_n$.
Thus, we only have to consider two vertex cover-based algorithms:
the algorithm $L$ that queries vertex cover $\{v_0\}$ in the first stage, 
and the algorithm $R$ that queries $\{v_1,\ldots,v_n\}$ in the first stage.
Note that, while the vertex cover instance is a star, we cannot apply the analysis of Appendix~\ref{subsec:star:subproblem}:
This is because that analysis requires the mandatory probabilities of $v_0$ and each $v_i$ with $1 \le i \le n$ to be independent, which is not the case here.

If $n\ge 3$, it is sufficient to consider algorithm $L$.
The algorithm simply queries the vertices in order of
left endpoints, starting with $v_0$, until we can identify
the minimum element. 
If $v_0$ is also queried by $\OPT$,
we have $L=\OPT$. Otherwise, $\OPT$ must query all
of $v_1,\ldots,v_n$, and the competitive ratio is at most
$\frac{n+1}{n} \le \frac{4}{3}$.
Since $L$ is $\frac{4}{3}$-competitive in this case, so is \bestVC.

So it remains to consider the cases $n=1$ and $n=2$.
For $n=1$, the configuration of the intervals is shown in Figure~\ref{fig:min1set2}.
Here, $p$ is the probability that the weight of $v_0$ is contained in~$I_1$, and $q$ is the probability that the weight of $v_1$ is contained in $I_0$. 
Note that in this case $p$ and $q$ correspond to the mandatory probabilities of $v_1$ and $v_0$.

\begin{figure}[h]
  \centering
  \begin{tikzpicture}[line width = 0.3mm, scale=1.2]
    \interval{$I_0$}{0}{2}{0}
    \interval{$I_1$}{1}{3}{-0.5}

    \draw[dotted] (1, 0.5) -- (1, -1);
    \draw[dotted] (2, 0.5) -- (2, -1);

    \node at (0.5, 0.2) {$1-p$};
    \node at (1.5, 0.18) {$p$};

    \node at (1.5, -0.32) {$q$};
    \node at (2.5, -0.3) {$1-q$};
  \end{tikzpicture}
  \caption{Configuration for a hyperedge with two vertices.}
  \label{fig:min1set2}
\end{figure}
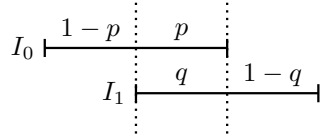

We can assume that $q\ge p$ (otherwise, we swap $I_0$ and $I_1$ and flip the x-axis). Since $\OPT$ has to query at least one vertex and only queries a second one if both are mandatory, we have $\E[\OPT]=1+pq$.
Algorithm $L$ queries $I_0$ first and has $\E[L]=(1-p)\cdot 1+ p \cdot2 = 1+p$. The ratio $\displaystyle\frac{1+p}{1+pq}$ for $0\le p\le 1$ and $p\le q\le 1$ is maximized for $q=p$, in which case it is $\displaystyle\frac{1+p}{1+p^2}$.
The derivative of $f(p)=\displaystyle\frac{1+p}{1+p^2}$ is $f'(p)=\displaystyle\frac{1-2p-p^2}{(1+p^2)^2}$, which is positive for $0\le p< -1+\sqrt{2}$, equal to $0$ for $p=-1+\sqrt{2}$, and negative for $1+\sqrt{2}<p\le 1$.
Hence, the maximum of $f(p)$ in the range $0\le p\le 1$ is attained at $p_0=-1+\sqrt{2}\approx 0.4142$
with value $f(p_0)=\frac{1+\sqrt{2}}{2}\approx 1.207$.
Thus, 
\bestVC is $1.207$-competitive for $n=1$.

It remains to consider case $n=2$. The configuration of the intervals is shown in Figure~\ref{fig:min1set3}.
Note that the order of the right endpoints of $I_1$ and $I_2$ is irrelevant for the proof.

\begin{figure}[h]
  \centering
  \begin{tikzpicture}[line width = 0.3mm, scale=1.2]
    \interval{$I_0$}{0}{3}{0}
    \interval{$I_1$}{1}{4}{-0.5}
    \interval{$I_2$}{2}{5}{-1}

    \draw[dotted] (1, 0.5) -- (1, -1.5);
    \draw[dotted] (2, 0.5) -- (2, -1.5);
    \draw[dotted] (3, 0.5) -- (3, -1.5);

    \node at (0.5, 0.2) {$p_0$};
    \node at (1.5, 0.2) {$p_1$};
    \node at (2.5, 0.2) {$p_2$};

    \node at (1.5, -0.3) {$q_1$};
    \node at (2.5, -0.3) {$q_2$};
    \node at (3.5, -0.3) {$q_3$};

    \node at (2.5, -0.8) {$r_2$};
    \node at (4, -0.8) {$r_3$};
  \end{tikzpicture}
  \caption{Configuration for three intervals.}
  \label{fig:min1set3}
\end{figure}
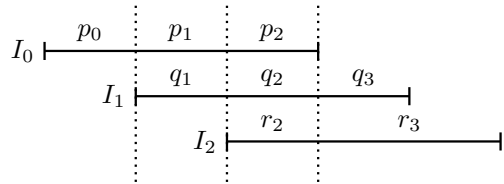

Again, we only have to consider the two vertex cover-based algorithms $L$ and $R$. 
We have:
\begin{eqnarray*}
\E[\OPT] &=& p_0 \cdot 1 + p_1 \cdot 2 + p_2 (q_1 \cdot 2 + q_2\cdot 3+q_3(r_2\cdot 3+r_3 \cdot 2))\\
\E[L] &=& p_0\cdot 1 + p_1\cdot 2 + p_2(q_1\cdot2+q_2\cdot3+q_3\cdot3)\\
\E[R] & = & 3-q_3 r_3
\end{eqnarray*}
By definition, 
\bestVC has an expected value of at most $\min\{ \E[L], \E[R] \}$, and, thus, the competitive ratio is at most
$$
\frac{\min\{ \E[L], \E[R] \}}{\E[\OPT]}.
$$

First, we want to remove some parameters to simplify our calculations.
Note that if we move $\varepsilon$ probability from $q_2$ to $q_1$,
then $\E[\OPT]$ and $\E[L]$ decrease by $p_2\varepsilon$, while $\E[R]$ is unchanged. So we can assume that $q_2=0$.
Furthermore, if we move $\varepsilon$ probability fro $p_1$ to~$p_0$,
then $\E[\OPT]$ and $\E[L]$ decrease by $\varepsilon$, while $\E[R]$ is unchanged. So we can assume that $p_1=0$.
This gives:
\begin{eqnarray*}
\E[\OPT] &=& p_0 \cdot 1 + p_2 (q_1 \cdot 2 + q_3(r_2\cdot 3+r_3 \cdot 2))\\
\E[L] &=& p_0\cdot 1 + p_2(q_1\cdot2+q_3\cdot3)\\
\E[R] & = & 3-q_3 r_3
\end{eqnarray*}
Now we rename the parameters as
\begin{displaymath}
p := p_2, \quad 1 - p = p_0, \quad q := q_1, \quad 1-q = q_3, \quad r := r_2, \quad 1-r = r_3.
\end{displaymath}
The equations then become:
\begin{eqnarray*}
\E[\OPT] &=& 1+p+p(1-q)r\\
\E[L] &=& 1+p(2-q)\\
\E[R] & = & 2+r+q-qr
\end{eqnarray*}

The corresponding picture is shown in Figure~\ref{fig:min1set3renamed}.

\begin{figure}[h]
  \centering
  \begin{tikzpicture}[line width = 0.3mm, scale=1.2]
    \interval{$I_0$}{0}{3}{0}
    \interval{$I_1$}{1}{4}{-0.5}
    \interval{$I_2$}{2}{5}{-1}

    \draw[dotted] (1, 0.5) -- (1, -1.5);
    \draw[dotted] (2, 0.5) -- (2, -1.5);
    \draw[dotted] (3, 0.5) -- (3, -1.5);

    \node at (0.5, 0.2) {$1-p$};
    \node at (1.5, 0.2) {$0$};
    \node at (2.5, 0.2) {$p$};

    \node at (1.5, -0.3) {$q$};
    \node at (2.5, -0.3) {$0$};
    \node at (3.5, -0.3) {$1-q$};

    \node at (2.5, -0.8) {$r$};
    \node at (4, -0.8) {$1-r$};
  \end{tikzpicture}
  \caption{Configuration for three intervals assuming $p_1 = q_2 = 0$.}
  \label{fig:min1set3renamed}
\end{figure}
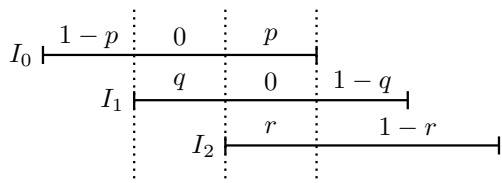

\paragraph*{Case 1: $\E[L] \le \E[R]$.}
We want to show:
\begin{equation}
\E[L] \le \frac43 \cdot \E[\OPT]
\label{case1claim}
\end{equation}
This inequality (\ref{case1claim}) can be transformed as follows:
\begin{eqnarray*}
3+3p(2-q) &\le& 4+4p+4p(1-q)r \\
\Leftrightarrow
2p+4pqr & \le &1+4pr+3pq
\end{eqnarray*}
This is clearly satisfied for $p=0$, so we can divide by $p$ and get:
\begin{eqnarray}
2+4qr & \le & \frac1p +4r+3q
\label{case1claimB}
\end{eqnarray}
With respect to $p$, this inequality is most difficult to satisfy
if $p$ is as large as possible. To see how large $p$ can be, we
look back at our assumption for the current case,
$\E[L] \le \E[R]$.
Expanding this assumption gives:
$$
1+p(2-q) \le 2+r+q-qr
$$
This can be reformulated as:
\begin{equation}
p \le \frac{1+r+q-qr}{2-q} \label{condp}
\end{equation}
We now distinguish two cases depending on whether the right-hand side of (\ref{condp}) is larger than~$1$ or not.

\paragraph*{Case 1.1: $\displaystyle\frac{1+r+q-qr}{2-q}\ge 1$.}
This condition is equivalent to:
\begin{eqnarray}
1+r+q-qr & \ge & 2-q \nonumber \\
\Leftrightarrow r+2q & \ge & 1+qr \label{cond11}
\end{eqnarray}
In this case, we can set $p$ to $1$. Instead of inequality (\ref{case1claimB}), it now suffices to
show the following:
\begin{eqnarray}
2+4qr & \le & 1 +4r+3q \nonumber \\
\Leftrightarrow
1+4qr & \le & 4r+3q \label{case1claimC}
\end{eqnarray}
By (\ref{cond11}) we have:
$$
1+4qr = 1+qr + 3qr \le r+2q+3qr
$$
Thus, to show (\ref{case1claimC}) it suffices to show:
\begin{eqnarray*}
r+2q+3qr & \le & 4r+3q \\
\Leftrightarrow 3qr & \le & 3r+q
\end{eqnarray*}
This is obviously satisfied as $3qr\le 3r$ follows from $q\le 1$.

\paragraph*{Case 1.2: $\displaystyle\frac{1+r+q-qr}{2-q}< 1$.}
This condition is equivalent to:
\begin{eqnarray}
1+r+q-qr & < & 2-q \nonumber \\
\Leftrightarrow
r+2q & < & 1+qr \nonumber \\
\Rightarrow
4qr+4q^2r^2 & > & 4r^2q + 8 rq^2 \label{cond12}
\end{eqnarray}
(In the last step, we have multiplied the inequality with $4qr$.)
In this case, we can set $p$ to $\displaystyle\frac{1+r+q-qr}{2-q}$.
Instead of inequality (\ref{case1claimB}) it now suffices to show the following:
\begin{eqnarray*}
2+4qr & \le & \frac{2-q}{1+r+q-qr} +4r+3q \\
\Leftrightarrow
(1+r+q-qr)(2+4qr-4r-3q) & \le & 2-q \\
\Leftrightarrow
2-2r-q-5qr-4r^2-3q^2+7q^2r+8qr^2-4q^2r^2
 & \le & 2-q \\
\Leftrightarrow
7q^2r+8qr^2 & \le & 2r+qr+4r^2+3q^2+(4q^2r^2+4qr)
\end{eqnarray*}
By (\ref{cond12}), it suffices to show that:
$$
7q^2r+8qr^2  \le  2r+qr+4r^2+3q^2+4r^2q+8rq^2
$$
This can be transformed into:
$$
4qr^2 \le 2r+qr+4r^2+3q^2+rq^2
$$
This clearly holds because $4qr^2\le 4r^2$ follows from $q\le 1$.

\paragraph*{Case 2: $\E[L] \ge \E[R]$.}
We want to show:
\begin{equation}
\E[R] \le \frac43 \cdot \E[\OPT]
\label{case2claim}
\end{equation}
This inequality (\ref{case2claim}) can be transformed as follows:
\begin{eqnarray}
6+3r+3q-3qr &\le& 4+4p+4p(1-q)r \nonumber \\
\Leftrightarrow
2+3r+3q+4pqr & \le & 3qr+4p+4pr \nonumber \\
\Leftrightarrow
p(4+4r-4qr) & \ge & 2+3r+3q-3qr
\label{case2claimB}
\end{eqnarray}
Expanding our assumption
$\E[L] \ge \E[R]$
gives:
$$
2+r+q-qr \le 1+p(2-q)
$$
This can be reformulated as:
\begin{eqnarray}
1+r+q+pq & \le & 2p+qr \nonumber \\
\Leftrightarrow
p(2-q) & \ge & 1+r+q-qr \nonumber \\
\Leftrightarrow
p & \ge & \frac{1+r+q-qr}{2-q}
\label{cond2}
\end{eqnarray}
We observe that this is only possible if the right-hand side of (\ref{cond2}) is at most~$1$, so we
must have the following for this to be possible, but we will not need to use this fact in the
remainder of the proof.
\begin{eqnarray}
1+r+q-qr &\le& 2-q \nonumber\\
r+2q &\le& 1+qr \nonumber
\end{eqnarray}
Inequality (\ref{case2claimB}) is most difficult to satisfy if $p$ is as
small as possible, so in view of (\ref{cond2}) we can set $p=\displaystyle\frac{1+r+q-qr}{2-q}$
and then prove the following inequality to establish~(\ref{case2claimB}):
\begin{eqnarray}
\frac{1+r+q-qr}{2-q}(4+4r-4qr) & \ge & 2+3r+3q-3qr \nonumber \\
\Leftrightarrow
2r+5qr+4r^2+3q^2+4q^2r^2 & \ge & 8qr^2+7q^2r \label{case2claimC}
\end{eqnarray}
We show that (\ref{case2claimC}) holds by showing the following two inequalities:
\begin{eqnarray}
5qr+4r^2 &\ge& 8qr^2 \label{case2claimD1}\\
2r+3q^2+4q^2r^2 &\ge&7q^2r \label{case2claimD2}
\end{eqnarray}
Inequality (\ref{case2claimD1}) holds because $5qr\ge 5qr^2$ and $4r^2\ge 4qr^2$, so
$5qr+4r^2\ge 9qr^2\ge 8qr^2$.
To show (\ref{case2claimD2}), we distinguish two cases for~$r$.

\paragraph*{Case 2.1: $r\le \frac35$.}
In this case $3q^2 \ge 5 q^2r$ and hence
$2r+3q^2 \ge 2q^2r + 5 q^2r = 7q^2r$, and thus (\ref{case2claimD2}) holds.

\paragraph*{Case 2.2: $r > \frac35$.}
In this case we have $4q^2r^2\ge 4q^2r \cdot \frac{3}{5}\ge 2 q^2r$,
and hence $2r+3q^2+4q^2r^2 \ge 2q^2r+3q^2r+2q^2r=7q^2r$, so (\ref{case2claimD2}) also
holds in this case.
\end{proof}


\subsection{Proof of Theorem~\ref{thm:lb1setuniform}}

\theoremLBSingleSetUniform*

\begin{proof}
Consider $n+1$ elements $e_0, \ldots, e_n$, with $I_0 = (0, 2)$ and $I_i = (1, 3)$ for $i > 0$, and uniform query costs $c_i = 1$ for all $i$.
The probabilities are such that $\mathbb{P}[v_0 \in (1, 2)] = \frac{n-1}{n}$ and $\mathbb{P}[v_i \in (1, 2)] = \epsilon$ for all $i > 0$, with some $0 < \epsilon \ll \frac{n-1}{n}$.
It is easy to show that $\EX[\OPT] = n - \frac{n-1}{n}(1-\epsilon)^n$, which tends to $\frac{n^2 - n + 1}{n}$ as $\epsilon$ approaches $0$.
By the argumentation in~\cite[Section~3]{chaplick20stochasticLATIN}, the best decision tree either queries $e_0$ and then all other elements if $v_0 \in (1, 2)$, or it queries $e_1, \ldots, e_n$ adaptively (in an order that does not matter for this instance since they are all identical) until either a query is forced in $e_0$ (when some $v_i \in (0, 1)$ with $i > 0$) or the problem is solved by querying all elements except $e_0$ (when $v_i \in [1, 2)$ for all $i > 0$).
The first decision tree clearly has expected query cost $1 + n \cdot \mathbb{P}[v_0 \in (1, 2)] = n$.
The expected query cost of the second decision tree can be computed with the following recurrence, where $A(k)$ is the cost of the subtree for elements $e_0$ and $e_{n-k+1}, \ldots, e_n$:
\begin{displaymath}
 \left\{ \begin{array}{l} A(1) = 1 + \epsilon \\ A(k) = 1 + \epsilon \cdot \left( 1 + \frac{n-1}{n} \cdot (k -1) \right) + (1-\epsilon) \cdot A(k-1) \end{array} \right.
\end{displaymath}
It is easy to show by induction that $A(k) = k - \frac{k}{n} + \left( 1 + \frac{1}{n\epsilon}\right)(1 - (1-\epsilon)^k)$, so we have that
$A(n) = n - (1-\epsilon)^n + \frac{1 - (1-\epsilon)^n}{n \epsilon}$ and therefore
\begin{displaymath}
 \lim_{\epsilon \rightarrow 0} A(n) = n - 1 + \lim_{\epsilon \rightarrow 0} \frac{1 - (1-\epsilon)^n}{n \epsilon} =  n - 1 + \lim_{\epsilon \rightarrow 0} \frac{n(1-\epsilon)^{n-1}}{n} = n - 1 + 1  = n,
\end{displaymath}
where the second equality follows by L'Hôpital's rule.
For either decision tree, the competitive ratio tends to ${n^2}/({n^2-n+1})$ as $\epsilon$ approaches $0$.
\end{proof}

\end{document}